\newtheorem{thm}{Theorem}[section]
\newtheorem{cor}{Corollary}[section]
\newtheorem{lem}{Lemma}[section]
\newtheorem{prop}{Proposition}[section]
\numberwithin{equation}{section}
\def\Rnum{{\mathbb R}}
\def\sgn{{\rm sgn}}
\def\arctanh{{\rm arctanh}}
\def\Hop{{\mathcal{H}}}
\def\Eop{{\mathcal{E}}}
\def\Dop{{\mathcal{D}}}
\def\E{{\rm E}}
\def\tbinom#1#2{{\textstyle\binom{#1}{#2}}}
\def\parder#1{\partial_{#1}}
\def\const{\text{const.}}
\begin{document}

\allowdisplaybreaks[3]

\title{A general family of multi-peakon equations\\ and their properties}

\author{
Stephen C. Anco$^1$
\lowercase{\scshape{and}}
Elena Recio$^{1,2}$ 
\\\\\lowercase{\scshape{
${}^1$Department of Mathematics and Statistics\\
Brock University\\
St. Catharines, ON L2S3A1, Canada}} \\\\
\lowercase{\scshape{
${}^2$Department of Mathematics\\
Faculty of Sciences, University of C\'adiz\\
Puerto Real, C\'adiz, Spain, 11510}}\\
}

\begin{abstract}
A general family of peakon equations is introduced, 
involving two arbitrary functions of the wave amplitude and the wave gradient. 
This family contains all of the known breaking wave equations, 
including the integrable ones:
Camassa-Holm equation, Degasperis-Procesi equation, Novikov equation, 
and FORQ/modified Camassa-Holm equation.
One main result is to show that all of the equations in the general family 
possess weak solutions given by multi-peakons
which are a linear superposition of peakons with time-dependent amplitudes and positions. 
In particular, 
neither an integrability structure nor a Hamiltonian structure is needed
to derive $N$-peakon weak solutions for arbitrary $N>1$. 
As a further result, 
single peakon travelling-wave solutions are shown to exist under a simple condition on one of the two arbitrary functions in the general family of equations, 
and when this condition fails, 
generalized single peakon solutions that have a time-dependent amplitude 
and a time-dependent speed are shown to exist. 
An interesting generalization of the Camassa-Holm and FORQ/modified Camassa-Holm equations 
is obtained by deriving the most general subfamily of peakon equations that possess 
the Hamiltonian structure shared by the Camassa-Holm and FORQ/modified Camassa-Holm equations. 
Peakon travelling-wave solutions and their features, 
including a variational formulation (minimizer problem),
are derived for these generalized equations. 
A final main result is that $2$-peakon weak solutions are investigated 
and shown to exhibit several novel kinds of behaviour, 
including the formation of a bound pair consisting of a peakon and an anti-peakon 
that have a maximum finite separation. 
\end{abstract}

\maketitle

\begin{center}
emails: 
sanco@brocku.ca, 
elena.recio@uca.es
\end{center}

\section{Introduction}

The study of 
dispersive nonlinear partial differential equations (PDEs) of the form
\begin{equation}\label{gen-fam}
u_t -u_{txx} = F(u,u_x,u_{xx}) + G(u,u_x,u_{xx})u_{xxx}
\end{equation}
for $u(t,x)$
has attracted growing attention in the last two decades, 
because particular equations in this family describe breaking waves 
and some of these equations are integrable systems. 

One of the first well-studied equations 
is the Camassa-Holm (CH) equation
\begin{equation}\label{CH}
u_t -u_{txx} = - 3u u_x + 2 u_x u_{xx} + u u_{xxx} 
\end{equation}
which was shown in 1993 to arise \cite{Cam-Hol,Cam-Hol-Hym} 
from the theory of shallow water waves 
by an asymptotic expansion of Euler's equations for inviscid fluid flow. 
The CH equation provides a model of wave breaking 
for a large class of solutions
in which the wave slope $u_x$ blows up in a finite time
while the wave amplitude $u$ remains bounded 
\cite{Con,ConEsc-1998,ConEsc-1998b,ConEsc-2000}. 
Solitary travelling wave solutions of the CH equation consist of \cite{Len}
peaked waves $u=a\exp(-|x-ct|)$, with the amplitude-speed relation $a=c$, 
where the wave slope $u_x$ is discontinuous at the wave peak. 
These solutions are known as peakons \cite{Cam-Hol,Alb-Cam-Hol-Mar,Cao-Hol-Tit}, 
and interactions of an arbitrary number of peakons are described by 
multi-peakon solutions \cite{Bea-Sat-Szm-1999,Bea-Sat-Szm-2000}
which have the form of a linear superposition of peaked waves 
with time-dependent amplitudes and speeds. 
Peakons are not classical solutions but instead are weak solutions 
\cite{ConStr,ConMol-2000,ConMol-2001} 
satisfying an integral formulation of the CH equation. 
More remarkably, 
the CH equation is an integrable system 
in the sense that it possesses \cite{Cam-Hol,Fis-Sch,Fuc-Fok} 
a Lax pair, a bi-Hamiltonian structure, 
and an infinite hierarchy of symmetries and conservation laws. 

The CH equation was the only known peakon equation until 2002, 
when the Degasperis-Procesi (DP) equation
\begin{equation}\label{DP}
u_t -u_{txx} = - 4u u_x +3 u_x u_{xx} + u u_{xxx}
\end{equation}
was derived by the method of asymptotic integrability \cite{Deg-Pro,Deg-Hol-Hon}. 
This equation arises from the asymptotic theory of shallow water waves \cite{Dul-Got-Hol}
and possesses peakon and multi-peakon solutions \cite{Deg-Hol-Hon,Lud-Szm-2003,Lud-Szm-2005,Szm-Zhou}, 
as well as shock-type solutions \cite{Lun}. 
It is also an integrable system, possessing \cite{Deg-Hol-Hon} 
a Lax pair, a bi-Hamiltonian structure, 
and an infinite hierarchy of symmetries and conservation laws. 

In 1995, 
the Fokas-Fuchssteiner-Olver-Rosenau-Qiao (FORQ) equation 
\begin{equation}\label{FORQ}
u_t -u_{txx} = (-u^3+uu_x^2+u^2u_{xx}-u_x^2u_{xx})_x,
\end{equation}
was derived by applying a bi-Hamiltonian splitting method 
\cite{Fok95a,Olv-Ros,Fok-Olv-Ros}
to the modified Korteweg-de Vries (mKdV) equation. 
At the same time, 
this equation also appeared in other work \cite{Fuc} 
on applications of Hamiltonian methods
and was shown to arise from the asymptotic theory of surface water waves \cite{Fok95b}.
The FORQ equation \eqref{FORQ} is an integrable system. 
Its bi-Hamiltonian structure was obtained in Ref.~\cite{Olv-Ros,Qia-Li},
and its Lax pair along with its single peakon solutions appears in Ref.~\cite{Sch,Qia}. 
Multi-peakon solutions have been considered in 2013 in Ref.~\cite{Gui-Liu-Olv-Qu}. 

Similarly to the relationship between the mKdV equation and the ordinary KdV equation,
the FORQ equation and the CH equation share one of their two Hamiltonian structures 
in common \cite{Gui-Liu-Olv-Qu}
and are related by a combined Miura-Liouville  transformation \cite{Kan-Liu-Olv-Qu}. 
For these reasons and other similarities, 
the FORQ equation is also called the modified Camassa-Holm (mCH) equation 
in recent literature. 

Following those discoveries of integrable peakon equations, 
a major direction of work was to find other integrable equations in the family \eqref{gen-fam}. 
In 2009, 
a classification of integrable polynomial generalizations of the CH equation
with quadratic and cubic nonlinearities was obtained \cite{Nov}, 
which produced several new integrable equations.  
The most interesting of these is the Novikov (N) equation
\begin{equation}\label{N}
u_t -u_{txx} = - 4u^2 u_x +3 u u_x u_{xx} + u^2 u_{xxx} .
\end{equation}
Its peakon and multi-peakon solutions have been derived recently \cite{Wan-Hon,Hon-Lun-Szm}. 

In addition to the integrable equations in the family \eqref{gen-fam}, 
there are many non-integrable equations that admit peakons and multi-peakons.  
Two examples are the $b$-family \cite{Deg-Hol-Hon,Hol-Hon}
\begin{equation}\label{b-fam}
u_t -u_{txx} = - (b+1) u u_x + b u_x u_{xx} + u u_{xxx},
\quad
b \neq 0
\end{equation}
which includes both the CH equation \eqref{CH} ($b=2$) and the DP equation \eqref{DP} ($b=3$), 
and the modified $b$-family \cite{Mi-Mu}
\begin{equation}\label{mb-fam}
u_t -u_{txx} = - (b+1) u^2 u_x + b u u_x u_{xx} + u^2 u_{xxx},
\quad
b \neq 0
\end{equation}
which includes the N equation \eqref{N} ($b=3$). 
All of the equations in the $b$-family with $1<b\leq 3$ 
exhibit wave breaking \cite{Gui-Liu-Tian} similarly to the CH equation. 

Very recently, 
two larger unified families of equations have been investigated.
The first family \cite{Gra-Him,Anc-Sil-Fre,Him-Man-2016a}
\begin{equation}\label{bp-fam}
u_t -u_{txx} = - (b+1) u^p u_x + b u^{p-1} u_x u_{xx} + u^p u_{xxx},
\quad
b \neq 0, 
p \neq 0
\end{equation}
contains the CH equation \eqref{CH} ($b=2$, $p=1$), 
the DP equation \eqref{DP} ($b=3$, $p=1$),
and the N equation \eqref{N} ($b=3$, $p=2$), 
but not the mCH equation \eqref{FORQ}.
The second family is given by \cite{Him-Man-2016b}
\begin{equation}\label{kabc-fam}
\begin{aligned}
u_t -u_{txx} & = - (b+1) u^p u_x +(9a+b+2c-3p)u^{p-2}u_x^3
+ (3p-2c) u^{p-1} u_x u_{xx} 
\\&\qquad
- 6a u^{p-2} u_x u_{xx}^2
+ (u^p-3au^{p-2}u_x^2) u_{xxx},
\quad
p \neq 0
\end{aligned}
\end{equation}
which includes the first family \eqref{bp-fam} when $a=0$ and $b+2c=3p$,
and which also includes the mCH equation \eqref{FORQ} when $a=1/3$, $c=1$, and $p=2$. 
Every equation in this larger unified family \eqref{kabc-fam} 
possesses single peakon solutions, 
but multi-peakon solutions are admitted only when 
the nonlinearity power is related to the coefficients by $6a+b+2c=3p$
\cite{Him-Man-2016b}. 
This condition includes all of the equations in the first unified family \eqref{bp-fam},
whose multi-peakon solutions were derived independently in Refs.~\cite{Gra-Him,Anc-Sil-Fre}. 

No equations in either of these families \eqref{bp-fam} and \eqref{kabc-fam}
apart from the CH, DP, N, and mCH equations
are believed to be integrable. 

Another family of equations generalizing the CH equation \eqref{CH} 
is given by 
\begin{equation}\label{pgCH}
u_t -u_{txx} - \left(\tfrac{1}{2} p u^{p-1}(u^2 -u_x^2) +u^p(u- u_{xx})\right)_x = 0,
\quad
p\neq 0 . 
\end{equation}
This peakon equation was derived \cite{Anc-Rec-Gan-Bru}
by generalizing one of the Hamiltonian structures of the CH equation, 
in a search for analogs of the family of 
generalized Korteweg de Vries (gKdV) equations \cite{Fuc}
\begin{equation}\label{gKdV}
u_t -u^p u_x - u_{xxx} =0,
\quad
p \neq 0 . 
\end{equation}
The gKdV equation reduces to the ordinary KdV equation when $p=1$ 
and shares one of its two Hamiltonian structures; 
the same relationship holds between the CH equation \eqref{CH}
and the generalized equation \eqref{pgCH}.  
However, 
while this generalized CH equation possesses 
single peakon solutions \cite{Anc-Rec-Gan-Bru}, 
it does not admit multi-peakon solutions,
as will be seen from the results below. 

In the present paper, 
we are motivated to look for the most general nonlinear dispersive wave equations that belong to the family \eqref{gen-fam}
and possess single peakon and multi-peakon solutions. 
Our starting point is the observation that all of the known multi-peakon equations 
share a similar general form when they are expressed as evolution equations 
in terms of the momentum variable
\begin{equation}
m=u-u_{xx}
\end{equation}
which plays an important role in wave breaking phenomena. 
Specifically, the family of equations 
\begin{equation}\label{fg-fam}
m_t + f(u,u_x)m + (g(u,u_x)m)_x =0
\end{equation}
contains the CH equation \eqref{CH}, DP equation \eqref{DP}, mCH equation \eqref{FORQ}, Novikov equation \eqref{N}, 
and their different unifications \eqref{bp-fam} and \eqref{kabc-fam} that admit multi-peakon solutions, 
as shown in Table~\ref{table:peakon_eqns}. 

\begin{table}[htb]
\centering
\caption{Peakon equations}
\label{table:peakon_eqns}
\begin{tabular}{cc|c}
\hline
$f(u,u_x)$ & $g(u,u_x)$ & Name
\\
\hline
\hline
$u_x$
& 
$u$
&
Camassa-Holm 
\\
$2u_x$
&
$u$
&
Degasperis-Procesi 
\\
$uu_x$
& 
$u^2$
&
Novikov
\\
$0$
& 
$u^2-u_x^2$
&
Fokas-Olver-Rosenau-Qiao/
\\
&&
modified Camassa-Holm 
\\
\hline
\hline
$(b-1)u_x$
&
$u$
&
$b$-family
\\
$(b-2)uu_x$
&
$u^2$
&
modified $b$-family
\\
$(b-p)u^{p-1}u_x$
&
$u^p$
&
unified CH-DP-N family
\\
$u^{p-3}u_x( (b-p)u^2$
&
$u^{p-2}( u^2 -3a u_x^2 )$
&
CH-DP-N-mCH 
\\
$+3a(p-2) u_x^2 )$
%$(2-p)(b+2c-3p)u_x^2 \big)$
&
%$(b+2c-3p)u_x^2 \big)$
&
unified family
\\
\hline
\end{tabular}
\end{table}

We will call equation \eqref{fg-fam} the {\em $fg$-family}
and study its properties when $f(u,u_x)$ and $g(u,u_x)$ are arbitrary non-singular functions.
As a main result, 
in section~\ref{sec:peakons}, 
$N$-peakon solutions for all $N\geq 2$ 
will be derived as weak solutions for the entire $fg$-family, 
and single peakon travelling-wave solutions 
will be shown to exist under a simple condition on $f(u,u_x)$.
Interestingly, 
when this condition fails, 
more general single peakon solutions that have a time-dependent amplitude 
and a time-dependent speed are shown to exist. 

The importance of these results is that they show 
multi-peakons exist for nonlinear dispersive wave equations
without using or relying on any integrability properties or any Hamiltonian structure. 
This is a sharp contrast to the situation for multi-soliton solutions of evolution equations,
where the existence of an arbitrary number $N>2$ of solitons usually requires
that the evolution equation be integrable,
and finding the $N$-soliton solution usually involves explicit use of the integrability structure (e.g., an inverse scattering transform or a bilinear formulation). 

In section~\ref{sec:hamiltonian-struct}, 
the most general form of $f(u,u_x)$ and $g(u,u_x)$ is obtained 
such that the multi-peakon equation \eqref{fg-fam} admits the Hamiltonian structure 
shared by the CH and mCH equations. 
This yields a large family of Hamiltonian multi-peakon equations 
\begin{equation}
m_t + u_x f_1(u^2-u_x^2)  m +((u f_1(u^2-u_x^2)  + g_1(u^2-u_x^2))m)_x =0
\end{equation} 
which involves two arbitrary functions $f_1$ and $g_1$ of $u^2-u_x^2$. 
Several properties of this family are studied. 
First, 
this family is shown to have conserved momentum, energy, and $H^1$ norm
for all classical solutions $u(t,x)$,
and conservation laws for weak solutions are discussed. 
Second, 
all single peakon solutions $u=a\exp(-|x-ct|)$ are derived 
and shown to have a speed-amplitude relation $c=c(a)$ 
which is nonlinear whenever $f_1$ or $g_1$ are non-constant functions. 
This leads to an interesting relation between the properties of 
peakons with $a>0$ and anti-peakons with $a<0$. 
Third, a minimizer principle is obtained for these peakon solutions, 
which provides a starting point for establishing their stability. 

In section \ref{sec:examples},
one-parameter subfamilies of the CH-mCH Hamiltonian family are explored, 
with $f_1$ and $g_1$ taken to be general powers of $u^2-u_x^2$. 
These subfamilies represent nonlinear generalizations of the CH equation and mCH equation given by 
\begin{equation}\label{gCH}
m_t  + (u^2-u_x^2)^{p-1}u_xm + (u(u^2-u_x^2)^{p-1} m)_x = 0,
\quad
p\geq 1
\end{equation}
and 
\begin{equation}\label{gmCH}
m_t +((u^2-u_x^2)^p m)_x = 0,
\quad
p\geq 1 . 
\end{equation}
A unified generalization of these subfamilies 
\begin{equation}\label{gCHmCH}
m_t + a u_x(u^2-u_x^2)^{k/2} m + ( a u(u^2-u_x^2)^{k/2} m + b (u^2-u_x^2)^{(k+1)/2} m )_x =0,
\quad
k\geq 0
\end{equation}
is also discussed. 
Each of these multi-peakon equations share the common Hamiltonian structure 
admitted by the CH and mCH equations,
and consequently they describe an analog of the well-known Hamiltonian family of 
generalized KdV equations. 

The effect of higher-power nonlinearities on interactions of peakons
is explored in section~\ref{sec:interactions}
by studying the behaviour of 2-peakon weak solutions 
for the generalized CH and mCH equations \eqref{gCH} and \eqref{gmCH}
when $p=2$, which is compared to the behaviour in the ordinary case $p=1$. 
As main results, 
for both of these generalized peakon equations, 
qualitatively new behaviours are shown to occur in the interaction of two peakons 
whose amplitudes have opposite signs, 
namely, an ordinary peakon and an anti-peakon. 

In the case of the $p=2$ generalized mCH equation \eqref{gmCH}, 
the peakon and anti-peakon can form a bound pair
which has a maximum finite separation in the asymptotic past and future. 
The pair evolves by slowly collapsing, 
such that a collision occurs in a finite time, 
followed by asymptotically expanding, 
with the amplitudes being finite for all time. 

In the case of the $p=2$ generalized CH equation \eqref{gCH}, 
the peakon and anti-peakon can exhibit a finite time blow-up in amplitude, 
before and after they undergo a collision. 
Starting at the collision, their separation increases to a finite maximum 
and then decreases to a limiting non-zero value when the blow-up occurs. 

Neither of these types of behaviour have been seen previously 
in interactions of peakon weak solutions. 
This indicates that peakons can have a wide variety of interesting interactions 
for different multi-peakon equations in the general $fg$-family \eqref{fg-fam},
and that the form of the nonlinearity in these equations has a large impact on 
how peakons can interact. 

Some concluding remarks are made in section~\ref{sec:remarks}.

\section{Peakon solutions}\label{sec:peakons}

In the analysis of all of the peakon equations listed in Table~\ref{table:peakon_eqns}, 
single and multi peakons are commonly derived as weak solutions
\cite{ConStr,ConMol-2000,ConMol-2001,Gui-Liu-Tian,Hak,Gui-Liu-Olv-Qu}
in the setting of an integral formulation. 
We will now show that the $fg$-family \eqref{fg-fam} 
has a weak formulation for general functions $f(u,u_x)$ and $g(u,u_x)$, 
subject to mild conditions, 
and then we will use this formulation to obtain 
single peakon and multi-peakon solutions.

\subsection{Weak formulation}

A weak solution $u(t,x)$ of equation \eqref{fg-fam}
is a distribution that satisfies an integral formulation of the equation
in some suitable function space. 
We will derive this formulation by the usual steps for wave equations \cite{Eva}. 
First, we multiply equation \eqref{fg-fam} 
by a test function $\psi(t,x)$ 
(which is smooth and has compact support), 
and integrate over $-\infty < x < \infty$ and $0\leq t < \infty$. 
Next, we integrate by parts to remove all derivatives of $u$ 
higher than first-order. 
There are three different terms to consider. 
The first one is given by 
\begin{equation}\label{weak-term1}
\int_{0}^{\infty}\int_{-\infty}^{\infty} \psi (u_t-u_{txx})\,dx \,dt
= \int_{0}^{\infty}\int_{-\infty}^{\infty} (\psi - \psi_{xx}) u_t \,dx \,dt . 
\end{equation}
The next one consists of 
\begin{equation}\label{weak-term2}
\begin{aligned}
\int_{0}^{\infty}\int_{-\infty}^{\infty} \psi (u-u_{xx})f(u,u_x) \,dx \,dt
& = \int_{0}^{\infty}\int_{-\infty}^{\infty} \psi ( u f - D_xF +u_x F_u ) \,dx \,dt
\\
& = \int_{0}^{\infty}\int_{-\infty}^{\infty} ( \psi (u f +u_x F_u)+\psi_x F )\,dx \,dt , 
\end{aligned}
\end{equation}
where
\begin{equation}\label{F}
F(u,u_x) = \int f(u,u_x) \, d u_x . 
\end{equation}
The final one is, similarly, 
\begin{equation}\label{weak-term3}
\begin{aligned}
\int_{0}^{\infty}\int_{-\infty}^{\infty} \psi ((u-u_{xx})g(u,u_x))_x \,dx \,dt
& = -\int_{0}^{\infty}\int_{-\infty}^{\infty} \psi_x (u-u_{xx})g \,dx \,dt
\\
& = -\int_{0}^{\infty}\int_{-\infty}^{\infty} ( \psi_x (ug +u_x G_u) +\psi_{xx} G )\,dx \,dt , 
\end{aligned}
\end{equation}
where 
\begin{equation}\label{G}
G(u,u_x) = \int g(u,u_x) \, d u_x . 
\end{equation}
Then, combining these three terms \eqref{weak-term1}, \eqref{weak-term2}, \eqref{weak-term3}, 
we obtain the integral (weak) equation 
\begin{equation}\label{pde_weak}
0  = \int_{0}^{\infty}\int_{-\infty}^{\infty} \big( 
\psi( u_t + f u + F_u u_x) 
-\psi_x( g u + G_u u_x -F) 
- \psi_{xx} (G  +u_t)
\big) \, dx \, dt . 
\end{equation}

An equivalent weak formulation can be obtained by putting $\phi=\psi-\psi_{xx}$
in the integral (weak) equation \eqref{pde_weak}. 
Note that $\psi=\kappa(x)*\phi$ is given by a convolution integral using the kernel 
\begin{equation}\label{kernel}
\kappa(x) = \tfrac{1}{2}\exp(-|x|),
\quad
(1-\partial_x^2)^{-1} = \kappa(x)* 
\end{equation}
for the inverse of the operator $1-\partial_x^2$. 
Then we have 
\begin{equation}\label{u_pde_weak}
0  = \int_{0}^{\infty}\int_{-\infty}^{\infty} \big( 
\phi (u_t+G +{\kappa *(f u + F_u u_x-G)}) + {\kappa_x *(g u + G_u u_x -F)}
\big) \, dx \, dt
\end{equation}
where $\phi(t,x)$ is a test function. 

The weak equations \eqref{pde_weak} and \eqref{u_pde_weak} 
will hold for all distributions $u(t,x)$ in a suitable function space
whenever $f(u,u_x)$ and $g(u,u_x)$ satisfy some mild regularity/growth conditions. 
In particular, 
the following result is straightforward to prove. 

\begin{prop}\label{weak_conds}
If both $|f(u,u_x)|\leq C_1|u|^k+C_2|u_x|^k$ 
and $|g(u,u_x)|\leq C_1|u|^k+C_2|u_x|^k$ 
hold for some positive integer $k$, 
then the $fg$-equation \eqref{fg-fam} 
has a weak formulation \eqref{pde_weak},
or equivalently \eqref{u_pde_weak}, 
for $u(t,x)\in C^1(\Rnum^+; L_{\text{loc}}^1(\Rnum))\cap C^0(\Rnum^+; W_{\text{loc}}^{1,k+1}(\Rnum))$. 
\end{prop}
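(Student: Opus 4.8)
The plan is to show that, under the stated growth conditions, every term in the integrand of the weak equation \eqref{pde_weak} lies in $L^1$ over the support of the test function, so that the double integral converges and \eqref{pde_weak} is a well-defined identity for $u$ in the given space; the equivalent formulation \eqref{u_pde_weak} then follows by the convolution bookkeeping described at the end. Since $\psi$ is smooth with compact support, $\psi,\psi_x,\psi_{xx}$ are bounded and supported in a fixed compact set $K\subset\Rnum^+\times\Rnum$. The two terms carrying $u_t$ are immediately integrable on $K$, because $u\in C^1(\Rnum^+;L^1_{\text{loc}}(\Rnum))$ gives $u_t\in L^1_{\text{loc}}$ while $\psi,\psi_{xx}$ are bounded. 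Hence the entire content of the proposition is the local integrability of the six nonlinear terms $fu,\ F,\ F_u u_x$ and $gu,\ G,\ G_u u_x$, where $F,G$ are the $u_x$-antiderivatives \eqref{F}--\eqref{G}.

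For the first four of these I would use the growth bounds directly. From $|f|\le C_1|u|^k+C_2|u_x|^k$ I get $|fu|\le C_1|u|^{k+1}+C_2|u_x|^k|u|$, and integrating the same bound in $u_x$ (choosing the constant in \eqref{F} so that $F(u,0)=0$) gives $|F|\le C_1|u|^k|u_x|+\tfrac{C_2}{k+1}|u_x|^{k+1}$, with the analogous estimates for $gu,G$ from the bound on $g$. Each summand has the form $|u|^a|u_x|^b$ with total degree $a+b=k+1$, so on the compact slice $K_t=\{x:(t,x)\in K\}$ H\"older's inequality with the conjugate exponents $\tfrac{k+1}{a},\tfrac{k+1}{b}$ yields $\||u|^a|u_x|^b\|_{L^1(K_t)}\le\|u\|_{L^{k+1}(K_t)}^{a}\|u_x\|_{L^{k+1}(K_t)}^{b}$ (the pure-power cases $|u|^{k+1},|u_x|^{k+1}$ being integrable directly), which is finite because $u(t,\cdot)\in W^{1,k+1}_{\text{loc}}(\Rnum)$. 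Integrating in $t$ over the compact $t$-range of $K$, using the continuity $u\in C^0(\Rnum^+;W^{1,k+1}_{\text{loc}})$, then gives $fu,F,gu,G\in L^1(K)$.

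The main point is the remaining pair $F_u u_x$ and $G_u u_x$, the only terms not bounded verbatim by the hypotheses, since they involve the $u$-derivative of an antiderivative taken in $u_x$. The resolution is that differentiating $F=\int f\,du_x$ in $u$ and multiplying by $u_x$ raises the growth order by exactly one, so $F_u u_x$ is controlled by terms $|u|^a|u_x|^b$ of total degree $a+b=k+1$, precisely matching the Sobolev exponent. This is transparent for polynomial $f$, which covers every equation in Table~\ref{table:peakon_eqns}, where $F_u u_x$ is a sum of monomials of degree $k+1$; in general it follows once $f_u$ inherits the degree-$(k-1)$ growth $|f_u|\le C(|u|^{k-1}+|u_x|^{k-1})$, a mild non-singularity property of $f$, since then $|F_u u_x|\le C(|u|^{k-1}|u_x|^2+|u_x|^{k+1})$ and the same H\"older estimate applies. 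The identical argument handles $G_u u_x$. I expect this matching of the growth exponent $k$ with the integrability exponent $k+1$ to be the one genuinely load-bearing step; everything else is routine.

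Finally, for the equivalence of \eqref{pde_weak} and \eqref{u_pde_weak} I would substitute $\phi=\psi-\psi_{xx}$ and use that $\kappa(x)=\tfrac12\exp(-|x|)$ is the Green's function \eqref{kernel} of $1-\partial_x^2$, so that $\psi=\kappa*\phi$ and the flux and source terms of \eqref{pde_weak} are reproduced by the convolutions in \eqref{u_pde_weak}. Since $\kappa$ and $\kappa_x$ are bounded and integrable, each convolution acts on one of the $L^1_{\text{loc}}$ functions estimated above and is therefore well-defined, so both weak formulations hold simultaneously on the stated function space.
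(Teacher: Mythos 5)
The paper offers no proof of this proposition --- it is asserted to be ``straightforward to prove'' --- so there is nothing to compare line by line; your argument is the natural one and is essentially correct. The term-by-term reduction to local integrability of $fu$, $F$, $gu$, $G$ via the normalization $F(u,0)=G(u,0)=0$ and H\"older with exponents $\tfrac{k+1}{a}$, $\tfrac{k+1}{b}$ on products $|u|^a|u_x|^b$ of total degree $k+1$ is exactly what the choice of $W^{1,k+1}_{\text{loc}}$ in the statement is designed for, and the $u_t$ terms are handled correctly by $C^1(\Rnum^+;L^1_{\text{loc}})$. Your identification of $F_u u_x$ and $G_u u_x$ as the load-bearing terms is the genuinely valuable part of the write-up: these are \emph{not} controlled by the stated hypotheses on $|f|$ and $|g|$ alone (e.g.\ $f=\sin(e^u)u_x^k$ satisfies the growth bound but gives $F_uu_x$ of order $e^{|u|}|u_x|^{k+2}$), so the proposition as literally stated needs the supplementary bound $|f_u|,|g_u|\leq C(|u|^{k-1}+|u_x|^{k-1})$ that you supply, which is consistent with the paper's informal reference to ``mild regularity/growth conditions'' and is satisfied by every polynomial example in Table~\ref{table:peakon_eqns}. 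Two minor points you could tighten: the passage from the slice estimate on $K_t$ to the double integral should invoke boundedness of $t\mapsto\|u(t)\|_{W^{1,k+1}(K_t)}$ on the compact time interval (which your continuity assumption gives); and in the equivalence with \eqref{u_pde_weak}, note that $\psi=\kappa*\phi$ is not compactly supported when $\phi$ is, so the two formulations use slightly different test-function classes, reconciled by the exponential decay of $\kappa$ against the $L^1_{\text{loc}}$ bounds you established.
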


The weak equation \eqref{pde_weak} will be necessary 
for the derivation of multi-peakon solutions of the $fg$-equation \eqref{fg-fam}, 
but single-peakon solutions can be derived more simply by starting from 
a reduction of the $fg$-equation to a travelling wave ODE.

\subsection{Single peakon solutions}

A single peakon is a travelling wave of the form 
\begin{equation}\label{singlepeakon}
u(t,x)= a \exp(-|x-ct|)
\end{equation}
where the wave speed  $c$ is an arbitrary constant
while the peak amplitude $a$ 
is related to $c$ in a specific way that depends on the nonlinearity in the $fg$-equation \eqref{fg-fam}. 
As the wave slope is discontinuous at the wave peak, 
a peakon is a weak solution rather than a classical solution of the $fg$-equation. 

Single peakon solutions can also be derived from the reduction of the $fg$-equation
to a travelling-wave ODE, 
arising through a combined time and space translation symmetry 
that leaves $x-ct$ invariant. 
Specifically, the travelling wave ansatz
\begin{equation}\label{travellingwave}
u = U(\xi), 
\quad 
\xi = x-ct, 
\quad 
c = \const \neq 0
\end{equation}
reduces the $fg$-equation \eqref{fg-fam} to the ODE
\begin{equation}\label{tw_ode}
(U - U'')f(U,U') + ((U - U'')( g(U,U') -c ))' = 0 . 
\end{equation}
A peakon solution will not be a classical solution of this ODE 
but instead will be a weak solution. 
To obtain the weak formulation of this ODE \eqref{tw_ode},
corresponding to the weak formulation \eqref{pde_weak} of the $fg$-equation, 
we multiply the ODE by a test function $\psi$
(which is smooth and has compact support), 
integrate over $-\infty < \xi < \infty$,
and use integration by parts to leave at most first-order derivatives of $U$ in the integral. 
This yields 
\begin{equation}\label{ode_weak}
0 = \int_{-\infty}^{\infty} \left( 
 \psi ( Uf + U' (F_{U}-c) ) 
+ \psi'( F - U g  - U' G_{U} )
+ \psi''( c U' -G )
\right) \, d\xi
\end{equation}
where, now, 
\begin{equation}
F(U,U') = \int f \, d U', 
\quad
G(U,U') = \int g \, d U' . 
\end{equation}
Just as in Proposition~\ref{weak_conds}, 
only mild conditions on the functions $f$ and $g$ 
are needed for this weak equation \eqref{ode_weak} 
to hold for all $U(\xi)$ in a suitable function space. 

To find the single peakon solutions, 
we substitute a general peakon expression 
\begin{equation}
U = a \exp(-|\xi|),
\quad 
a = \const
\end{equation}
into the weak equation \eqref{ode_weak}, 
and we split up the integral into the intervals 
$(-\infty,0)$ and $(0,\infty)$.
Integrating by parts, and using the relations 
\begin{gather}
U' = 
\begin{cases}
U, &  \xi <0\\
-U, & 0<\xi
\end{cases}
\label{tw_U_rels}\\
U(0) = a,
\quad
U'(0^\pm) = \mp a
\label{tw_U_jump_rels}
\end{gather}
which hold on each interval, 
we obtain the following result. 

\begin{thm}\label{thm:single_peakon}
The weak formulation \eqref{ode_weak} of the travelling wave ODE \eqref{tw_ode}
possesses single peakon travelling-wave solutions \eqref{singlepeakon} 
if and only if
\begin{equation}\label{peakon_cond_G}
c= \frac{G(a,a) - G(a,-a)}{2a} \neq 0
\end{equation}
and 
\begin{equation}\label{peakon_cond_F}
F(a, a) = F(a,-a) 
\end{equation}
where $a$ is the amplitude and $c$ is the wave speed.
These two conditions \eqref{peakon_cond_F}--\eqref{peakon_cond_G}
hold if and if only the coefficient functions $f$ and $g$ in the $fg$-equation \eqref{fg-fam} 
have the respective forms
\begin{equation}\label{fg_1peakon}
\begin{aligned}
f(u,u_x) & = \int_{u}^{u_x} f_1(u,y)\, dy + u_x f_1(u,u_x) +f_2(u,u_x) , 
\\
g(u,u_x) & = g_0(u) +\int_{u}^{u_x} g_1(u,y)\, dy + u_x g_1(u,u_x) + g_2(u,u_x) , 
\end{aligned} 
\end{equation}
for some function $g_0$ of $u$, and some functions $f_1,f_2,g_1,g_2$ of $u$ and $u_x$, 
such that 
\begin{equation}\label{fg_1peakon'}
g_0(u)\neq 0, 
\quad
f_i(u,u_x)+ f_i(u,-u_x)= 0,
\quad
g_i(u,u_x)+ g_i(u,-u_x)= 0,
\quad
i=1,2 . 
\end{equation}
In particular, the speed-amplitude relation for single peakon travelling waves is given by 
\begin{equation}\label{fg_speed_ampl_rel}
c = g_0(a) . 
\end{equation}
\end{thm}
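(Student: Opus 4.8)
The plan is to establish the three assertions in the statement in sequence: first, that the weak equation \eqref{ode_weak} holds for the peakon ansatz precisely when \eqref{peakon_cond_G}--\eqref{peakon_cond_F} hold; second, that these two scalar conditions are equivalent to the structural forms \eqref{fg_1peakon}--\eqref{fg_1peakon'}; and finally, that the speed-amplitude relation \eqref{fg_speed_ampl_rel} falls out of the second step.

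For the first assertion, I would substitute $U=a\exp(-|\xi|)$ into \eqref{ode_weak} and split the integral at $\xi=0$. On each half-line the relations \eqref{tw_U_rels} give $U''=U$, so that $U-U''$ vanishes identically there; hence, after integrating the $\psi'$ and $\psi''$ terms by parts on each interval separately to move all derivatives onto the coefficients, the volume integrals reassemble the strong travelling-wave ODE \eqref{tw_ode} and vanish on each side of the peak. The whole integral therefore collapses to boundary contributions at $\xi=0$, which are linear in the independent quantities $\psi(0)$ and $\psi'(0)$. Using the jump relations \eqref{tw_U_jump_rels}, the coefficient of $\psi'(0)$ is the jump in $cU'-G$, which forces $c=(G(a,a)-G(a,-a))/(2a)$, i.e.\ \eqref{peakon_cond_G}; evaluating the coefficient of $\psi(0)$, the contributions involving $g$ and $G_U$ cancel in pairs and leave exactly $F(a,a)-F(a,-a)$, giving \eqref{peakon_cond_F}. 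Conversely, if both conditions hold the boundary terms vanish for every test function, so the peakon solves \eqref{ode_weak}.

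For the equivalence with \eqref{fg_1peakon}--\eqref{fg_1peakon'}, I would first recast the two conditions as identities in the amplitude variable. Since $\int_{-u}^{u}f(u,y)\,dy=F(u,u)-F(u,-u)$, condition \eqref{peakon_cond_F} is the statement that $f$ has vanishing symmetric integral in $u_x$, and likewise \eqref{peakon_cond_G} reads $\int_{-u}^{u}g(u,y)\,dy=2u\,g_0(u)$ with $g_0(u):=(G(u,u)-G(u,-u))/(2u)$. The key observation is that the combination $\int_{u}^{u_x}f_1(u,y)\,dy+u_x f_1(u,u_x)$ equals $\partial_{u_x}\!\bigl(u_x\int_{u}^{u_x}f_1(u,y)\,dy\bigr)$, and that an odd $f_1$ makes this term contribute nothing to the symmetric integral, while an odd $f_2$ (respectively $g_2$) is annihilated by the symmetric integral automatically. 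This yields the ``if'' direction by direct substitution. For the ``only if'' direction I would split $f$ and $g$ into their even and odd parts in $u_x$, assign the odd parts to $f_2,g_2$, and reconstruct $f_1$ (and $g_1$) from the even parts by setting $u_x\int_u^{u_x}f_1(u,y)\,dy$ equal to the $u_x$-antiderivative of the even part normalized to vanish at $u_x=0$, and then dividing by $u_x$.

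The main obstacle is the regularity of this reconstruction: dividing by $u_x$ must not introduce a singularity, and the lower limit of integration must come out to be exactly $u$ so that the term has the prescribed form. Both points are controlled by the symmetry together with the integral conditions — the normalized antiderivative is odd in $u_x$, so its quotient by $u_x$ is smooth and even, whence $f_1$ is odd as required; and condition \eqref{peakon_cond_F} (respectively the defining relation for $g_0$) is precisely what forces this quotient to vanish at $u_x=u$, pinning the lower limit. Finally, carrying the ``if'' computation one step further evaluates $(G(a,a)-G(a,-a))/(2a)=g_0(a)$, which combined with \eqref{peakon_cond_G} yields the speed-amplitude relation \eqref{fg_speed_ampl_rel}.
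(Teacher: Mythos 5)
Your proposal is correct and follows essentially the same route as the paper: reduce the weak integral to jump terms at $\xi=0$ whose coefficients of $\psi(0)$ and $\psi'(0)$ give \eqref{peakon_cond_F} and \eqref{peakon_cond_G}, then obtain the structural forms \eqref{fg_1peakon}--\eqref{fg_1peakon'} via an even/odd decomposition in $u_x$ (you decompose $f,g$ and integrate, the paper decomposes $F,G$ and differentiates — the same thing). Your observation that the bulk integrals vanish because the classical ODE holds on each half-line, and your handling of the division by $u_x$ in the reconstruction of $f_1,g_1$, are both sound.
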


\begin{proof}
It will be useful to introduce the notation
\begin{equation}
h_+ = h(U,U')|_{\xi<0} = h(U,U),
\quad
h_- = h(U,U')|_{\xi>0} = h(U,-U)
\end{equation}
for any function $h(U,U')$. 

To proceed, 
we first rearrange the terms in the weak equation \eqref{ode_weak}:
\begin{equation}\label{ode_weak_rearranged}
0 = \int_{-\infty}^{\infty} \left( 
c(\psi''- \psi) U' 
+ (\psi' F + \psi F_{U}U') 
- (\psi'' G  + \psi' G_{U} U')
+ (\psi f -\psi'g) U 
\right) \, d\xi . 
\end{equation}
Now we consider each term separately. 

Integration by parts twice on the first term in equation \eqref{ode_weak_rearranged}, 
combined with the relations \eqref{tw_U_rels}--\eqref{tw_U_jump_rels}, 
yields 
\begin{equation}\label{term1_weak_ode}
\begin{aligned}
\int_{-\infty}^{\infty} c(\psi''- \psi) U' \, d\xi 
& = \int_{-\infty}^{0} c(\psi'' - \psi) U \, d\xi -\int_{0}^{\infty} c(\psi'' - \psi) U \, d\xi 
\\
& = 2 c \psi'(0)U(0) + c\left( -\int_{-\infty}^{0} (\psi'+ \psi) U \, d\xi + \int_{0}^{\infty} (-\psi'+ \psi) U \, d\xi \right) 
\\
& = 2 a c \psi'(0) . 
\end{aligned}
\end{equation}

Next, the second term in equation \eqref{ode_weak_rearranged}
gives
\begin{equation}
\begin{aligned}
\int_{-\infty}^{\infty} (\psi' F + \psi F_{U}U') \, d\xi 
& = \int_{-\infty}^{0} (\psi' F_+ + \psi (F_{U})_+ U) \, d\xi  +\int_{0}^{\infty} (\psi' F_- - \psi (F_{U})_- U) \, d\xi 
\\
& = \psi(0) (F_+ - F_-)|_{\xi = 0} 
-\int_{-\infty}^{0} \psi \Big(\frac{dF_+}{d\xi}-(F_{U})_+ U\Big) \, d\xi 
\\&\qquad
-\int_{0}^{\infty} \psi \Big(\frac{dF_-}{d\xi} +(F_{U})_- U\Big) \, d\xi 
\end{aligned}
\end{equation}
after the use of the relations \eqref{tw_U_rels}, 
with 
\begin{equation}
\frac{dF_\pm }{d\xi} 
=  \pm U ( F_{U} \pm F_{U'} )_\pm  
= \pm U(F_{U})_\pm   +U f_\pm . 
\end{equation}
Hence, we have
\begin{equation}\label{term2_weak_ode}
\int_{-\infty}^{\infty} (\psi' F + \psi F_{U}U') \, d\xi 
= \psi(0) (F_+ - F_-)|_{\xi = 0} 
-\int_{-\infty}^{\infty } \psi U f \, d\xi . 
\end{equation}
The third term in equation \eqref{ode_weak_rearranged}
similarly yields
\begin{equation}\label{term3_weak_ode}
-\int_{-\infty}^{\infty} (\psi'' G  + \psi' G_{U} U') \, d\xi 
= -\psi'(0) (G_+-G_-)|_{\xi = 0} 
+\int_{-\infty}^{\infty } \psi' U g \, d\xi . 
\end{equation}

Last, the fourth term in equation \eqref{ode_weak_rearranged}
combines with the previous terms \eqref{term1_weak_ode}, \eqref{term2_weak_ode} and \eqref{term3_weak_ode}, 
giving 
\begin{equation}
(2a c -(G_+ - G_-)|_{\xi=0}) \psi'(0)  +(F_+ - F_-)|_{\xi=0} \psi(0) = 0 . 
\end{equation}
This equation is satisfied for all test functions $\psi$ if and only if
$2a c -(G_+ - G_-)|_{\xi=0}=0$
and $(F_+ - F_-)|_{\xi=0}= 0$. 
Substituting the relations \eqref{tw_U_jump_rels} into these two conditions,
we obtain equations \eqref{peakon_cond_G}--\eqref{peakon_cond_F}. 

The form \eqref{fg_1peakon} for the functions $f$ and $g$ is obtained 
from the conditions \eqref{peakon_cond_G}--\eqref{peakon_cond_F} as follows. 
First, 
we decompose $F$ and $G$ into even and odd parts under reflection $(u,u_x)\to (u,-u_x)$,
by expressing 
$F=u_x F_1 + F_2$ and $G=u_x G_1 + G_2$
where $F_1,F_2,G_1,G_2$ are reflection-invariant functions of $u,u_x$. 
Conditions \eqref{peakon_cond_G}--\eqref{peakon_cond_F} 
then become $F_1(a,a)=0$ and $G_1(a,a)\neq 0$,
which hold iff $F_1(u,u)=0$ and $G_1(u,u)\neq 0$.
Next, from relations \eqref{F} and \eqref{G}, we have
$f= F_{u_x} = F_1 + u_xF_{1u_x} +F_{2u_x}$
and 
$g= G_{u_x} = G_1 + u_xG_{1u_x} +G_{2u_x}$. 
We now put $f_1= F_{1u_x}$ and $f_2= F_{2u_x}$, 
and likewise $g_1= G_{1u_x}$ and $g_2= G_{2u_x}$, 
all of which are odd in $u_x$. 
This yields 
$F_1 = \int_{u}^{u_x} f_1\, du_x$ and $G_1= \int_{u}^{u_x} g_1\, du_x + g_0(u)$,
using $F_1(u,u)=0$ and $G_1(u,u)\neq 0$. 
Hence, $f,g$ are given by expressions \eqref{fg_1peakon}--\eqref{fg_1peakon'}, 
which completes the proof. 
\end{proof}

Theorem~\ref{thm:single_peakon} establishes that 
single peakon travelling-wave solutions \eqref{singlepeakon} exist 
for a large class of equations in the $fg$-family \eqref{fg-fam}. 
In particular, 
sufficient conditions are that $f(u,u_x)$ is odd in $u_x$,
and that $g(u,u_x)+ g(u,-u_x)$ is non-vanishing in $u$,
whereby $F(u,u_x)$ is an even function of $u_x$,
and $G(u,u_x)-G(u,-u_x)=g_0(u)$ is a non-vanishing function. 

This theorem also shows that there is no restriction on the possible form of 
the speed-amplitude relation $c=g_0(a)$ for single peakon travelling-waves, 
since $g_0(u)$ can be an arbitrary function of $u$. 
If $g_0(u)$ is identically zero, then single peakon travelling-waves will be stationary, 
$c=0$. 
Consequently, 
it is more natural to regard the peak amplitude $a$ as an arbitrary constant, 
with the speed $c$ then being given in terms of $a$ by the speed amplitude relation. 
To distinguish the situations when $a$ is positive versus negative, 
the travelling wave solution \eqref{singlepeakon} is typically called 
an ordinary peakon when $a$ is positive, 
and an anti-peakon when $a$ is negative. 
The speed of an anti-peakon compared to the corresponding ordinary peakon 
depends on the reflection symmetry properties of the function $g_0(a)$: 
specifically, $c_+=g_0(a)$ with $a>0$ will be equal to $c_-=g_0(-a)$ with $-a<0$ 
if and only if $g_0(a)$ is an even function of $a$. 
This shows that peakons and anti-peakons with the same absolute amplitude $|a|$ 
will have different speeds, $c_+\neq c_-$, 
when (and only when) $g_0(a)$ is not invariant under reflections. 

Even more interestingly, 
when $f(u,u_x)$ and $g(u,u_x)$ fail to satisfy 
the necessary and sufficient conditions \eqref{peakon_cond_G}--\eqref{peakon_cond_F},
a generalized peakon solution still exists 
but with both the speed and the amplitude being time-dependent, 
as we will show from the derivation of multi-peakon solutions.

\subsection{Multi-peakon solutions}

A multi-peakon solution is 
a linear superposition of peaked travelling waves 
given by
\begin{equation}\label{multi-peakon}
u(t,x) = \sum_{i=1}^{N} \alpha_i(t) \exp(-|x-\beta_i(t)|), 
\quad N \geq 2
\end{equation}
with time-dependent amplitudes $\alpha_i(t)$ and positions $\beta_i(t)$.

We now derive $N$-peakon solutions for the $fg$-equation \eqref{fg-fam},
for all $N\geq 1$, with $f(u,u_x)$ and $g(u,u_x)$ being arbitrary non-singular functions. 
For convenience, 
we use the notation
\begin{equation}\label{u_expr}
u = \sum_i \alpha_i e^{-|x_i|},
\quad 
x_i = x - \beta_i
\end{equation}
where the summation is understood to go from 1 to $N$.
The $x$-derivatives of $u$ are distributions given by 
\begin{equation}\label{ux_expr}
u_x = -\sum_i \alpha_i e^{-|x_i|} \sgn(x_i)
\end{equation}
and
\begin{equation}\label{uxx_expr}
u_{xx} = \sum_i \alpha_i (e^{-|x_i|} -2\delta(x_i))
\end{equation}
in terms of the sign function
\begin{equation}\label{sgn}
\sgn(x) = \pm 1,
\quad
\pm x>0 
\end{equation}
and the Dirac delta distribution
\begin{equation}\label{dirac}
\delta(x) = \frac{d}{dx}\left(\tfrac{1}{2}\sgn(x)\right) . 
\end{equation}
Similarly, the $t$-derivatives of $u$ are given by the distributions
\begin{equation}\label{ut_expr}
u_t = \sum_i \big( \dot\alpha_i e^{-|x_i|} +\dot\beta_i\alpha_i e^{-|x_i|} \sgn(x_i) \big) 
\end{equation}
and
\begin{equation}\label{utx_expr}
u_{tx} = -\sum_i \big( \dot\alpha_i e^{-|x_i|} \sgn(x_i) +\dot\beta_i\alpha_i(e^{-|x_i|}-2\delta(x_i)) \big)  . 
\end{equation}

To begin, 
we substitute the general $N$-peakon expression \eqref{multi-peakon}
into the weak equation \eqref{pde_weak}.
There are two ways we can then proceed.
One way which is commonly used is to 
assume $\beta_1 < \beta_2 <\cdots< \beta_N$ at a fixed $t>0$,
split up the integral over $x$ into corresponding intervals,
and integrate by parts,
similarly to the derivation of the single peakon solution.
Another way, which is considerably simpler, 
is to employ the following result which can be easily proven 
from distribution theory \cite{Gel-Shi}.

Let $h(x)$ be a distribution whose singular support 
is a set of a finite number of points $x=x_i$ in $\mathbb{R}$, 
and define its non-singular part 
\begin{equation}\label{fmean}
\left< h(x)  \right>= 
\begin{cases}
h (x), 
& x \neq x_i\\
\tfrac{1}{2}(h(x_i^+) + h(x_i^-)), 
& x = x_i
\end{cases}
\end{equation}
and its jump discontinuities 
\begin{equation}\label{fjump}
\left[ h \right]_{x_i} = h(x_i^+) - h(x_i^-) = \left[\left< h(x)\right> \right]_{x_i} . 
\end{equation}

\begin{lem}
(i) 
If $h(x)$ is a piecewise smooth, bounded function,
then $h'(x)$ is a distribution whose singular support is the set of points $x=\beta_i$
at which $h(x)$ is discontinuous,
and $\left< h'(x)\right>$ is a piecewise smooth, bounded function. 
The functions $\left< h'(x)\right>$ and $h(x)$ are related by the integral identity 
\begin{equation}\label{ibp_rel}
\int_{-\infty}^{\infty} \psi' h \, dx + \int_{-\infty}^{\infty} \psi \left< h' \right> \, dx
= - \sum_i (\psi \left[ h \right])|_{x=\beta_i}
\end{equation}
holding for any test function $\psi(x)$. 
(ii)
If $h_1(x)$ and $h_2(x)$ are piecewise smooth, bounded functions,
then 
\begin{equation}\label{prod_rels}
\int_{-\infty}^{\infty} \psi \left< h_1 h_2 \right> \, dx
=\int_{-\infty}^{\infty} \psi h_1h_2 \, dx ,
\quad
\int_{-\infty}^{\infty} \psi \left< h_1 h_2' \right> \, dx
= \int_{-\infty}^{\infty} \psi h_1 \left< h_2' \right> dx
\end{equation}
are identities. 
\end{lem}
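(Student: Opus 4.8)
The plan is to prove part (i) by reducing the distributional derivative of $h$ to ordinary integration by parts on the smooth pieces of $h$, and to prove part (ii) from the elementary fact that the averaging operation $\langle\,\cdot\,\rangle$ in \eqref{fmean} alters a function only on the finite set of jump points $\{\beta_i\}$, which has Lebesgue measure zero and therefore does not affect any integral taken against a test function.

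For part (i), I would first note that, since $h$ is piecewise smooth and bounded with discontinuities only at the $\beta_i$, its classical derivative exists, is piecewise smooth and bounded off the $\beta_i$, and coincides there with $\langle h'\rangle$ by \eqref{fmean}; its only possible singularities are the points $\beta_i$, which establishes the assertion about the singular support. To obtain the identity \eqref{ibp_rel}, I would order the jump points $\beta_1<\beta_2<\cdots$ and split $\int_{-\infty}^{\infty}\psi' h\,dx$ into integrals over the open intervals between consecutive $\beta_i$, together with the two unbounded end intervals. On each such interval $h$ is smooth, so ordinary integration by parts applies and yields $-\int\psi\langle h'\rangle\,dx$ on that piece plus the boundary term $\psi h$ evaluated at the interval endpoints. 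Since $\psi$ has compact support, the contributions at $\pm\infty$ vanish, while at each interior point $\beta_i$ the right-endpoint term $+\psi(\beta_i)h(\beta_i^-)$ from one interval and the left-endpoint term $-\psi(\beta_i)h(\beta_i^+)$ from the next combine into $-\psi(\beta_i)\big(h(\beta_i^+)-h(\beta_i^-)\big)=-\psi(\beta_i)[h]_{\beta_i}$ by \eqref{fjump}. Summing over all intervals reproduces \eqref{ibp_rel} exactly.

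For part (ii), the first identity in \eqref{prod_rels} is immediate: $h_1h_2$ is again piecewise smooth and bounded, so by \eqref{fmean} the function $\langle h_1h_2\rangle$ agrees with $h_1h_2$ except at the finitely many jump points, where it is replaced by the two-sided average; as this is a measure-zero modification, the two integrals against $\psi$ coincide. The second identity is the same principle applied to a product involving a derivative: off the singular support of $h_2'$, the distribution $h_2'$ coincides with the classical derivative $\langle h_2'\rangle$, so $h_1h_2'$ and $h_1\langle h_2'\rangle$ agree pointwise there. Hence their non-singular parts coincide away from the finite jump set, and integrating against $\psi$ once more leaves the value unchanged.

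The step requiring the most care is the meaning of $\langle h_1h_2'\rangle$ in the second identity of \eqref{prod_rels}, since the raw product of the possibly discontinuous function $h_1$ with the distribution $h_2'$ — which carries delta contributions $[h_2]_{\beta_i}\,\delta(x-\beta_i)$ at the jumps — is not a well-defined distribution in general. The resolution, which I would spell out explicitly, is that $\langle\,\cdot\,\rangle$ extracts only the locally integrable (non-singular) part: off the singular support both factors are ordinary functions and their product is classical, whereas the delta contributions are discarded rather than multiplied by $h_1$. Once this is understood, no ill-defined product is ever formed and the identity again reduces to the measure-zero modification principle.
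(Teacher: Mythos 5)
Your proof is correct. The paper itself offers no proof of this lemma (it is stated as "easily proven from distribution theory" with a citation to Gel'fand--Shilov), and your argument --- splitting the integral at the ordered jump points, integrating by parts on each smooth piece, and collecting the endpoint contributions into $-\psi(\beta_i)[h]_{\beta_i}$, together with the measure-zero observation for part (ii) and the clarification that $\left< h_1 h_2' \right>$ retains only the non-singular part of the product --- is exactly the standard argument the paper alludes to and relies on in the surrounding derivation.
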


We will now use the identities \eqref{ibp_rel}--\eqref{prod_rels}
to combine and evaluate all of the terms in the weak equation \eqref{pde_weak}.

We start with the term
\begin{equation}\label{term1}
\begin{aligned}
-\iint_{-\infty}^{\infty} \psi_{xx} u_t \, dx \, dt 
= & 
- \iint_{-\infty}^{\infty} \psi  \left<\left< u_{tx} \right>_x\right> \, dx  \, dt 
\\&\qquad
+\sum_i \int_{-\infty}^{\infty}\big( 
\psi_x(t,\beta_i) \left[u_t\right]_{\beta_i} - \psi(t,\beta_i) \left[\left< u_{tx} \right>\right]_{\beta_i}
\big) \, dt . 
\end{aligned}
\end{equation}
By using expression \eqref{utx_expr},
we see 
\begin{equation}\label{utx_u_rel}
\left<\left< u_{tx} \right>_x\right>  = u_t , 
\end{equation}
so then the term \eqref{term1} can be combined with 
the other term in equation \eqref{pde_weak} involving $u_t$. 
These two terms yield 
\begin{equation}\label{ut_terms}
\iint_{-\infty}^{\infty}
(\psi u_t - \psi_{xx} u_t) \, dx \, dt =
\sum_i \int_{-\infty}^{\infty} \big( 
\psi_x(t,\beta_i) \left[u_t\right]_{\beta_i} -\psi(t,\beta_i) \left[\left< u_{tx} \right>\right]_{\beta_i} 
\big) \, dt . 
\end{equation}

We next consider the term 
\begin{equation}
\iint_{-\infty}^{\infty} \psi_{xx} G  \, dx \, dt 
=  -\iint_{-\infty}^{\infty} \psi_x \left<D_x G \right> \, dx  \, dt 
- \sum_i \int_{-\infty}^{\infty} \psi_x(t,\beta_i) \left[G\right]_{\beta_i} \, dt . 
\end{equation}
Expanding the total derivative $D_x G$, 
we note
\begin{equation}
\begin{aligned}
\iint_{-\infty}^{\infty} \psi_x \left<D_x G \right> \, dx  \, dt 
& = \iint_{-\infty}^{\infty} \psi_x ( \left<  G_u\right>\left< u_x \right>+ \left<G_{u_x}\left>\right< u_{xx} \right> )\, dx  \, dt 
\\
& = \iint_{-\infty}^{\infty} \psi_x ( G_u u_x + g u )\, dx  \, dt 
\end{aligned}
\end{equation}
by using 
\begin{equation}\label{uxx_u_rel}
\left< u_{xx} \right>  = u 
\end{equation}
which follows from expression \eqref{uxx_expr}. 
Thus, we have 
\begin{equation}
\iint_{-\infty}^{\infty} \psi_{xx} G  \, dx \, dt 
= -\iint_{-\infty}^{\infty} \psi_x ( G_u u_x + g u )\, dx  \, dt 
- \sum_i \int_{-\infty}^{\infty} \psi_x(t,\beta_i) \left[G\right]_{\beta_i} \, dt
\end{equation}
which then can be combined with the similar term in equation \eqref{pde_weak}, 
yielding
\begin{equation}\label{G_terms}
-\iint_{-\infty}^{\infty}( \psi_x (g u + G_u u_x) + \psi_{xx} G  )  \, dx \, dt 
= \sum_i \int_{-\infty}^{\infty} \psi_x(t,\beta_i) \left[G\right]_{\beta_i} \, dt . 
\end{equation}

Last, we consider the remaining terms in equation \eqref{pde_weak}. 
These terms combine similarly to the previous two terms, 
which gives 
\begin{equation}\label{F_terms}
\iint_{-\infty}^{\infty}( \psi ( f u + F_u u_x) + \psi_x F)  \, dx \, dt 
=- \sum_i \int_{-\infty}^{\infty} \psi(t,\beta_i) \left[F\right]_{\beta_i}\, dt . 
\end{equation}

Now, by combining all of the terms \eqref{ut_terms}, \eqref{G_terms}, \eqref{F_terms},
we obtain
\begin{equation}\label{all_terms}
0 = 
\sum_i \int_{-\infty}^{\infty} \psi_x(t,\beta_i) (\left[u_t\right]_{\beta_i} +\left[G\right]_{\beta_i}) \, dt 
-\sum_i \int_{-\infty}^{\infty} \psi(t,\beta_i) (\left[F\right]_{\beta_i} +\left[\left<u_{tx}\right>\right]_{\beta_i}) \, dt . 
\end{equation}
This equation \eqref{all_terms} will hold for all test functions $\psi$ 
if and only if 
\begin{equation}\label{conds}
\left[u_t\right]_{\beta_i} +\left[G\right]_{\beta_i} =0,
\quad
\left[\left<u_{tx}\right>\right]_{\beta_i} +\left[F\right]_{\beta_i} =0 . 
\end{equation}
The jump terms involving $t$-derivatives of $u$ are given by
\begin{equation}\label{ut_jumps}
\left[u_t\right]_{\beta_i} = 2\alpha_i \dot\beta_i, 
\quad
\left[\left<u_{tx}\right>\right]_{\beta_i} = - 2\dot\alpha_i
\end{equation}
which can be derived directly from expressions \eqref{ut_expr} and \eqref{utx_expr}.
For the other jump terms, 
we first note 
\begin{gather}\label{u_ux_jumps}
u(t,\beta_i^\pm) = u(t,\beta_i) 
= \sum_j \alpha_j e^{-|\beta_i-\beta_j|}
= \alpha_i+\sum_{j\neq i} \alpha_j e^{-|\beta_i-\beta_j|} , 
\\
u_x(t,\beta_i)  = -\sum_{j} \sgn(\beta_i-\beta_j) \alpha_j e^{-|\beta_i-\beta_j|}
= -\sum_{j\neq i} \sgn(\beta_i-\beta_j) \alpha_j e^{-|\beta_i-\beta_j|} , 
\\
u_x(t,\beta_i^\pm) = \mp\alpha_i +  u_x(t,\beta_i)
= \mp\alpha_i -\sum_{j\neq i} \sgn(\beta_i-\beta_j) \alpha_j e^{-|\beta_i-\beta_j|} 
\end{gather}
using expressions \eqref{u_expr} and \eqref{ux_expr},
where we now extend the definition \eqref{sgn} of the sign function 
by defining
\begin{equation}
\sgn(0)=0. 
\end{equation}
Then we have 
\begin{equation}\label{FG_jumps} 
\begin{aligned}
& \left[F\right]_{\beta_i} 
= F(u(t,\beta_i),-\alpha_i+u_x(t,\beta_i)) - F(u(t,\beta_i),\alpha_i+u_x(t,\beta_i)), 
\\
& \left[G\right]_{\beta_i} 
= G(u(t,\beta_i),-\alpha_i+u_x(t,\beta_i)) - G(u(t,\beta_i),\alpha_i+u_x(t,\beta_i)), 
\end{aligned}
\end{equation}
which can be expressed directly in terms of $f$ and $g$ 
through the integrals \eqref{F} and \eqref{G}. 

Consequently, 
after expressions \eqref{ut_jumps}, \eqref{u_ux_jumps}, \eqref{FG_jumps}
are substituted into equation \eqref{conds}, 
we have established the following main result. 

\begin{thm}\label{thm:multi_peakon}
Every equation \eqref{fg-fam} in the $fg$-family
possesses $N$-peakon weak solutions \eqref{multi-peakon}
for arbitrary $N\geq 1$ 
in which the time-dependent amplitudes $\alpha_i(t)$ and positions $\beta_i(t)$ 
($i=1,\ldots,N$)
satisfy the system of ODEs 
\begin{align}
\dot\alpha_i & =\tfrac{1}{2}( F(U_i,V_i-\alpha_i) - F(U_i,V_i+\alpha_i) ) , 
\label{ampl_eqn}
\\
\dot\beta_i & = \tfrac{1}{2}( G(U_i,V_i+\alpha_i) - G(U_i,V_i-\alpha_i) )/\alpha_i , 
\label{pos_eqn}
\end{align}
with $F$ and $G$ given by expressions \eqref{F} and \eqref{G}, 
and with 
\begin{equation}\label{UV}
U_i = \sum_{j=1}^{N} \alpha_j e^{-|\beta_{i,j}|} = u(t,\beta_i), 
\quad
V_{i}= -\sum_{j=1}^{N} \sgn(\beta_{i,j}) \alpha_j e^{-|\beta_{i,j}|} = u_x(t,\beta_i), 
\quad
\beta_{i,j} = \beta_i-\beta_j . 
\end{equation}
\end{thm}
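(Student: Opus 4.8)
The plan is to substitute the $N$-peakon ansatz \eqref{multi-peakon} into the weak formulation \eqref{pde_weak} and to reduce the resulting space-time integrals to a finite sum of contributions localized at the moving peak positions $\beta_i(t)$. There are two routes: one could order the peaks $\beta_1<\cdots<\beta_N$ at each fixed $t$, split the $x$-integral into the corresponding intervals, and integrate by parts as in the single-peakon case; but the cleaner route, which I would follow, is the distribution-theoretic one based on the Lemma. First I would record the distributional derivatives \eqref{ux_expr}--\eqref{utx_expr}, observing that $u$ is continuous whereas $u_x$, $u_{tx}$, and $u_{xx}$ carry both jump discontinuities and Dirac $\delta$-singularities supported on $\{\beta_i\}$. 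The integration-by-parts identity \eqref{ibp_rel} then converts each term carrying a derivative of the test function $\psi$ into a sum of jump terms $[\,\cdot\,]_{\beta_i}$, while the product identities \eqref{prod_rels} let me replace mean values such as $\langle G_u u_x\rangle$ and $\langle g\,u\rangle$ by the products of their smooth factors---a step that is essential because the nonlinear coefficients $F$ and $G$ depend on the discontinuous quantity $u_x$.

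Next I would process the three natural groupings of \eqref{pde_weak}. The two $u_t$-terms combine via the simplification $\langle\langle u_{tx}\rangle_x\rangle=u_t$ from \eqref{utx_u_rel} to give \eqref{ut_terms}; the $G$-terms combine via $\langle u_{xx}\rangle=u$ from \eqref{uxx_u_rel} to give \eqref{G_terms}; and the $F$-terms combine analogously to give \eqref{F_terms}. Summing these yields the assembled identity \eqref{all_terms}, in which every surviving integral is a sum over $i$ of one term multiplying $\psi(t,\beta_i)$ and one term multiplying $\psi_x(t,\beta_i)$. Because a test function and its first derivative can be prescribed independently at each of the distinct points $\beta_i$, this forces the pair of jump conditions \eqref{conds} to hold separately for every $i$.

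The final step is purely algebraic. I would substitute the explicit $t$-derivative jumps \eqref{ut_jumps}, namely $[u_t]_{\beta_i}=2\alpha_i\dot\beta_i$ and $[\langle u_{tx}\rangle]_{\beta_i}=-2\dot\alpha_i$, together with the peak values \eqref{u_ux_jumps}, into the $F$- and $G$-jumps \eqref{FG_jumps}, and then solve \eqref{conds} for $\dot\alpha_i$ and $\dot\beta_i$ to reach \eqref{ampl_eqn}--\eqref{pos_eqn} with the abbreviations \eqref{UV}. I expect the main obstacle to be the distributional bookkeeping rather than any analytic estimate: one must keep the one-sided peak limits $u_x(t,\beta_i^\pm)=\mp\alpha_i+V_i$ distinct from the mean value $V_i=u_x(t,\beta_i)$, verify that the $\delta$-contributions in $u_{xx}$ and $u_{tx}$ are correctly absorbed by the mean-value operation $\langle\,\cdot\,\rangle$, and adopt the convention $\sgn(0)=0$ so that the self-interaction drops out of $V_i$. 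Getting these signs and self-terms right is precisely what makes the coefficients in the peakon ODE system \eqref{ampl_eqn}--\eqref{pos_eqn} come out correctly.
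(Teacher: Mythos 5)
Your proposal is correct and follows essentially the same route as the paper: the distribution-theoretic argument via the identities \eqref{ibp_rel}--\eqref{prod_rels}, the same three groupings of terms reduced by \eqref{utx_u_rel} and \eqref{uxx_u_rel} to the jump conditions \eqref{conds}, and the same final substitution of \eqref{ut_jumps}, \eqref{u_ux_jumps}, \eqref{FG_jumps} to obtain the ODE system. The bookkeeping points you flag (one-sided limits versus mean values, and the convention $\sgn(0)=0$) are exactly the ones the paper handles.
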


It is straightforward to show that the general ODE system \eqref{ampl_eqn}--\eqref{UV}
reproduces the multi-peakon dynamical systems found in the literature 
\cite{Cam-Hol,Cam-Hol-Hym,Deg-Hol-Hon,Wan-Hon,Gui-Liu-Olv-Qu}
using weak formulations for all of the peakon equations in Table~\ref{table:peakon_eqns}.

This theorem is very interesting because it shows that 
all equations in the $fg$-family \eqref{fg-fam} admit $N$-peakons for arbitrary $N\geq 1$. 
Consequently, 
the existence of multi-peakons for any given equation \eqref{fg-fam} 
does not rely on the equation explicitly exhibiting any integrability properties
or any Hamiltonian structure.
Indeed, according to the integrability classification in Ref.~\cite{Nov},
there are many non-integrable quadratic and cubic peakon equations belonging to this family \eqref{fg-fam}, 
and no Hamiltonian structure is known for these non-integrable equations
other than the $b$-family equation \eqref{b-fam}. 

As another consequence, 
all equations in $fg$-family \eqref{fg-fam} admit a $1$-peakon solution, 
possibly having a time-dependent amplitude and a time-dependent speed, 
given by the case $N=1$. 

\begin{cor}
Any equation \eqref{fg-fam} in the $fg$-family 
possesses a generalized peakon solution of the form 
\begin{equation}\label{general-peakon}
u(t,x)=\alpha(t) \exp(-|x-\beta(t)|)
\end{equation}
whose amplitude $\alpha(t)$ and position $\beta(t)$
are given by the ODEs
\begin{equation}
\dot\alpha =\tfrac{1}{2}(F(\alpha,-\alpha)-F(\alpha,\alpha)),
\quad
\dot\beta = \tfrac{1}{2}(G(\alpha,\alpha)-G(\alpha,-\alpha))/\alpha ,
\end{equation}
where $F$ and $G$ are given by expressions \eqref{F} and \eqref{G}. 
This solution reduces to a travelling wave
\begin{equation}
u(t,x)=a \exp(-|x-ct|)
\end{equation}
with 
\begin{equation}\label{1peakon_cond}
\alpha = a = \const, 
\quad 
\dot\beta= c = \const\neq 0
\end{equation}
if and only if $f$ and $g$ have the forms \eqref{fg_1peakon}--\eqref{fg_1peakon'}. 
\end{cor}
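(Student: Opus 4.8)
The plan is to obtain this Corollary as the specialization of Theorem~\ref{thm:multi_peakon} to the case $N=1$, supplemented by an appeal to Theorem~\ref{thm:single_peakon} for the travelling-wave characterization. First I would set $N=1$ in the ODE system \eqref{ampl_eqn}--\eqref{UV}. The key observation is that a single index forces $\beta_{1,1}=\beta_1-\beta_1=0$, so that the quantities in \eqref{UV} collapse to $U_1=\alpha_1 e^{-|0|}=\alpha_1$ and, using the convention $\sgn(0)=0$, to $V_1=-\sgn(0)\,\alpha_1 e^{-|0|}=0$. Writing $\alpha=\alpha_1$ and $\beta=\beta_1$, and substituting $U_1=\alpha$, $V_1=0$ into \eqref{ampl_eqn}--\eqref{pos_eqn}, then yields exactly the two ODEs stated in the Corollary. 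This establishes the existence of the generalized single peakon \eqref{general-peakon} together with its evolution equations.

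Next I would turn to the travelling-wave reduction. Imposing the ansatz $\alpha(t)=a=\const$ and $\dot\beta(t)=c=\const\neq0$ forces $\dot\alpha=0$ in the amplitude equation, which reads $F(a,-a)-F(a,a)=0$; this is precisely condition \eqref{peakon_cond_F}. The position equation then gives $\dot\beta=(G(a,a)-G(a,-a))/(2a)$, and requiring this to be a nonzero constant is exactly condition \eqref{peakon_cond_G}. Conversely, if \eqref{peakon_cond_F}--\eqref{peakon_cond_G} hold, the amplitude equation gives $\dot\alpha=0$, so $\alpha$ remains at its initial value $a$, and the position equation then produces a constant nonzero speed $c=(G(a,a)-G(a,-a))/(2a)$; hence the solution is a travelling wave. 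Thus the reduction to a travelling wave holds if and only if the pair of conditions \eqref{peakon_cond_F}--\eqref{peakon_cond_G} is satisfied.

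Finally I would invoke Theorem~\ref{thm:single_peakon}, which already establishes that conditions \eqref{peakon_cond_F}--\eqref{peakon_cond_G} are equivalent to $f$ and $g$ having the forms \eqref{fg_1peakon}--\eqref{fg_1peakon'}. Combining this equivalence with the previous paragraph completes the proof. This argument involves no serious obstacle, since it is essentially a bookkeeping specialization of two results already proved; the only point requiring care is the evaluation of \eqref{UV} at $N=1$, where the convention $\sgn(0)=0$ is what produces $V_1=0$ and makes the resulting single-peakon ODEs match the conditions appearing in Theorem~\ref{thm:single_peakon}.
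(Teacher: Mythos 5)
Your proposal is correct and follows essentially the same route as the paper, which presents this corollary as the immediate specialization of Theorem~\ref{thm:multi_peakon} to $N=1$ (where $\beta_{1,1}=0$ and the convention $\sgn(0)=0$ give $U_1=\alpha$, $V_1=0$) combined with the equivalence already established in Theorem~\ref{thm:single_peakon}. The bookkeeping details you supply, including the identification of $\dot\alpha=0$ with condition \eqref{peakon_cond_F} and of the constant nonzero speed with condition \eqref{peakon_cond_G}, match the paper's intended argument.
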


Finally, 
the derivation of the $N$-peakon solutions indicates that if 
we try to allow the functions $f$ or $g$ in the $fg$-family \eqref{fg-fam} 
to depend on second-order (or higher) derivatives of $u$
then this will unavoidably lead to the resulting weak equation 
having products of Dirac delta distributions. 
In such a situation, there would seem to be no sensible way to show that 
a $N$-peakon expression is a weak solution. 
Therefore, 
the $fg$-family is arguably the most general family of nonlinear dispersive wave equations of the form \eqref{gen-fam}
that can possess multi-peakon weak solutions \eqref{multi-peakon}.

\section{Hamiltonian structure}\label{sec:hamiltonian-struct}

Each of the known integrable peakon equations 
--- CH \eqref{CH}, DP \eqref{DP}, mCH \eqref{FORQ}, N \eqref{N} ---
has a bi-Hamiltonian structure
\begin{equation}\label{bi-Ham-eqn}
m_t =\Hop (\delta H/\delta m) =\Eop (\delta E/\delta m) 
\end{equation}
where $\Hop$ and $\Eop$ are a pair of compatible Hamiltonian operators,
and $H$ and $E$ are corresponding Hamiltonian functionals. 
This general structure can be expressed equivalently in terms of 
a pair of associated Poisson brackets
\begin{equation}\label{PB-eqn}
m_t =\{ m, H\}_\Hop =\{ m, E\}_\Eop 
\end{equation}
defined by 
\begin{equation}\label{PB}
\{ H_1, H_2\}_\Dop = \int_{-\infty}^{\infty} (\delta H_1/\delta m)\Dop(\delta H_2/\delta m)\, dx
\end{equation}
on the real line. 
In particular, 
an operator $\Dop$ is a Hamiltonian operator when and only when 
the bracket \eqref{PB} is a Poisson bracket 
(namely, it is skew and satisfies the Jacobi identity). 

As is well-known, it is interesting that
both the CH equation \eqref{CH} and the mCH equation \eqref{FORQ}
share one Hamiltonian structure: 
\begin{align}
& m_t = -u_x m - (um)_x = \Hop(\delta H_{\text{CH}}/\delta m)
\label{CH_hamil_struc}\\
& m_t =  - ((u^2-u_x^2)m)_x = \Hop(\delta H_{\text{mCH}}/\delta m)
\label{FORQ_hamil_struc}
\end{align}
with 
\begin{equation}\label{Hop}
\Hop =D_x^3 -D_x =-\Delta D_x = -D_x \Delta,
\quad
\Delta = 1-D_x^2
\end{equation}
and
\begin{align}
H_{\text{CH}} & 
= \int_{-\infty}^{\infty} \tfrac{1}{2} u(u^2 + u_x^2)\, dx
= \int_{-\infty}^{\infty} \tfrac{1}{2} (u^2 - u_x^2) m \, dx
\label{CH_hamil}\\
H_{\text{mCH}} & 
= \int_{-\infty}^{\infty} ( \tfrac{1}{4} (u^2+u_x^2)^2 -\tfrac{1}{3}u_x^4 )\, dx
= \int_{-\infty}^{\infty} \tfrac{1}{4} u(u^2-u_x^2) m \, dx . 
\label{FORQ_hamil}
\end{align}
Note, in this Hamiltonian structure \eqref{CH_hamil_struc}--\eqref{FORQ_hamil_struc}, 
$u$ is viewed as a potential for $m$ through the relation 
\begin{equation}
m=\Delta u . 
\end{equation}
The variational derivatives with respect to $m$ 
then can be formulated in terms of variational derivatives with respect to $u$ 
by the identity 
\begin{equation}
\Delta\frac{\delta}{\delta m} = \frac{\delta}{\delta u} . 
\end{equation}
This yields 
\begin{align}
& m_t = -u_x m - (um)_x = -D_x(\delta H_{\text{CH}}/\delta u)
\label{CH_hamil_u}\\
& m_t =  - ((u^2-u_x^2)m)_x = -D_x(\delta H_{\text{mCH}}/\delta u)
\label{FORQ_hamil_u}
\end{align}
where $-D_x$ is a Hamiltonian operator with respect to $u$, 
which provides a useful alternative form for the common Hamiltonian structure of
the CH and mCH equations. 

Motivated by these Hamiltonian formulations \eqref{CH_hamil_u}--\eqref{FORQ_hamil_u},
we will now seek the most general form for the functions 
$f(u,u_x)$ and $g(u,u_x)$ in the $fg$-family \eqref{fg-fam}
so that 
\begin{equation}\label{Hamil_struc}
m_t = -f(u,u_x)m -(g(u,u_x)m)_x = -D_x (\delta H/\delta u) = -\Hop (\delta H/\delta m) 
\end{equation}
possesses a Hamiltonian structure using the Hamiltonian operator \eqref{Hop} 
common to the CH and mCH equations. 
There are two conditions for this structure \eqref{Hamil_struc} to exist, 
as determined by the necessary and sufficient relation 
$f(u,u_x)m +(g(u,u_x)m)_x = D_x (\delta H/\delta u)$. 
The first condition is that $f m = D_x A$ must hold 
for some differential function $A$ of $u$ and $x$-derivatives of $u$. 
Then the second condition is that $A + gm = \delta H/\delta u$ must hold 
for some functional $H= \int_{-\infty}^{\infty} B\, dx$
where the density $B$ is a differential function of $u$ and $x$-derivatives of $u$. 
We can formulate these two conditions as determining equations on $f$ and $g$
by using some tools from variational calculus \cite{Olv,Anc-review}, 
in particular, the Euler operator and the Helmholtz operator. 
This leads to two overdetermined linear systems of equations, 
which we can straightforwardly solve to find $f(u,u_x)$ and $g(u,u_x)$ explicitly. 
Details are shown in the Appendix. 

We obtain 
\begin{align}
f(u,u_x) & = u_x f_1(u^2-u_x^2) + f_0 \frac{u}{u^2-u_x^2}, 
\label{f_sol}
\\
g(u,u_x) & = g_1(u^2-u_x^2) + u f_1(u^2-u_x^2) + f_0 \frac{u_x}{u^2-u_x^2} , 
\label{g_sol}
\end{align}
where $f_1$, $g_1$ are arbitrary functions of $u^2-u_x^2$, 
and $f_0$ is an arbitrary constant. 
To avoid the occurrence of singularities when $u^2=u_x^2$, 
we will hereafter put $f_0=0$ and take $f_1$, $g_1$ to be continuous functions. 
Then we have the following result.  

\begin{prop}\label{prop:fg-Hamil}
In the $fg$-family \eqref{fg-fam}, 
all non-singular peakon equations that share the CH-mCH Hamiltonian structure \eqref{Hamil_struc} 
are given by 
\begin{equation}\label{fg_Hamil_fam}
f(u,u_x) = u_x f_1(u^2-u_x^2),
\quad
g(u,u_x) = u f_1(u^2-u_x^2) + g_1(u^2-u_x^2)
\end{equation} 
where $f_1$ is an arbitrary $C^0$ function of $u^2-u_x^2$, 
and $g_1$ is an arbitrary $C^1$ function of $u^2-u_x^2$. 
The Hamiltonian structure takes the form 
\begin{equation}\label{fg_Hamil_struct}
m_t = -u_x f_1(u^2-u_x^2)  m - ((u f_1(u^2-u_x^2)  + g_1(u^2-u_x^2))m)_x
=-D_x\Delta (\delta H(u)/\delta m)
\end{equation} 
with 
\begin{equation}\label{fg_Hamil}
H(u) = \int_{-\infty}^{\infty} 
\tfrac{1}{2}\big( F_1(u^2-u_x^2) + u\tilde G_1(u^2-u_x^2) \big) m \, dx
\end{equation}
where 
\begin{equation}\label{fg_Hamil_FG}
\begin{aligned}
F_1 & =\int_0^{u^2-u_x^2} f_1(y)\,dy 
= (u^2-u_x^2) \int_0^1 f_1(\lambda(u^2-u_x^2))\,d\lambda, 
\\
\tilde G_1 & =(u^2-u_x^2)^{-1}\int_0^{u^2-u_x^2} g_1(y)\,dy 
= \int_0^1 g_1(\lambda(u^2-u_x^2))\,d\lambda . 
\end{aligned}
\end{equation}
\end{prop}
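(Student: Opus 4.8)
The plan is to take the characterization of $f$ and $g$ in \eqref{fg_Hamil_fam} as already established by the derivation in the Appendix, where the two determining conditions $fm = D_x A$ and $A + gm = \delta H/\delta u$ are converted into overdetermined linear systems by means of the Euler and Helmholtz operators and solved to give the general forms \eqref{f_sol}--\eqref{g_sol}, after which $f_0=0$ is imposed to remove the singularity at $u^2=u_x^2$. The remaining content of the Proposition is then to verify that these forms genuinely produce the Hamiltonian structure \eqref{fg_Hamil_struct} and to exhibit the explicit Hamiltonian \eqref{fg_Hamil}, which is what I would establish here by directly checking the two conditions.

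The key structural fact I would isolate first is the identity $D_x(u^2-u_x^2) = 2u_x(u-u_{xx}) = 2u_x m$, which follows at once from $m=u-u_{xx}$. Writing $w = u^2-u_x^2$, this shows immediately that $fm = u_x f_1(w)\, m = \tfrac{1}{2}f_1(w)\,D_x w = \tfrac{1}{2}D_x F_1(w)$ with $F_1(w)=\int_0^w f_1(y)\,dy$, so the first condition $fm=D_xA$ holds with $A=\tfrac{1}{2}F_1(w)$. This reduces the problem to verifying the second condition, namely that
\[
A + gm = \tfrac{1}{2}F_1(w) + \big(u f_1(w) + g_1(w)\big)m
\]
is the variational derivative $\delta H/\delta u$ of the functional \eqref{fg_Hamil}. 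At this point I would note that the two alternative expressions in \eqref{fg_Hamil_FG} are simply rewritings of $F_1$ and $\tilde G_1$ under the substitution $y=\lambda w$, which also makes transparent that the density of $H$ is non-singular as $w\to 0$ (in particular $\tilde G_1\to g_1(0)$).

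The heart of the argument, and the step I expect to be the main obstacle, is verifying this second condition by computing $\delta H/\delta u$ for the density $B = \tfrac{1}{2}\big(F_1(w)+u\tilde G_1(w)\big)m$. I would apply the Euler operator $\delta/\delta u = \partial_u - D_x\partial_{u_x} + D_x^2\partial_{u_{xx}}$, using the chain rule through $w=u^2-u_x^2$ (with $\partial_u w = 2u$ and $\partial_{u_x} w = -2u_x$) and through $m = u-u_{xx}$, and then repeatedly invoking $D_x w = 2u_x m$ to collapse the resulting expressions. The delicate part is the bookkeeping: many intermediate terms are total $x$-derivatives that drop out under $\delta/\delta u$, and one must confirm that the surviving terms assemble exactly into $\tfrac{1}{2}F_1(w)+(uf_1(w)+g_1(w))m$. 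As a final check that the algebra is correct, I would specialize to $f_1=1,\ g_1=0$ and to $f_1=0,\ g_1(w)=w$, confirming that $H$ reduces respectively to the CH Hamiltonian \eqref{CH_hamil} and the mCH Hamiltonian \eqref{FORQ_hamil}, so that \eqref{fg_Hamil_struct} recovers \eqref{CH_hamil_struc}--\eqref{FORQ_hamil_struc}.
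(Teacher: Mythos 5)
Your proposal is correct, but on the second half of the statement it takes the reverse route from the paper. For the classification part (``all non-singular peakon equations sharing the structure \eqref{Hamil_struc} are given by \eqref{fg_Hamil_fam}'') you simply defer to the Euler/Helmholtz derivation of \eqref{f_sol}--\eqref{g_sol} with $f_0=0$, which is exactly what the paper's Appendix does, so there is nothing different there. Where you genuinely diverge is in producing the Hamiltonian: the paper \emph{derives} the density by positing an ansatz $B=mC(u,u_x)$, expanding $\E_u(B)$, splitting with respect to $m$ to get an overdetermined linear system for $C$, and integrating it (the free constants and the function $C_1$ contributing only a total $x$-derivative); you instead \emph{verify} that the stated density $B=\tfrac12\big(F_1+u\tilde G_1\big)m$ satisfies $\E_u(B)=gm+A$ with $A=\tfrac12 F_1$. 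Since the proposition exhibits $H$ explicitly, your verification suffices and is lighter --- it avoids the ansatz and the integration of the system for $C$ --- at the cost of not explaining where $H$ comes from or that it is unique up to trivial densities. The one point to flag is that you leave the central computation unexecuted and label it the main obstacle; it in fact goes through cleanly. Writing $w=u^2-u_x^2$, the piece $\tfrac12 F_1(w)m$ gives $\E_u\big(\tfrac12 F_1 m\big)=\tfrac12 F_1+uf_1m$, because $\tfrac12 D_x^2F_1=D_x(u_xf_1m)$ exactly cancels the $-D_x\partial_{u_x}$ contribution; and the piece $\tfrac12 u\tilde G_1(w)m$ gives $\E_u\big(\tfrac12 u\tilde G_1 m\big)=\big(\tilde G_1+w\tilde G_1{}'\big)m=g_1m$, the last equality being the identity $w\tilde G_1{}'(w)=g_1(w)-\tilde G_1(w)$ that follows from $\tilde G_1=w^{-1}\int_0^w g_1(y)\,dy$. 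Both computations use only $D_xw=2u_xm$, as you anticipated, and your two specializations recovering \eqref{CH_hamil} and \eqref{FORQ_hamil} check out.
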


This Hamiltonian structure can be reformulated as a variational formulation. 
We write $u=v_x$ and use the variational relations 
$D_x\E_u = -\E_v$ and $m_t = -\tfrac{1}{2}\E_v(u_tu_x + v_t v_x)$
where $\E$ denotes the Euler operator (cf the Appendix), 
then we obtain 
\begin{equation}\label{varprinc_fg_fam}
0 = -\frac{\delta S(v)}{\delta v}
= m_t + u_x f_1(u^2-u_x^2)  m +((u f_1(u^2-u_x^2)  + g_1(u^2-u_x^2))m)_x
\end{equation}
given by the action principle
\begin{equation}\label{Lagr_fg_fam}
S(v) = \int_{0}^{\infty}\int_{-\infty}^{\infty} L \, dx \, dt ,
\quad
L = \tfrac{1}{2} \big( u_tu_x + v_t v_x + ( F_1(u^2-u_x^2) + u\tilde G_1(u^2-u_x^2) ) m \big) .
\end{equation}

The Hamiltonian family of multi-peakon equations \eqref{fg_Hamil_fam}--\eqref{fg_Hamil_FG}
reduces to the CH equation \eqref{CH_hamil_u}
when $f_1=1$, $g_1=0$,
and also reduces to the mCH equation \eqref{FORQ_hamil_u} 
when $f_1=0$, $g_1=u^2-u_x^2$. 
Thus, this family both unifies and generalizes 
these two peakon equations. 
We will call the subfamily with $g_1=0$ the \emph{CH-type family}, 
and likewise the subfamily with $f_1=0$ will be called the \emph{mCH-type family}. 

We will now discuss some of the general family's interesting features:
conservation laws for strong and weak solutions; 
single peakon and anti-peakon solutions;
and a minimizer principle.

\subsection{Conservation laws}

Conservation laws are important for analysis of the Cauchy problem 
as well as for the study of stability of peakon solutions. 
For the family of Hamiltonian multi-peakon equations 
\begin{equation}\label{fg_CH_mCH_fam}
m_t +u_x f_1(u^2-u_x^2)  m + ((u f_1(u^2-u_x^2)  + g_1(u^2-u_x^2))m)_x=0 , 
\end{equation}
we start by considering smooth solutions $u(t,x)$ on the real line. 

The Hamiltonian \eqref{fg_Hamil} of this family will yield a conserved energy 
under appropriate asymptotic decay conditions on $u(t,x)$. 
In particular, 
the local energy conservation law is given by the continuity equation
\begin{equation}
D_t E + D_x \Psi^E =0 
\end{equation}
where
\begin{equation}\label{ener}
E = \tfrac{1}{2}\big( F_1(u^2-u_x^2) + u\tilde G_1(u^2-u_x^2) \big) m 
\end{equation}
is the energy (Hamiltonian) density,
and where
\begin{equation}\label{ener_flux}
\Psi^E = \tfrac{1}{2}\big( 
u_{tx}(F_1 +u\tilde G_1) -u_t u_x\tilde G_1 +K^2 -K_x^2
\big)
\end{equation}
is the energy flux,
in terms of 
\begin{equation}
K=\Delta^{-1}\E_u(E) =\kappa * \E_u(E) . 
\end{equation}
The flux expression \eqref{ener_flux} is derived by 
first using the variational identity (see \eqref{frechet_euler_ident})
\begin{equation}
D_t E = u_t\E_u(E) + D_x(u_t (E_{u_x}-D_xE_{u_{xx}}) +u_{tx}E_{u_{xx}})
\end{equation}
with 
\begin{gather}
\E_u(E) = \tfrac{1}{2} F_1 + (u f_1 +g_1)m ,
\label{ener_grad}
\\
E_{u_x} = -u_x(F_1'+u\tilde G_1')m,
\quad
E_{u_{xx}} = -\tfrac{1}{2}(F_1+u\tilde G_1) ,
\label{ener_rels}
\end{gather}
and then applying integration by parts to the term $u_t(\delta E/\delta u)$ 
after substituting the Hamiltonian structure \eqref{Hamil_struc} 
expressed in the form 
$u_t = -\Delta^{-1}(D_x\E_u(E))$. 

From integration of the energy conservation law over $-\infty<x<\infty$, 
we see that the total energy 
$H(u)= \int_{-\infty}^{\infty} E\, dx$
will be conserved 
\begin{equation}\label{ener_conserved}
\frac{dH(u)}{dt} = - \Psi^E\Big|_{-\infty}^{\infty}
\end{equation}
for smooth solutions $u(t,x)$ that have vanishing asymptotic flux, 
$\Psi^E\rightarrow 0$ as $|x|\rightarrow \infty$. 

The Hamiltonian structure \eqref{Hamil_struc} itself has the form of 
a local conservation law 
\begin{equation}\label{mom_conslaw}
D_t m + D_x \Psi^M =0 
\end{equation}
for the momentum $m$, 
where the momentum flux is given by 
\begin{equation}\label{mom_flux}
\Psi^M = \E_u(E)
\end{equation}
in terms of the energy density \eqref{ener}. 
Consequently, the total momentum 
$M(u) = \int_{-\infty}^{\infty} m\, dx$ 
will be conserved 
\begin{equation}\label{mom_conserved}
\frac{dM(u)}{dt} = - \Psi^M\Big|_{-\infty}^{\infty}
\end{equation}
for smooth solutions $u(t,x)$ that have vanishing asymptotic flux, 
$\Psi^M\rightarrow 0$ as $|x|\rightarrow \infty$. 

Finally, we now show that the Hamiltonian structure also ensures conservation of 
the $H^1$ norm of solutions $u(t,x)$ with sufficient asymptotic decay. 

The time derivative of the $H^1$ density $u^2 + u_x^2$ is given by 
$D_t(u^2+u_x^2) = 2um_t +D_x(2uu_{tx}) = 2D_x(uu_{tx}-u\E_u(E)) + 2u_x\E_u(E)$
by using the Hamiltonian equation \eqref{Hamil_struc}. 
The term $2u_x\E_u(E)$ can be expressed as a total $x$-derivative 
\begin{equation}
u_x\E_u(E) = D_x (uF_1 +(u^2-u_x^2)\tilde G_1) 
\end{equation}
through the use of equations \eqref{ener_grad} and \eqref{fg_Hamil_FG} 
as well as the identity $2u_xm = D_x(u^2-u_x^2)$. 
This term $u_x(\delta E/\delta u)$ combines with the total $x$-derivative term 
$D_x(-2u\E_u(E))$ 
to give $D_x( (u^2-u_x^2)\tilde G_1 -2u(uf_1 +g_1)m )$. 
Thus, we obtain the local conservation law
\begin{equation}\label{H1_conslaw}
D_t (u^2+u_x^2) + D_x \Psi =0 
\end{equation}
where 
\begin{equation}\label{H1_flux}
\Psi = -2uu_{tx} -(u^2-u_x^2)\tilde G_1 +2u(uf_1 +g_1)m
\end{equation}
is the flux. 
This shows that 
$||u||^2_{H^1} =  \int_{-\infty}^{\infty}(u^2 + u_x^2)\, dx$
will be conserved 
\begin{equation}\label{H1_conserved}
\frac{d}{dt}||u||^2_{H^1} = - \Psi\Big|_{-\infty}^{\infty}
\end{equation}
for smooth solutions $u(t,x)$ that have vanishing asymptotic flux,
$\Psi\rightarrow 0$ as $|x|\rightarrow \infty$. 

A more relevant setting for analysis is given by reformulating
the family of Hamiltonian multi-peakon equations \eqref{fg_CH_mCH_fam} 
in the convolution form 
\begin{equation}\label{fg_CH_mCH_fam_strong_pde}
u_t = - {\kappa * ( u_x f_1(u^2-u_x^2)  m + ((u f_1(u^2-u_x^2)  + g_1(u^2-u_x^2))m)_x )}
= -{\kappa_x *\E_u(E)}
\end{equation}
using the inverse of the operator $1-\partial_x^2$
defined by the kernel \eqref{kernel}. 
Classical solutions of this equation \eqref{fg_CH_mCH_fam_strong_pde} on the real line
require $u$, $u_x$, $m$, and $u_t$ to be continuous functions 
having suitable asymptotic decay as $|x|\rightarrow \infty$. 
A strong solution will be a function $u(t,x)$ in an appropriate Sobolev space 
on $[0,\tau)\times\Rnum$, with some $\tau>0$,  
for which Sobolev embedding implies the requisite continuity and asymptotic decay: 
\begin{equation}\label{strong_u}
u(t,x)\in C^0([0,\tau);H^s(\Rnum)) \cap C^1([0,\tau);H^{s-2}(\Rnum)),
\quad
s>\tfrac{5}{2} . 
\end{equation}

It is straightforward to show that all of the preceding conservation laws 
hold for strong solutions. 
This involves first modifying the conserved densities by the addition of a trivial density 
that eliminates any terms containing $m$ and $u_{tx}$. 

In particular, 
the conservation laws \eqref{mom_conserved} for momentum 
and \eqref{ener_conserved} for energy 
can be expressed as
\begin{equation}\label{mom_integral}
\tilde M(u) = \int_{-\infty}^{\infty} u\, dx ,
\quad
\frac{d}{dt}\tilde M(u) = -\tilde \Psi^M\big|_{-\infty}^{\infty},
\quad
\tilde\Psi^M = K 
\end{equation}
and 
\begin{equation}\label{ener_integral}
\begin{gathered}
\tilde H(u) = \int_{-\infty}^{\infty} \big( \tfrac{1}{2}\big( uF_1 + (u^2-u_x^2)\tilde G_1 +2u_x(u\hat F_1 + \hat G_1)  \big)\,dx ,
\\
\frac{d}{dt}\tilde H(u) = -\tilde \Psi^E\big|_{-\infty}^{\infty},
\quad
\tilde\Psi^E = \tfrac{1}{2}(K^2 - K_x^2) + (u\hat F_1 + \hat G_1) K_x
\end{gathered}
\end{equation}
with 
\begin{equation}\label{hatF1hatG1}
\hat F_1 = \int_0^{u_x} f_1(u^2-y^2)\, dy, 
\quad
\hat G_1 = \int_0^{u_x} g_1(u^2-y^2)\, dy . 
\end{equation}
Similarly, 
the conservation law \eqref{H1_conserved} for the $H^1$ norm 
becomes 
\begin{equation}\label{H1_integral}
\frac{d}{dt}||u||^2_{H^1} = - \tilde\Psi\big|_{-\infty}^{\infty},
\quad
\tilde\Psi = uF_1 + (u^2-u_x^2)\tilde G_1 +u\tilde\Psi^M . 
\end{equation}
In all of these conservation laws, the fluxes will vanish since 
$u\rightarrow 0$ and $u_x\rightarrow 0$ for strong solutions as $|x|\rightarrow \infty$,
while $\kappa\rightarrow 0$ as $|x|\rightarrow \infty$. 
This establishes the following result. 

\begin{prop}\label{prop:fg-fam-conslaws}
In the Hamiltonian family \eqref{fg_CH_mCH_fam} of multi-peakon equations, 
strong solutions \eqref{strong_u} 
have conserved energy \eqref{ener_integral}, momentum \eqref{mom_integral}, 
and $H^1$ norm \eqref{H1_integral}. 
\end{prop}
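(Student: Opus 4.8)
The plan is to establish the three conservation laws for strong solutions \eqref{strong_u} by first working at the level of smooth solutions, where the local conservation laws \eqref{mom_conslaw}, the energy continuity equation $D_tE+D_x\Psi^E=0$, and the $H^1$ identity \eqref{H1_conslaw} have already been derived using the Hamiltonian structure \eqref{Hamil_struc}, and then passing to the strong-solution setting by an approximation/density argument. The essential point, which the excerpt has already indicated, is that each conserved density can be modified by the addition of a \emph{trivial} density (a total $x$-derivative, which integrates to a boundary term) so as to remove every occurrence of $m=u-u_{xx}$ and of $u_{tx}$. Since strong solutions satisfy $u(t,x)\in C^0([0,\tau);H^s(\Rnum))$ with $s>\tfrac52$, Sobolev embedding guarantees that $u$ and $u_x$ are continuous, bounded, and decay as $|x|\to\infty$, which is exactly what is needed to justify the integrations by parts and the vanishing of the fluxes at spatial infinity.

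First I would treat the momentum: using $m=\Delta u$ and integrating, $\int m\,dx=\int u\,dx=\tilde M(u)$ up to a boundary term $u_x|_{-\infty}^{\infty}$ that vanishes for strong solutions, and the flux \eqref{mom_flux} reduces to $\tilde\Psi^M=K$ as in \eqref{mom_integral}. Next I would handle the $H^1$ norm by taking the modified density $u^2+u_x^2$ directly (this already contains no $m$) and reading off the flux from \eqref{H1_flux}; the term $-2uu_{tx}$ is the part that must be absorbed, and rewriting $m_t$ via the convolution form \eqref{fg_CH_mCH_fam_strong_pde} converts $u_{tx}$-dependence into $K$-dependence, yielding $\tilde\Psi$ in \eqref{H1_integral}. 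The energy is the most laborious: starting from \eqref{ener}--\eqref{ener_flux}, I would add to $E$ a total derivative chosen to cancel the $m$-containing pieces, producing the density in \eqref{ener_integral} expressed through $\hat F_1,\hat G_1$ from \eqref{hatF1hatG1}, and correspondingly simplify the flux to $\tilde\Psi^E=\tfrac12(K^2-K_x^2)+(u\hat F_1+\hat G_1)K_x$.

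In each case the final step is to integrate the local law over $\Rnum$ and verify that the flux tends to zero as $|x|\to\infty$. This is where the regularity \eqref{strong_u} is used decisively: since $u,u_x\to0$, every algebraic expression in $u,u_x$ (in particular $F_1,\tilde G_1,\hat F_1,\hat G_1$, each vanishing at the origin of its argument) tends to zero, while $K=\kappa*\E_u(E)$ and $K_x$ inherit decay from $\kappa\to0$ together with the integrability of $\E_u(E)$. Hence every boundary term in \eqref{mom_integral}, \eqref{ener_integral}, \eqref{H1_integral} vanishes, giving the three conservation statements.

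The main obstacle I anticipate is not conceptual but bookkeeping: correctly identifying, for the energy density, the trivial (null) density whose addition removes all $m$ and $u_{tx}$ terms, and then checking that the remaining flux collapses precisely to the stated $\tilde\Psi^E$. This requires careful use of the variational identities already invoked in the proof of Proposition~\ref{prop:fg-Hamil} --- notably $u_t=-\Delta^{-1}(D_x\E_u(E))$ and the relations \eqref{ener_grad}--\eqref{ener_rels} --- and the consistent application of the defining integrals \eqref{fg_Hamil_FG} and \eqref{hatF1hatG1}. Once the densities and fluxes are matched to the stated forms, the decay argument is routine, so the proof reduces to verifying these algebraic reductions and invoking the asymptotic behaviour guaranteed by \eqref{strong_u}.
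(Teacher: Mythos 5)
Your proposal is correct and follows essentially the same route as the paper: the local conservation laws derived earlier in the section are rewritten by adding trivial (total $x$-derivative) densities to eliminate all occurrences of $m$ and $u_{tx}$, yielding the modified densities and fluxes of \eqref{mom_integral}, \eqref{ener_integral}, \eqref{H1_integral}, after which the fluxes vanish at spatial infinity because $u,u_x\to 0$ and $\kappa\to 0$ under the regularity \eqref{strong_u}. The only cosmetic difference is that you frame the passage to strong solutions via an approximation argument, whereas the paper simply invokes Sobolev embedding to supply the continuity and decay needed for the boundary terms to vanish.
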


For solutions with less regularity,
such as weak solutions, 
conservation of the energy and the $H^1$ norm will not hold in general. 
Specifically, 
weak solutions of the Hamiltonian family 
satisfy the integral equation 
\begin{equation}\label{Hamil_u_pde_weak}
0  = \int_{0}^{\infty}\int_{-\infty}^{\infty} \phi (u_t + {\kappa_x *\E_u(E)})\, dx \, dt 
\end{equation}
(cf Proposition~\ref{weak_conds})
for all test functions $\phi(t,x)$. 
Any conservation law holding for all weak solutions must arise directly from this integral equation,
by selecting a specific $\phi(t,x)$ with the right properties. 
It is clear that the global conservation law for the momentum \eqref{mom_integral}
can be obtained by choosing $\phi=1$ in a domain $[0,\tau)\times (-L,L)$ 
with $\phi\rightarrow 0$ rapidly outside this domain, 
and then taking $L\rightarrow \infty$. 
In contrast, 
the global conservation laws for the energy \eqref{ener_integral} 
and the $H^1$ norm \eqref{H1_integral} 
clearly cannot be extracted in this way. 

Finally, we remark that 
this weak formulation \eqref{Hamil_u_pde_weak} of the Hamiltonian family of multi-peakon equations \eqref{fg_CH_mCH_fam}
can be expressed as a variational principle. 
By first putting $\phi = \Delta\psi$ and $u=v_x$, 
we formally have 
$\int_{0}^{\infty}\int_{-\infty}^{\infty} \phi (u_t + {\kappa_x *\E_u(E)})\, dx \, dt 
= \int_{0}^{\infty}\int_{-\infty}^{\infty} \psi \E_v(\tilde L) \, dx \, dt$,
where $\tilde L = \tfrac{1}{2} (u_tu_x + v_t v_x) + \tilde E$,
with 
\begin{equation}\label{weak_ener_dens}
\tilde E = \tfrac{1}{2}\big( uF_1 + (u^2-u_x^2)\tilde G_1 +2u_x(u\hat F_1 + \hat G_1)  \big)
\end{equation}
given by the density of the energy integral \eqref{ener_integral}.
Next we use the variational relation 
$\int_{0}^{\infty}\int_{-\infty}^{\infty} \psi\E_v(\tilde L) \, dx \, dt = \int_{0}^{\infty}\int_{-\infty}^{\infty} \delta_\psi\tilde L  \, dx \, dt$
obtained via integration by parts,
where 
\begin{equation}
\delta_\psi \tilde L = 
\psi \parder{v}\tilde L + \psi_x \parder{v_x}\tilde L + \psi_{xx} \parder{v_{xx}}\tilde L +\psi_t \parder{v_t}\tilde L +\psi_{tx} \parder{v_{tx}}\tilde L
\end{equation}
is the Frechet derivative of $\tilde L$. 
Thus, the weak equation \eqref{Hamil_u_pde_weak} is equivalent to the variational principle 
\begin{equation}\label{weak_varprinc_fg_fam}
0  = \delta_\psi \tilde S(v),
\quad
\tilde S(v) = \int_{0}^{\infty}\int_{-\infty}^{\infty} \tilde L \, dx \, dt 
\end{equation}
with $\tilde S(v)$ being a modified action principle 
(analogous to the modified energy integral $\tilde H(u)$). 

This equation \eqref{weak_varprinc_fg_fam} gives a weak variational principle 
whose solutions $v(t,x)$ are potentials 
for weak solutions $u(t,x)$ of the multi-peakon equations \eqref{fg_CH_mCH_fam}.

\subsection{Single peakons}

The general Hamiltonian family \eqref{fg_CH_mCH_fam}
with $f_1$ and $g_1$ being arbitrary (non-singular) functions of $u^2-u_x^2$ 
possesses single peakon travelling-waves \eqref{singlepeakon},
as seen by applying the two existence conditions in Theorem~\ref{thm:single_peakon}. 

Specifically, we have 
$F= \int_{0}^{u_x} yf_1(u^2-y^2)\,dy = -\tfrac{1}{2}F_1(u^2-u_x^2)$
and $G= \int_{0}^{u_x} (uf_1(u^2-y^2)+g_1(u^2-y^2))\,dy 
= u_x(u\hat G_1(u,u_x) + \hat F_1(u,u_x))$
with $\hat G_1$ and $\hat F_1$ given by expressions \eqref{hatF1hatG1}. 
Since $F_1$, $\hat F_1$ and $\hat G_1$ are even functions of $u_x$, 
we see that condition \eqref{peakon_cond_F} on $F$ holds, 
while condition \eqref{peakon_cond_G} on $G$ becomes
$c=a\hat G_1(a,a) + \hat F_1(a,a)\neq 0$
where both $\hat G_1(a,a)$ and $\hat F_1(a,a)$ are even functions of $a$.
Hence, since $a\hat G_1(a,a)$ is an odd function of  $a$, 
this implies that condition \eqref{peakon_cond_G} is equivalent to having 
$\hat G_1(a,a)\neq 0$ or $\hat F_1(a,a)\neq 0$. 
From these two inequalities, we obtain the following result. 

\begin{prop}\label{prop:fg_Hamil_fam_singlepeakon}
The Hamiltonian family \eqref{fg_CH_mCH_fam} of multi-peakon equations 
possesses single peakon travelling-wave solutions 
$u_{(a)}= a e^{-|x-ct|}$ 
if $f_1$ and $g_1$ are (arbitrary) continuous functions. 
The speed-amplitude relation \eqref{fg_speed_ampl_rel} of these peakons is given by 
\begin{equation}\label{fg_Hamil_fam_speed_amplitude}
c = a c_1(a) + c_0(a)
\end{equation}
with 
\begin{equation}
c_1(a) = \int_0^1 f_1((1-\lambda^2)a^2) \,d\lambda ,
\quad
c_0(a) = \int_0^1 g_1((1-\lambda^2)a^2) \,d\lambda , 
\end{equation}
where the speed $c$ is non-zero if and only if $c_1\neq 0$ or $c_0\neq 0$. 
\end{prop}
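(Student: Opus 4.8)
The plan is to apply the single-peakon existence criterion of Theorem~\ref{thm:single_peakon} directly to the coefficient functions \eqref{fg_Hamil_fam} of the Hamiltonian family. The first step is to record the parity of $f$ and $g$ in $u_x$: since $u^2-u_x^2$ is even in $u_x$, the function $f = u_x f_1(u^2-u_x^2)$ is odd in $u_x$, while $g = u f_1(u^2-u_x^2) + g_1(u^2-u_x^2)$ is even in $u_x$. Integrating in $u_x$ from the symmetric base point, as in \eqref{F} and \eqref{G}, therefore makes $F(u,u_x)$ even in $u_x$ and $G(u,u_x)$ odd in $u_x$. Concretely, $F = \int_0^{u_x} y f_1(u^2-y^2)\,dy$ and $G = \int_0^{u_x} (u f_1(u^2-y^2)+g_1(u^2-y^2))\,dy$, the latter being expressible through the functions in \eqref{hatF1hatG1}.

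The even/odd structure immediately settles the two existence conditions. Condition \eqref{peakon_cond_F}, namely $F(a,a)=F(a,-a)$, holds automatically because $F$ is even in $u_x$. For condition \eqref{peakon_cond_G}, the oddness of $G$ gives $G(a,-a) = -G(a,a)$, so the required antisymmetric combination collapses to $G(a,a)-G(a,-a) = 2G(a,a)$, whence $c = G(a,a)/a$. Evaluating $G(a,a) = a\int_0^a f_1(a^2-y^2)\,dy + \int_0^a g_1(a^2-y^2)\,dy$ and then applying the change of variables $y=\lambda a$, which factors the amplitude out of the integrands, converts these into the normalized integrals $c_1(a)$ and $c_0(a)$. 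Dividing by $a$ yields the speed-amplitude relation $c = a c_1(a) + c_0(a)$, identifying $g_0(a)=ac_1(a)+c_0(a)$ in \eqref{fg_speed_ampl_rel}.

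The remaining assertion, that $c$ is non-zero precisely when $c_1\neq 0$ or $c_0\neq 0$, follows from the observation that $a c_1(a)$ is odd in $a$ whereas $c_0(a)$ is even, so the two contributions cannot cancel identically unless both vanish. I do not expect a genuine obstacle here, since the whole argument is a direct verification already foreshadowed in the discussion preceding the statement. The only points requiring care are the parity bookkeeping, namely confirming that $F$ is even and $G$ is odd in $u_x$ so that condition \eqref{peakon_cond_F} is automatic and condition \eqref{peakon_cond_G} reduces to a single evaluation at $u_x=a$, together with the scaling substitution $y=\lambda a$ that normalizes the amplitude out of the integrals defining $c_1$ and $c_0$. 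Continuity of $f_1$ and $g_1$ is all that is needed for the antiderivatives $F$, $G$ and the integrals $c_1$, $c_0$ to be well defined.
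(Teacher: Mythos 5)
Your proposal is correct and follows essentially the same route as the paper: apply the two existence conditions of Theorem~\ref{thm:single_peakon}, observe that $F$ is even and $G$ is odd in $u_x$ so that condition \eqref{peakon_cond_F} is automatic and \eqref{peakon_cond_G} reduces to $c=G(a,a)/a$, and then rescale $y=\lambda a$ to obtain $c=ac_1(a)+c_0(a)$. Your parity argument for the final non-vanishing claim (that the odd part $ac_1$ and the even part $c_0$ cannot cancel identically) is likewise the argument the paper uses.
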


The speed-amplitude relation \eqref{fg_Hamil_fam_speed_amplitude} 
is linear iff $f_1$ is a constant and $g_1$ is zero. 
As a result, in general, this relation is nonlinear,
where both $c_1(a)$ and $c_0(a)$ are non-constant even functions of $a$. 
This has some interesting consequences for the speed properties of 
peakons ($a=a_+>0$)  and anti-peakons ($a=a_-<0$). 

When peakons and anti-peakons with the same absolute amplitude $|a|=a_+=-a_-$ 
are considered, 
their respective speeds are
$c_+= |a| c_1(|a|) + c_0(|a|)$
and $c_- = -|a|c_1(|a|) +c_0(|a|)$,
whereby $c_+ - c_- = 2|a| c_1(|a|)$ is their speed difference. 
Therefore, in the case of the CH-subfamily $g_1=0$, 
peakons and anti-peakons have opposite speeds:
$c_+ = |a| c_1(|a|) = -c_-$. 
In contrast, in the case of the mCH-subfamily $f_1=0$, 
peakons and anti-peakons have equal speeds:
$c_+ = c_0(|a|) = c_-$. 

In general, 
we see that all peakons will move to the right, $c_+>0$, 
if $f_1(y)$  and $g_1(y)$ are non-negative functions for $y>0$,
since this implies both $c_0(a_+)$ and $c_1(a_+)$ are positive, 
so then $c_+=a_+c_1(a_+)+c_0(a_+)$ is a sum of two positive terms. 
In contrast, 
the direction of anti-peakons depends on the relative magnitudes of $f_1(y)$  and $g_1(y)$ for $y>0$, 
since if $f_1(y)$  and $g_1(y)$ are non-negative functions for $y>0$
then $c_- = c_0(|a_-|)-|a_-|c_1(|a_-|)$ is a difference of two positive terms. 

Because single (anti) peakons are travelling waves, 
their $H^1$ norm 
along with their momentum and energy 
will be trivially conserved. 
We obtain 
\begin{gather}
||u_{(a)}||_{H^1} = \sqrt{2}|a|,
\label{peakon_H1norm}
\\
\tilde M(u_{(a)}) = a,
\quad
\tilde H(u_{(a)}) = a^2\int_0^1 \big( g_1(a^2y)\, \arctanh(\sqrt{1-y}) +|a|f_1(a^2 y) \sqrt{1-y}\big)\,dy . 
\label{peakon_mom_ener}
\end{gather}

\subsection{Minimizer principle}

The peakon travelling waves obtained in Proposition~\ref{prop:fg_Hamil_fam_singlepeakon}
are solutions $u(t,x)=U(x-ct)$ of the weak travelling-wave equation \eqref{ode_weak}. 
This weak ODE is readily verified to have the form 
\begin{equation}\label{fg_fam_ode_weak}
0 = \int_{-\infty}^{\infty} \left( cU'(\psi-\psi'')  + \psi' \tilde E_{U} + \psi'' \tilde E_{U'} \right) \, d\xi
\end{equation}
where $\tilde E$ is the energy density \eqref{weak_ener_dens} 
evaluated for travelling waves. 
Here $\psi(\xi)$ is a test function, with $\xi=x-ct$. 
If we write $\phi=\psi'$, 
then in equation \eqref{fg_fam_ode_weak} 
the first term can be expressed as 
$\int_{-\infty}^{\infty} cU'(\psi-\psi'')\, d\xi 
= -\int_{-\infty}^{\infty} c(U\phi + U'\phi')\, d\xi
= -\tfrac{1}{2}c\delta_\phi ||U||^2_{H^1}$,
while the second and third terms are similarly given by 
$-\int_{-\infty}^{\infty} \big( \phi \tilde E_{U} + \phi' \tilde E_{U'} \big) \, d\xi
= -\delta_\phi \tilde H(U)$
where $\tilde H(U)$ is the energy integral \eqref{ener_integral} evaluated for travelling waves. 
Thus, 
the weak travelling-wave equation \eqref{ode_weak} 
has an equivalent formulation as a weak variational principle 
\begin{equation}
0= \delta_\phi \big( \tfrac{1}{2}c||U||^2_{H^1} + \tilde H(U) \big) 
\end{equation}
with $\phi(\xi)$ being an arbitrary test function. 

This is a counterpart of the variational principle \eqref{weak_varprinc_fg_fam} 
for weak solutions of the Hamiltonian family of multi-peakon equations \eqref{fg_CH_mCH_fam}. 
It can be used as a starting point to prove stability of peakon travelling waves. 
In particular, whenever the nonlinearities $f_1(u^2-u_x^2)$ and $g_1(u^2-u_x^2)$ 
are such that the energy integral $\tilde H(u)$ is positive definite, 
the functional $\tfrac{1}{2}c||U||^2_{H^1} + \tilde H(U)$ will also be positive definite.
If peakon travelling waves are the ground state (namely, minimizers) of this functional,
then they will be stable. 

A natural conjecture is that peakon travelling waves are the solutions of the minimizer principle
\begin{equation}
I := \inf_{U(\xi)} ||U||_{H^1} 
\quad\text{such that}\quad
\tilde H(U) + \tfrac{1}{2}c||U||^2_{H^1} = h(a) + c a^2 
\end{equation}
where $h(a)=\tilde H(u_{(a)})$ 
denotes the energy of peakon travelling waves \eqref{peakon_mom_ener}.

In the case of the CH equation \eqref{CH_hamil_u}
when $f_1=1$, $g_1=0$,
this minimizer principle reduces to the minimization problem \cite{ConMol-2001}
for Camassa-Holm peakons, 
for which peakon travelling waves are known to be unique solution (up to translations). 

A similar proof that peakon travelling waves are the unique solution of the minimization problem for the general Hamiltonian family \eqref{fg_CH_mCH_fam},
as well as a proof of stability, 
will be left for subsequent work.

\section{One-parameter generalizations of CH and mCH equations}\label{sec:examples}

The family of Hamiltonian peakon equations \eqref{fg_Hamil_fam} 
involves two arbitrary functions $f_1$ and $g_1$ of $u^2-u_x^2$. 
This family provides a wide generalization of 
both the CH equation \eqref{CH_hamil_u} and the mCH equation \eqref{FORQ_hamil_u}.
It can be viewed as an analog of the Hamiltonian family of generalized KdV equations 
\begin{equation}\label{KdV_fam}
u_t = f(u)u_x +u_{xxx} =D_x \Delta (\delta H/\delta u),
\quad
H = \int_{-\infty}^{\infty} ( F(u) - \tfrac{1}{2} u_x^2 )\, dx
\end{equation}
where $F'=f$. 
This generalized KdV family has a scaling invariant subfamily 
given by the gKdV equation \eqref{gKdV},
which involves a general nonlinearity power $p\neq 0$. 

We can obtain an analogous scaling invariant subfamily of 
Hamiltonian peakon equations \eqref{fg_Hamil_fam} 
by taking the functions $f_1$ and $g_1$ to have a power form:
$f_1(u^2-u_x^2) = a (u^2-u_x^2)^{p-1}$, 
$g_1(u^2-u_x^2) = b (u^2-u_x^2)^{q-1}$, 
where $a$, $b$, $p$, $q$ are constants. 
This yields 
$m_t + a u_x(u^2-u_x^2)^{p-1} m + ( a u(u^2-u_x^2)^{p-1} m + b (u^2-u_x^2)^{q-1} m )_x =0$
which will be invariant under the group of scaling transformations 
$u\rightarrow \lambda u$ and $t\rightarrow \lambda^{-k-1} t$ 
($\lambda\neq 0$)
if the nonlinearity powers are related by $2(q-p)=1$,
with $k = 2p-2=2q-3$. 

Hence, we have a three-parameter family of scaling-invariant 
Hamiltonian peakon equations 
\begin{equation}\label{CH-FORQ-fam}
m_t + a u_x(u^2-u_x^2)^{k/2} m + ( a u(u^2-u_x^2)^{k/2} m + b (u^2-u_x^2)^{(k+1)/2} m )_x =0
\end{equation}
which has the Hamiltonian structure \eqref{fg_Hamil_fam}--\eqref{fg_Hamil},
where the Hamiltonian is given by 
\begin{equation}\label{CH-FORQ-fam_hamil}
H = \int_{-\infty}^{\infty} 
\left( \tfrac{a}{k+2} (u^2-u_x^2)^{1/2} + \tfrac{b}{k+3} u  \right) 
(u^2-u_x^2)^{(k+1)/2} m \, dx . 
\end{equation}
This family \eqref{CH-FORQ-fam}--\eqref{CH-FORQ-fam_hamil}
unifies the CH equation ($k=0$, $b=0$) and the mCH equation ($k=1$, $a=0$), 
up to a scaling of $u$. 
It represents a close analog of the gKdV equation \eqref{gKdV}.

\subsection{Generalized CH equation}

By putting $k=2p-2$, $b=0$, $a=1$ 
in the three-parameter family \eqref{CH-FORQ-fam}--\eqref{CH-FORQ-fam_hamil},
we obtain the one-parameter family of generalized CH equations \eqref{gCH}. 
This gCH family can be written equivalently as 
\begin{equation}\label{gCH'}
m_t + \big( \tfrac{1}{2p}(u^2-u_x^2)^{p} + u(u^2-u_x^2)^{p-1} m \big)_x = 0 . 
\end{equation} 
Like the relationship between 
the gKdV equation \eqref{gKdV} and the ordinary KdV equation, 
the gCH equation \eqref{gCH} reduces to the CH equation \eqref{CH}
when $p=1$ and retains one Hamiltonian structure of the CH equation, 
\begin{equation}\label{gCH_hamil}
m_t = -D_x\Delta(\delta H_{\text{gCH}}/\delta m), 
\quad
H_{\text{gCH}} = \int_{-\infty}^{\infty} \tfrac{1}{2p} (u^2 - u_x^2)^{p} m \, dx . 
\end{equation} 
For all $p\geq 1$, the gCH equation \eqref{gCH} possesses 
peakon travelling-wave solutions and multi-peakon solutions.
Its strong solutions have conserved 
momentum \eqref{mom_integral}, 
energy \eqref{ener_integral}, 
and $H^1$ norm \eqref{H1_integral}. 
For setting up the peakon equations, 
we note that $f=u_xf_1$ and $g= uf_1$ with $f_1 = (u^2-u_x^2)^{p-1}$. 

Single peakon and anti-peakon solutions $u=a e^{-|x-ct|}$ 
(with $a>0$ and $a<0$, respectively) 
are given by the speed-amplitude relation
\begin{equation}\label{gCH-speed}
c(a) = \gamma_p a^{2p-1},
\quad
\gamma_p = \frac{\sqrt{\pi}}{2}\frac{\Gamma(p)}{\Gamma(p+1/2)},
\end{equation}
as obtained from Proposition~\ref{prop:fg_Hamil_fam_singlepeakon}. 
We see that $c(a)$ is an odd function of $a$,
and as a result, 
peakons and anti-peakons have the same speed but move in opposite directions
(peakons to the right, and anti-peakons to the left). 

From Theorem~\ref{thm:multi_peakon}, 
multi peakon and anti-peakon solutions 
$u = \sum_{i=1}^{N} \alpha_i(t) e^{-|x-\beta_i(t)|}$ for $N\geq 2$
are described by the dynamical system 
\begin{equation}\label{gCH-multi-peakon}
\begin{aligned}
\dot\alpha_i & = \tfrac{1}{4p}( H^+(U_i,V_i)  -H^-(U_i,V_i) ), 
\\
\dot\beta_i & = \tfrac{1}{2}(U_i+\alpha_i)( (V_i+\alpha_i)\hat H^+(U_i,V_i) - (V_i-\alpha_i)\hat H^-(U_i,V_i) )/\alpha_i ,
\end{aligned}
\quad
i=1,2,\ldots,N, 
\end{equation} 
with 
\begin{equation}\label{Hnotation}
\begin{aligned}
H^\pm(U_i,V_i) &= (U_i^2 -(V_i \pm \alpha_i)^2)^p 
=\sum_{j=0}^{p} (-1)^j\tbinom{p}{j} U_i^{2(p-j)} (V_i\pm\alpha_i)^{2j} ,
\\
\hat H^\pm(U_i,V_i) & = \int_0^1 (U_i^2 -(V_i\pm\alpha_i)^2\lambda^2)^p\,d\lambda
=\sum_{j=0}^{p} \tfrac{(-1)^j}{2j+1}\tbinom{p}{j} U_i^{2(p-j)} (V_i\pm\alpha_i)^{2j} ,
\end{aligned}
\end{equation}
where $U_i$ and $V_i$ are given in terms of the dynamical variables $(\alpha_j(t),\beta_j(t))$ by expression \eqref{UV}. 
The interaction terms in this dynamical system for $p\geq 2$ are 
considerably more complicated than for $p=1$, 
and qualitatively new features turn out to occur,
which we will investigate in section~\ref{sec:interactions}.

\subsection{Generalized mCH equation}

We put $k=2p-1$, $a=0$, $b=1$ 
in the three-parameter family \eqref{CH-FORQ-fam}--\eqref{CH-FORQ-fam_hamil},
which yields a one-parameter family of generalized mCH equations \eqref{gmCH}. 
This gmCH family reduces to the ordinary mCH equation \eqref{FORQ}
when $p=1$, and retains one of its Hamiltonian structures
\begin{equation}\label{gFORQ_hamil}
m_t = -D_x\Delta(\delta H_{\text{gmCH}}/\delta m), 
\quad
H_{\text{gmCH}} = \int_{-\infty}^{\infty} \tfrac{1}{2(p+1)}u(u^2-u_x^2)^p m \, dx . 
\end{equation}
For all $p\geq 1$, this equation \eqref{gCH} possesses 
peakon travelling-wave solutions and multi-peakon solutions.
Its strong solutions have conserved 
momentum \eqref{mom_integral}, 
energy \eqref{ener_integral}, 
and $H^1$ norm \eqref{H1_integral}. 
To set up the peakon equations, 
we note that $f=0$ and $g=g_1= (u^2-u_x^2)^{p}$. 

Single peakon and anti-peakon solutions $u=a e^{-|x-ct|}$ 
(with $a>0$ and $a<0$, respectively) 
are given by the speed-amplitude relation
\begin{equation}\label{gmCH-speed}
c(a) = \gamma_{p+1} a^{2p}
\end{equation}
as obtained from Proposition~\ref{prop:fg_Hamil_fam_singlepeakon}
(where $\gamma_p$ is given in equation \eqref{gCH-speed}). 
%\gamma_p = \frac{\sqrt{\pi}}{2}\frac{\Gamma(p+1)}{\Gamma(p+3/2)}
We see that $c(a)$ is a positive, even function of $a$,
and hence both peakons and anti-peakons have the same speed and move to the right. 

From Theorem~\ref{thm:multi_peakon}, 
multi peakon and anti-peakon solutions 
$u = \sum_{i=1}^{N} \alpha_i(t) e^{-|x-\beta_i(t)|}$ for $N\geq 2$
are described by the dynamical system 
\begin{equation}\label{gmCH-multi-peakon}
\dot\alpha_i = 0 ,
\quad
\dot\beta_i = \tfrac{1}{2}( (V_i+\alpha_i)\hat H^+(U_i,V_i) - (V_i-\alpha_i)\hat H^-(U_i,V_i) )/\alpha_i ,
\quad
i=1,2,\ldots,N
\end{equation} 
using the notation \eqref{Hnotation},
with $U_i$ and $V_i$ being given by expressions \eqref{UV}
in terms of the dynamical variables $(\alpha_j(t),\beta_j(t))$. 
Similarly to the gCH equation, 
this dynamical system turns out to exhibit 
qualitatively new features for $p\geq 2$ 
compared to the ordinary mCH equation with $p=1$. 
We will investigate these features in section~\ref{sec:interactions}.

\subsection{Unification of generalized CH and mCH equations}

We can unify the gCH equation \eqref{gCH} and the gmCH equation \eqref{gmCH}
into a single one-parameter family by choosing the coefficients $a$ and $b$ 
to be suitable functions of $k$ 
in the three-parameter family \eqref{CH-FORQ-fam}--\eqref{CH-FORQ-fam_hamil}. 
Specifically, $a(k)$ and $b(k)$ need to satisfy 
$a(0)\neq 0$, $a(1)=0$, $b(0)=0$, $b(1)\neq 0$. 
For example, simple choices are $a(k)=1-k$ and $b(k)= k$. 
The resulting unified family
\begin{equation}
m_t + (1-k) u_x(u^2-u_x^2)^{k/2} m + ( ((1-k) u(u^2-u_x^2)^{k/2}  + k (u^2-u_x^2)^{(k+1)/2}) m )_x =0
\end{equation}
has the same properties as the gCH and gmCH equations.

\section{New behaviour in peakon interactions for higher nonlinearities}\label{sec:interactions}

The qualitative behaviour of (anti-) peakon solutions of 
the gCH equation \eqref{gCH} and the gmCH equation \eqref{gmCH}
when the nonlinearity power is $p=1,2$
will now be investigated. 
There turns out to be a significant difference in how the (anti-) peakons interact 
in the case $p=2$ compared to the case $p=1$. 

To proceed, we will consider the $N=2$ peakon solutions of both equations, 
by using the respective dynamical systems \eqref{gCH-multi-peakon} and \eqref{gmCH-multi-peakon}. 
Specifically, 
we will look at how the separation between two (anti-) peakons 
behaves with time $t$.

\subsection{2-peakon solutions of the gCH equation}

The $p=1$ case of the gCH equation \eqref{gCH} is the ordinary CH equation. 
$N=2$ peakon solutions \eqref{multi-peakon} 
are given by the dynamical system 
\begin{equation}
\dot\alpha_1 = -\dot\alpha_2 = \sgn(\beta_{1,2}) \alpha_1\alpha_2 e^{-|\beta_{1,2}|},
\quad
\dot\beta_1 = \alpha_1 + \alpha_2 e^{-|\beta_{1,2}|},
\quad
\dot\beta_2 = \alpha_2 + \alpha_1 e^{-|\beta_{1,2}|}
\end{equation}
where $\beta_{1,2} = \beta_1-\beta_2$ is the separation between 
the two (anti-) peakons or the peakon and the anti-peakon. 
This system has two constants of motion
$M=\alpha_1+\alpha_2$ 
and $E= \tfrac{1}{2}( \alpha_{1,2}^2(1-e^{-|\beta_{1,2}|}) + M^2e^{-|\beta_{1,2}|} )>0$,
where $\alpha_{1,2} = \alpha_1 -\alpha_2$ is the amplitude difference between the two (anti-) peakons. 
These quantities arise from the conserved momentum \eqref{mom_integral} and energy \eqref{ener_integral} evaluated for two (anti-) peakons. 
Using the constants of motion, 
we obtain 
\begin{equation}\label{CH-p=1-amplitude}
\alpha_{1,2} = \pm \sqrt{2E-M^2e^{-|\beta_{1,2}|}}/\sqrt{1- e^{-|\beta_{1,2}|}}
\end{equation}
and 
\begin{equation}\label{CH-p=1-separation}
%\dot\alpha_{1,2} = \tfrac{1}{2}\sgn(\beta_{1,2})  (E-\alpha_{1,2}^2) ,
\dot\beta_{1,2} =\pm\sgn(\beta_{1,2})  \sqrt{1- e^{-|\beta_{1,2}|}}\sqrt{2E-M^2 e^{-|\beta_{1,2}|}} . 
\end{equation}
This ODE can be straightforwardly integrated. 
The qualitative behaviour of $\beta_{1,2}(t)$, however, can be more easily found by 
looking at the collision points, defined by $\beta_{1,2}=0$, 
and the turning points, defined by $\dot\beta_{1,2}=0$. 
There are essentially two different types of behaviour, 
depending on the constant of motion 
\begin{equation}
\mu = \frac{2E}{M^2} -1 . 
\end{equation}
Note, since $\mu M^2 = -4\alpha_1\alpha_2(1-e^{-|\beta_{1,2}|})$, 
then $\mu>0$ implies $\alpha_1\alpha_2<0$, 
representing a peakon and an anti-peakon, 
and $\mu<0$ implies $\alpha_1\alpha_2>0$, 
representing two peakons or two anti-peakons. 
(The case $\mu=0$ corresponds to $\alpha_1=0$ or $\alpha_2=0$, which is trivial.)

When $\mu >0$, 
the only turning point is $\beta_{1,2}=0$ which is also a collision point. 
The constant of motion $\mu = ((\alpha_{1,2}/M)^2-1)(1-e^{-|\beta_{1,2}|})$
then shows that $\beta_{1,2}\to 0$ iff $|\alpha_{1,2}|\to \infty$.
Moreover, from the dynamical ODE \eqref{CH-p=1-separation}, 
we see that $\beta_{1,2}(t)$ will reach zero at a finite time. 
This means that a peakon and an anti-peakon will collide 
such that their relative amplitude blows up. 
See Figure~\ref{fig:gCH-p=1-muispos}. 

A different behaviour occurs when $\mu<0$.
There is an turning point $|\beta_{1,2}|=-\ln(1+\mu)$, 
at which $\alpha_{1,2}=0$. 
The dynamical ODE \eqref{CH-p=1-separation} shows that 
this turning point will be reached at a finite time,
and $|\beta_{1,2}|$ will be increasing before and after this time. 
Consequently, 
this represents two (anti-) peakons that are approaching each other, 
reach a minimum separation given by $|\beta_{1,2}|=-\ln(1+\mu)$
where their amplitudes are equal,
and move away from each other. 
In particular, their separation goes to infinity in the asymptotic past and future. 
This behaviour describes an elastic ``bounce'' interaction. 
See Figure~\ref{fig:gCH-p=1-muisneg}. 

\begin{figure}[H]
\centering
\begin{subfigure}[t]{.45\textwidth}
\includegraphics[width=\textwidth]{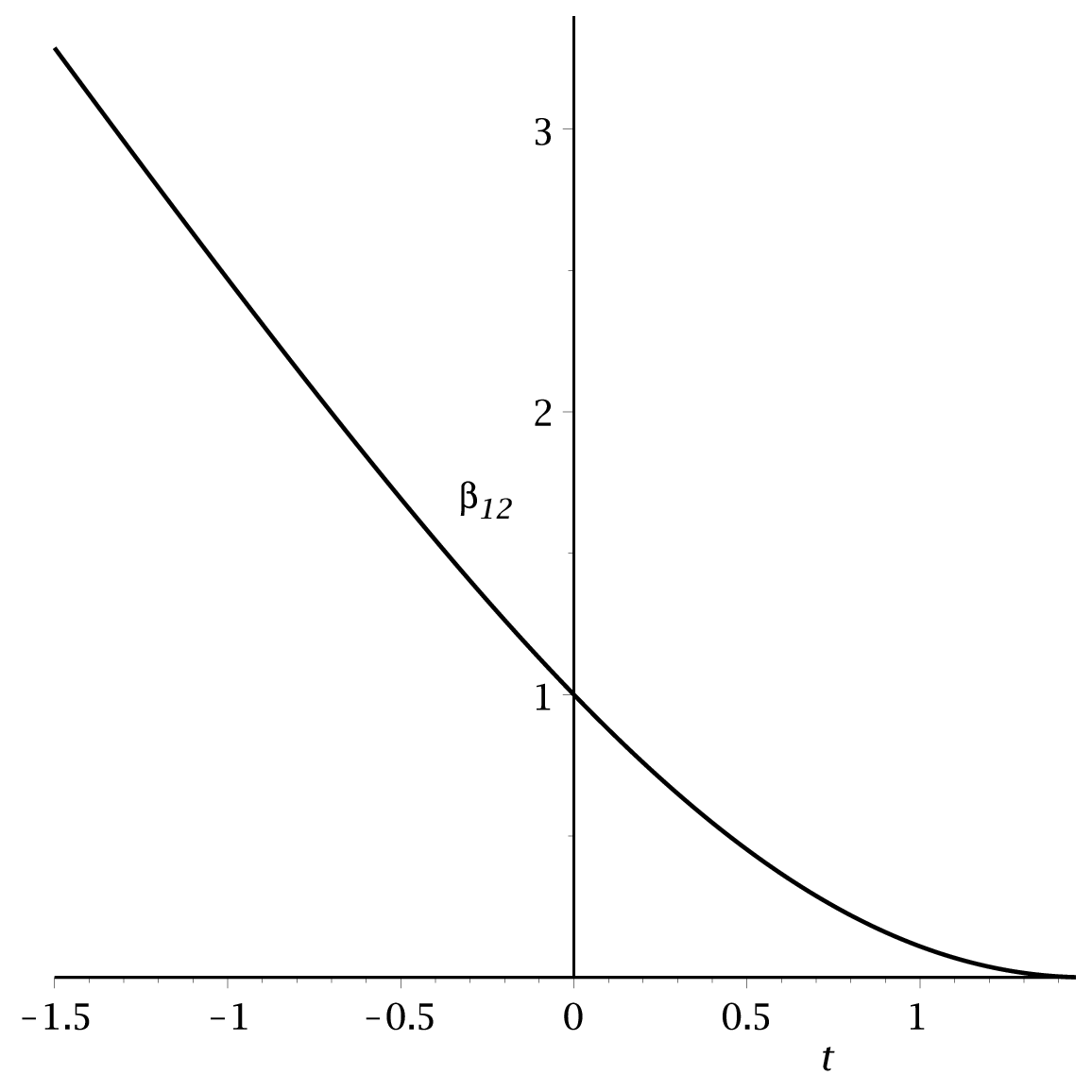}
\captionof{figure}{Separation}
%\label{fig:gCH-p=1-muispos-position}
\end{subfigure}%
\begin{subfigure}[t]{.45\textwidth}
\includegraphics[width=\textwidth]{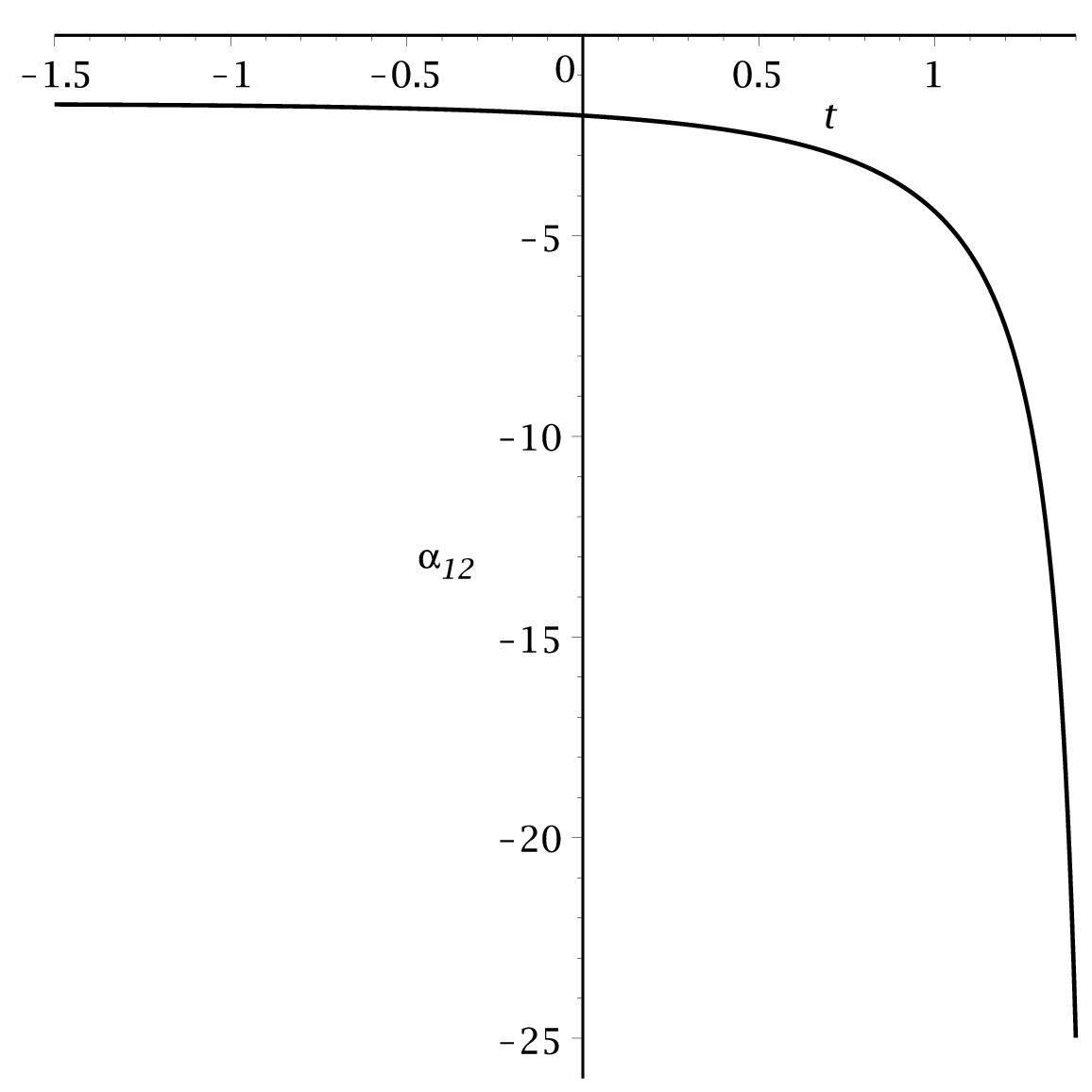}
\captionof{figure}{Relative amplitude}
%\label{fig:gCH-p=1-muispos-amplitude}
\end{subfigure}
\caption{Relative position and amplitude for CH 2-peakon collision ($\mu>0$)}
\label{fig:gCH-p=1-muispos}
\end{figure}

\begin{figure}[H]
\centering
\begin{subfigure}[t]{.45\textwidth}
\includegraphics[width=\textwidth]{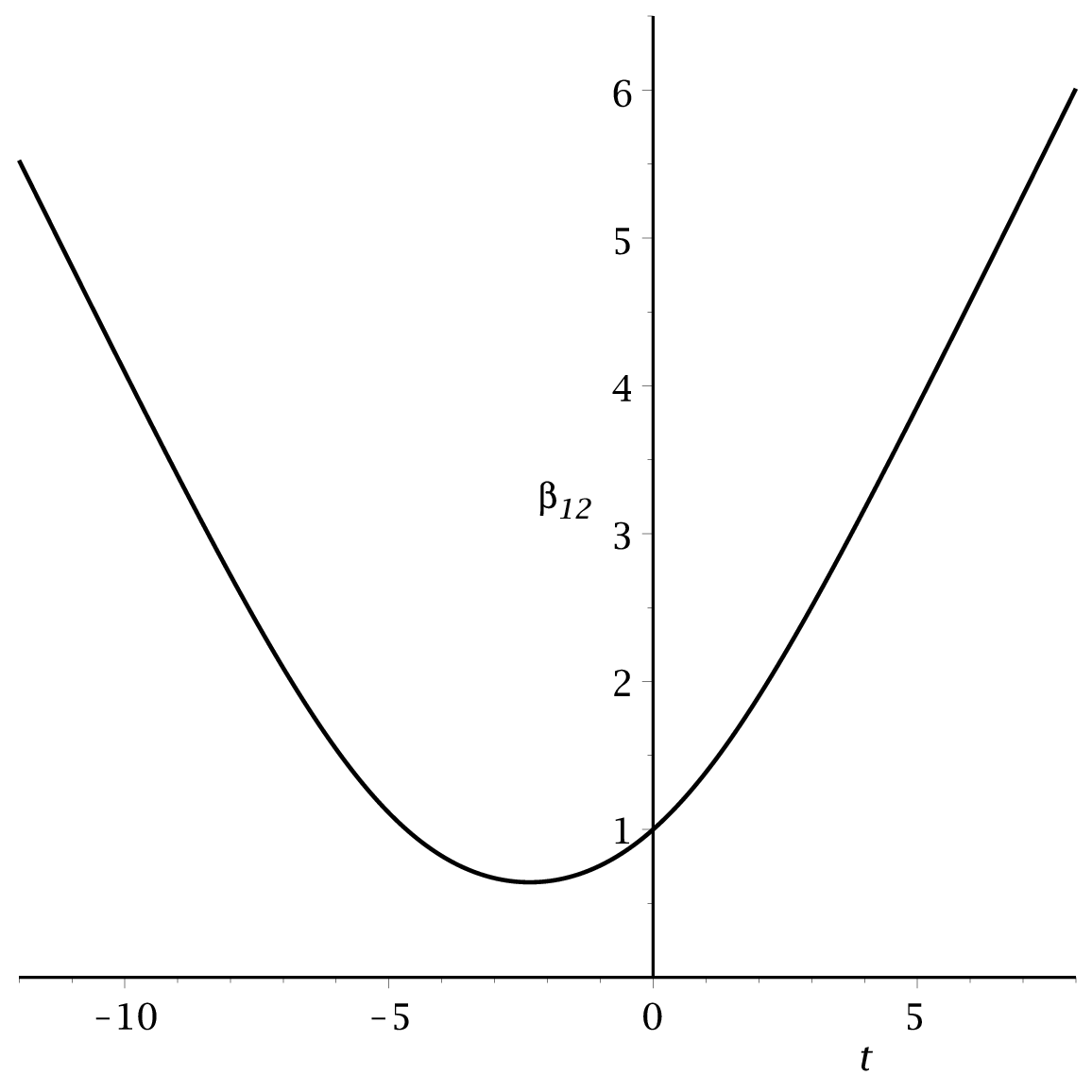}
\captionof{figure}{Separation}
%\label{fig:gCH-p=1-muisneg-position}
\end{subfigure}%
\begin{subfigure}[t]{.45\textwidth}
\includegraphics[width=\textwidth]{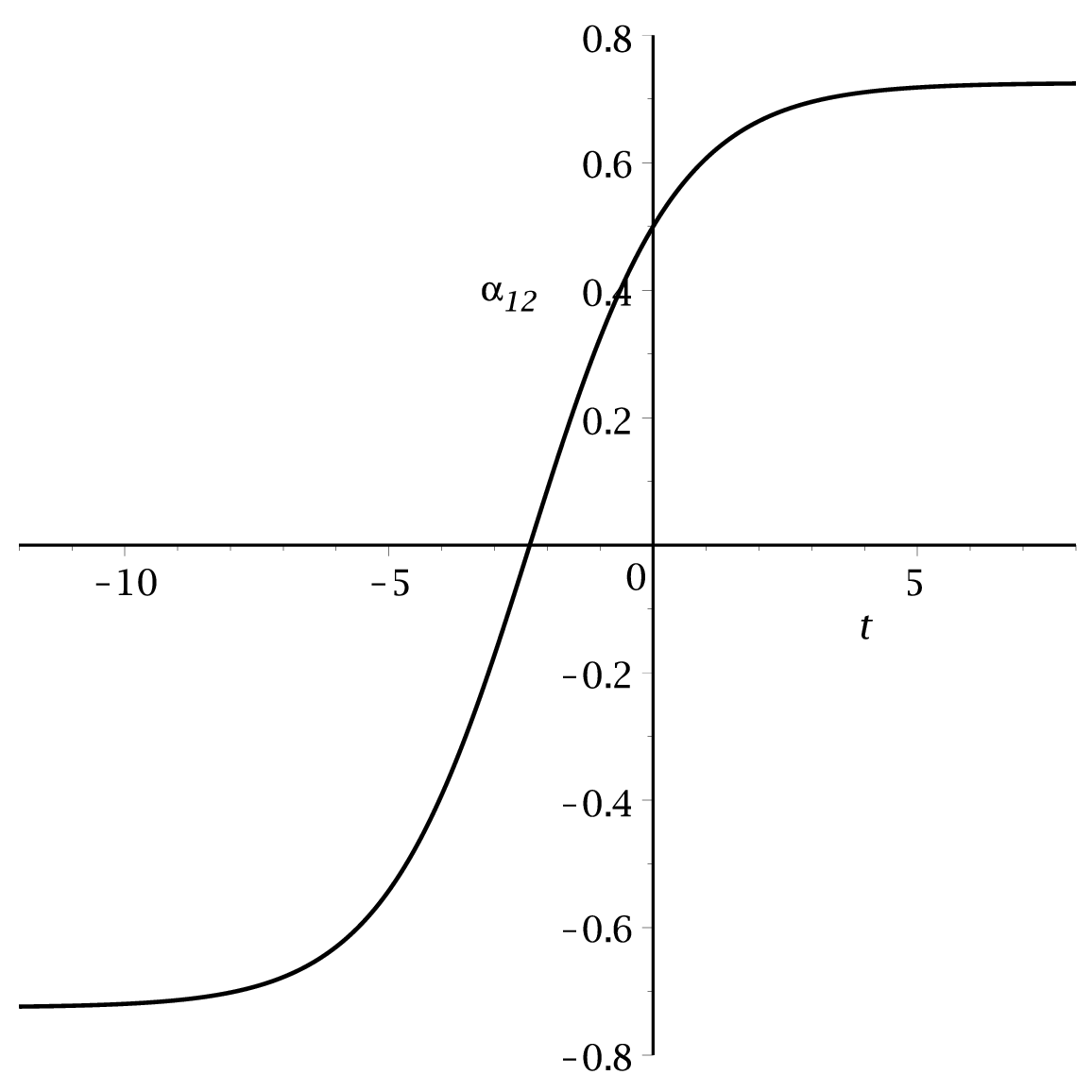}
\captionof{figure}{Relative amplitude}
%\label{fig:gCH-p=1-muisneg-amplitude}
\end{subfigure}
\caption{Relative position and amplitude for CH 2-peakon elastic interaction ($\mu<0$)}
\label{fig:gCH-p=1-muisneg}
\end{figure}

Similar dynamical behaviour is known to occur for all $N\geq 2$, 
using the integrability properties of the CH equation 
\cite{CheLiuPen,BeaSatSzm}. 

Next we consider the case $p=2$, which describes a nonlinear generalization of the CH equation. 
The dynamical system \eqref{gCH-multi-peakon} for $N=2$ in this case is given by 
\begin{gather}
\dot\alpha_1 = -\dot\alpha_2 = 2\sgn(\beta_{1,2}) \alpha_1^2\alpha_2^2 e^{-2|\beta_{1,2}|},
\\
\dot\beta_1 = (\alpha_1 + \alpha_2 e^{-|\beta_{1,2}|})(\tfrac{2}{3}\alpha_1^2 + 2\alpha_1\alpha_2 e^{-|\beta_{1,2}|}), 
\quad
\dot\beta_2 = (\alpha_2 + \alpha_1 e^{-|\beta_{1,2}|})(\tfrac{2}{3}\alpha_2^2 + 2\alpha_1\alpha_2 e^{-|\beta_{1,2}|}) . 
\end{gather}
This system again has the momentum $M=\alpha_1+\alpha_2$ as a constant of motion,
however energy is no longer conserved. 
Instead, a second constant of motion appears when we look at the dynamical equations
for $\alpha_{1,2}$ and $\beta_{1,2}$:
\begin{align}
\dot \alpha_{1,2} &= \tfrac{1}{4}\sgn(\beta_{1,2}) (M^2-\alpha_{1,2}^2)^2e^{-2|\beta_{1,2}|} ,
\label{gCH-aeqn}
\\
\dot \beta_{1,2} & = \tfrac{1}{6} \alpha_{1,2}(3M^2 +\alpha_{1,2}^2 + \tfrac{1}{6}(M^2 -\alpha_{1,2}^2)(4-3 e^{-|\beta_{1,2}|})e^{-|\beta_{1,2}|}) . 
\label{gCH-beqn}
\end{align}
The transformation $e^{|\beta_{1,2}|}= 4(M^2-\alpha_{1,2}^2)y/(3M^2+\alpha_{1,2}^2)$
leads to a separable Abel equation for $y(\alpha_{1,2})$, 
which possesses an explicit first integral. 
The corresponding constant of motion in terms of $\beta_{1,2}$ and $\alpha_{1,2}$ is given by 
\begin{equation}\label{gCH-FI}
C= \frac{3M^2 + \alpha_{1,2}^2 +3(M^2 -\alpha_{1,2}^2)e^{-|\beta_{1,2}|}}{(3M^2 + \alpha_{1,2}^2 +(M^2 -\alpha_{1,2}^2)e^{-|\beta_{1,2}|})^3}.
\end{equation}
This equation represents a cubic polynomial in $\alpha_{1,2}^2$,
which thereby gives $\alpha_{1,2}$ as a function of $\beta_{1,2}$,
yielding a dynamical ODE for $\beta_{1,2}(t)$. 
Rather than work with the resulting ODE, 
we can determine the qualitative behaviour of $\beta_{1,2}(t)$
directly from the pair of ODEs \eqref{gCH-aeqn}--\eqref{gCH-beqn}, 
by looking at the collision points and turning points in terms of $\beta_{1,2}$. 
The various behaviours depend essentially on the constant of motion 
\begin{equation}\label{gCH-com}
\nu = 9M^4C .
\end{equation}
It is straightforward to show that $1>\nu>-\infty$. 
Hereafter, we will write $B_{1,2} = e^{-|\beta_{1,2}|}$ for convenience,
where $0<B_{1,2}\leq 1$. 
Note that $(\alpha_{1,2}/M)^2 >1$ implies $\alpha_1$ and $\alpha_2$ have opposite signs,
and that $(\alpha_{1,2}/M)^2 <1$ implies $\alpha_1$ and $\alpha_2$ have the same sign. 

From the ODE \eqref{gCH-beqn}, 
we find that the turning points $\dot\beta_{1,2}=0$ occur for 
$\alpha_{1,2}=0$ 
and 
$(\alpha_{1,2}/M)^2 = (3B_{1,2}^2 - 4B_{1,2} - 3)/((1-B_{1,2})(1-3B_{1,2}))\geq 13$.
In the first case, 
$B_{1,2}$ is the root of the cubic equation 
$\nu (B_{1,2}+3)^3 =27(B_{1,2}+1)$ 
obtained from combining equations \eqref{gCH-FI} and \eqref{gCH-com}, 
with $1>\nu\geq \tfrac{27}{32}$ due to $0<B_{1,2}\leq 1$. 
In the second case, 
since $3B_{1,2}^2 - 4B_{1,2} - 3 <0$ for $0<B_{1,2}\leq 1$,
we have the condition $1>B_{1,2}>\tfrac{1}{3}$ 
along with the cubic equation 
$48\nu B_{1,2}^2(1-B_{1,2}) = (1-3B_{1,2})^3$
again given by combining equations \eqref{gCH-FI} and \eqref{gCH-com}, 
where $\nu< 0$. 

We also find that the collision points $\beta_{1,2}=0$ occur for 
$(\alpha_{1,2}/M)^2 = (27 -32\nu)/9$,
which requires $\nu\leq\tfrac{27}{32}$. 
The only collision point that coincides with a turning point, 
$\beta_{1,2}=\dot\beta_{1,2}=0$,  
is given by $\alpha_{1,2}=0$ with $\nu=\tfrac{27}{32}$. 

When $\tfrac{27}{32}<\nu <1$, 
there is only a turning point, 
and from the dynamical ODEs \eqref{gCH-aeqn}--\eqref{gCH-beqn} 
we can show that this point is reached at a finite time,
with $\alpha_{1,2}=0$. 
This represents two (anti-) peakons whose separation 
reaches a minimum given by the root of $\nu (3+e^{-|\beta_{1,2}|})^3 =27(1+e^{-|\beta_{1,2}|})$,
where their amplitudes are equal,
and goes to infinity in the asymptotic past and future. 
Hence, the behaviour is an elastic ``bounce'' interaction. 
See Figure~\ref{fig:gCH-p=2-nuispos-tp}. 

A similar behavior occurs when $\nu=\tfrac{27}{32}$,
with the ``bounce'' coinciding with a collision between the (anti-) peakons. 

When $0\leq\nu<\tfrac{27}{32}$, 
there is a collision point but no turning point. 
The collision can occur for either two (anti-) peakons or a peakon and an anti-peakon,
depending on whether $\nu \gtrless \tfrac{9}{16}$ respectively. 
If $\nu>0$ then the separation in the asymptotic past and future is increasing such that 
$\dot\beta_{1,2}\sim \tfrac{1}{6} \alpha_{1,2}(3M^2 +\alpha_{1,2}^2)$ 
and $\alpha_{1,2}^2 \sim 3\big(\sqrt{1/\nu}-1\big)$ are constant,
as shown by the dynamical ODEs \eqref{gCH-aeqn}--\eqref{gCH-beqn}. 
Moreover, in contrast to the case $p=1$, 
$\alpha_{1,2}$ will be non-zero at the time of the collision. 
See Figure~\ref{fig:gCH-p=2-nuispos-cp}. 

But if $\nu=0$, we have $(\alpha_{1,2}/M)^2 = 3(1+B_{1,2})/(3B_{1,2}-1) \geq 3$, 
which requires $B_{1,2}>\tfrac{1}{3}$, and hence $|\beta_{1,2}| < \ln 3$. 
Then the dynamical ODE \eqref{gCH-beqn} shows that 
$|\beta_{1,2}|\to\ln 3$ in a finite time,
and consequently $|\alpha_{1,2}|\to\infty$. 
This behaviour describes a blow up 
in the relative amplitude of a peakon and an anti-peakon, before and after a collision, 
as their separation approaches $|\beta_{1,2}|\to\ln 3$ in a finite time. 
At the collision point, $|\alpha_{1,2}| = \sqrt{3} M$ is non-zero. 
See Figure~\ref{fig:gCH-p=2-nuis0}. 

Finally, when $\nu<0$, 
there is a collision point $\beta_{1,2}=0$, 
and a turning point $0<\beta_{1,2}<\ln 3$ given by the root of 
$48\nu e^{-2|\beta_{1,2}|}(1-e^{-|\beta_{1,2}|}) = (1-3e^{-|\beta_{1,2}|})^3$.
The behaviour in this case is similar to the case $\nu=0$, 
except that the separation between the peakon and the anti-peakon 
increases (before and after the collision) to a maximum at the turning point 
and then decreases to a non-zero limit 
in a finite time such that $|\alpha_{1,2}|\to\infty$. 
See Figure~\ref{fig:gCH-p=2-nuisneg}. 

\begin{figure}[H]
\centering
\begin{subfigure}[t]{.45\textwidth}
\includegraphics[width=\textwidth]{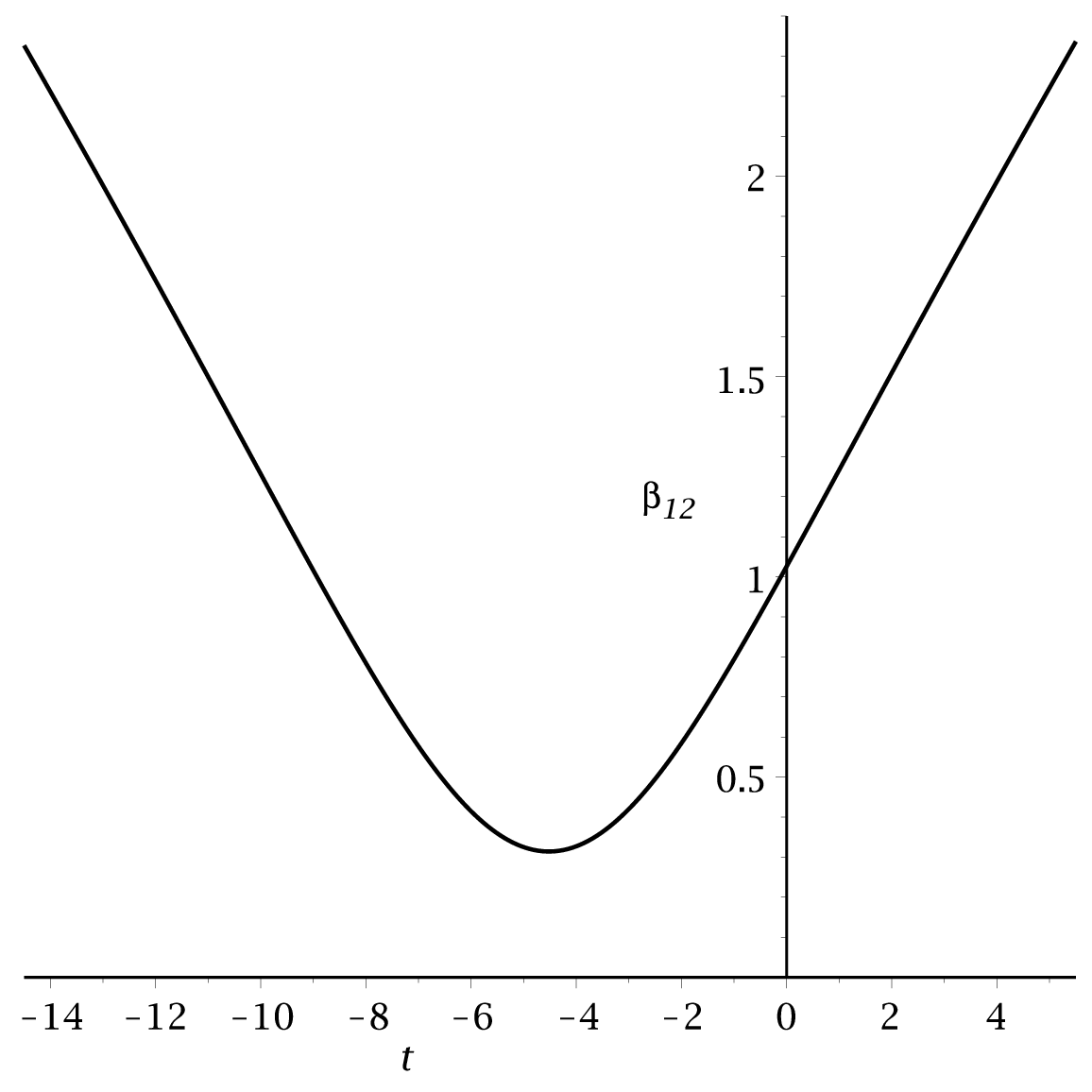}
\captionof{figure}{Separation}
%\label{fig:gCH-p=2-nuispos-tp-position}
\end{subfigure}%
\begin{subfigure}[t]{.45\textwidth}
\includegraphics[width=\textwidth]{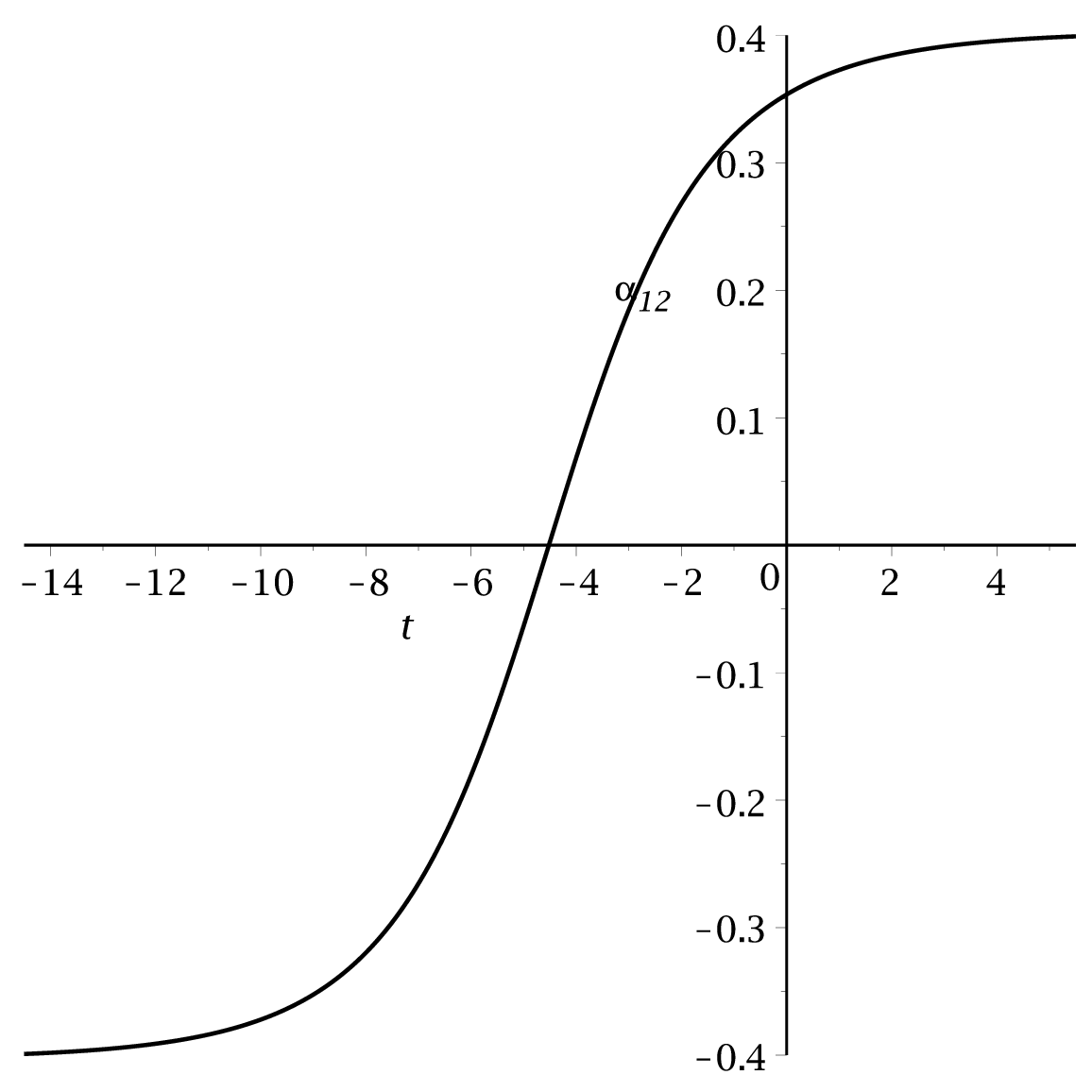}
\captionof{figure}{Relative amplitude}
%\label{fig:gCH-p=2-nuispos-tp-amplitude}
\end{subfigure}
\caption{Relative position and amplitude for $p=2$ gCH 2-peakon interaction when $\tfrac{27}{32}<\nu<1$}
\label{fig:gCH-p=2-nuispos-tp}
\end{figure}

\begin{figure}[H]
\centering
\begin{subfigure}[t]{.45\textwidth}
\includegraphics[width=\textwidth]{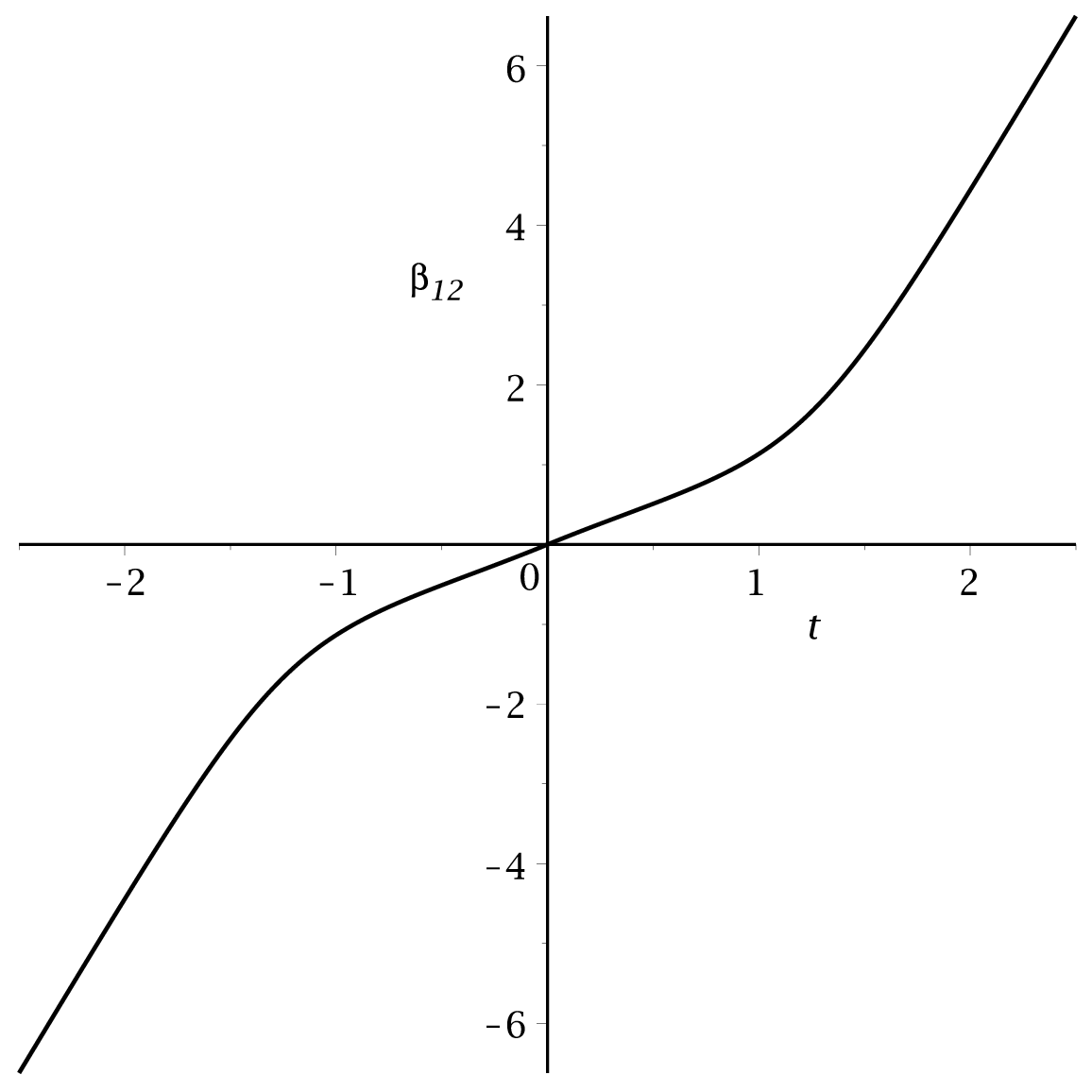}
\captionof{figure}{Separation}
%\label{fig:gCH-p=2-nuispos-cp-position}
\end{subfigure}%
\begin{subfigure}[t]{.45\textwidth}
\includegraphics[width=\textwidth]{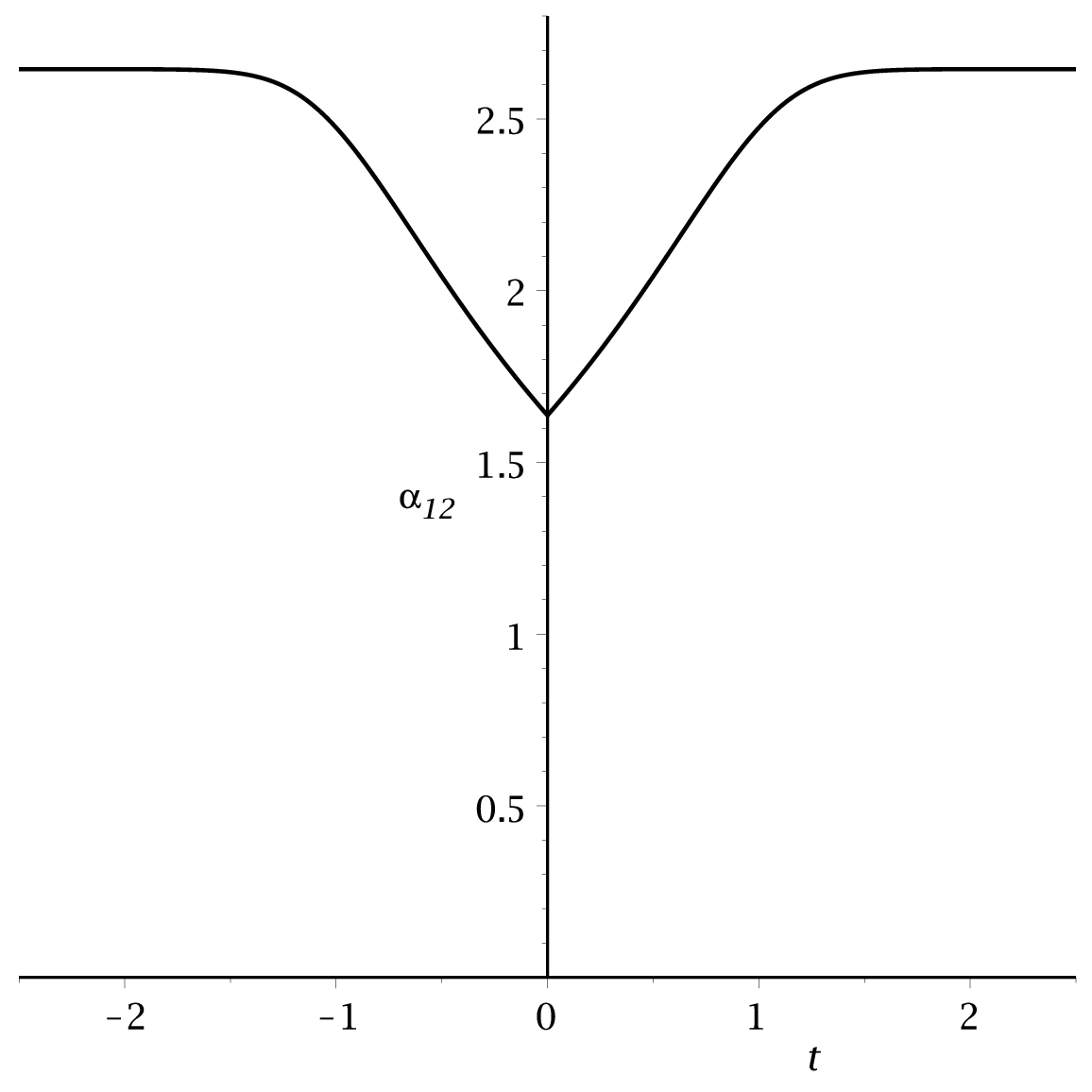}
\captionof{figure}{Relative amplitude}
%\label{fig:gCH-p=2-nuispos-cp-amplitude}
\end{subfigure}
\caption{Relative position and amplitude for $p=2$ gCH 2-peakon interaction when $0<\nu<\tfrac{27}{32}$}
\label{fig:gCH-p=2-nuispos-cp}
\end{figure}

\begin{figure}[H]
\centering
\begin{subfigure}[t]{.45\textwidth}
\includegraphics[width=\textwidth]{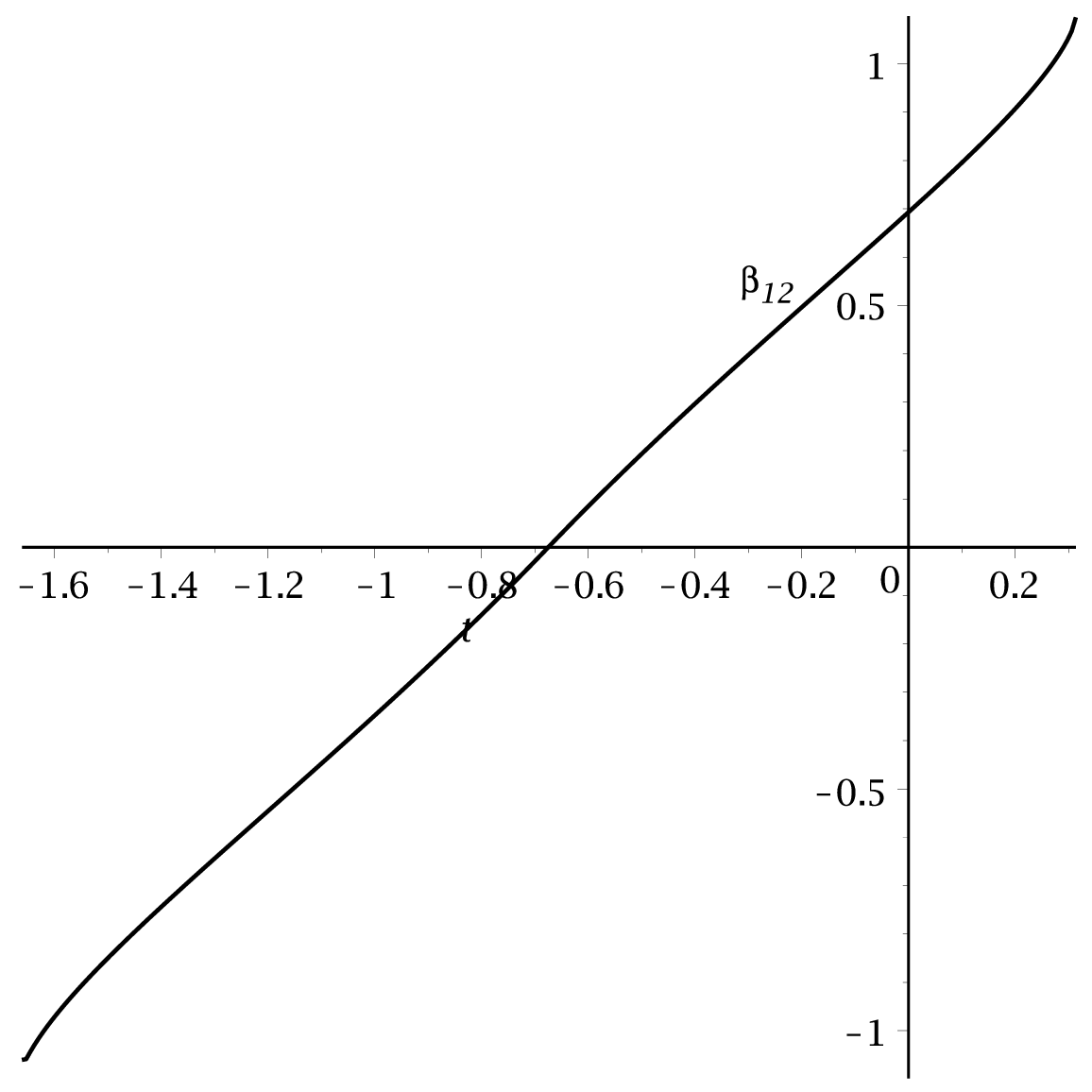}
\captionof{figure}{Separation}
%\label{fig:gCH-p=2-nuis0-position}
\end{subfigure}%
\begin{subfigure}[t]{.45\textwidth}
\includegraphics[width=\textwidth]{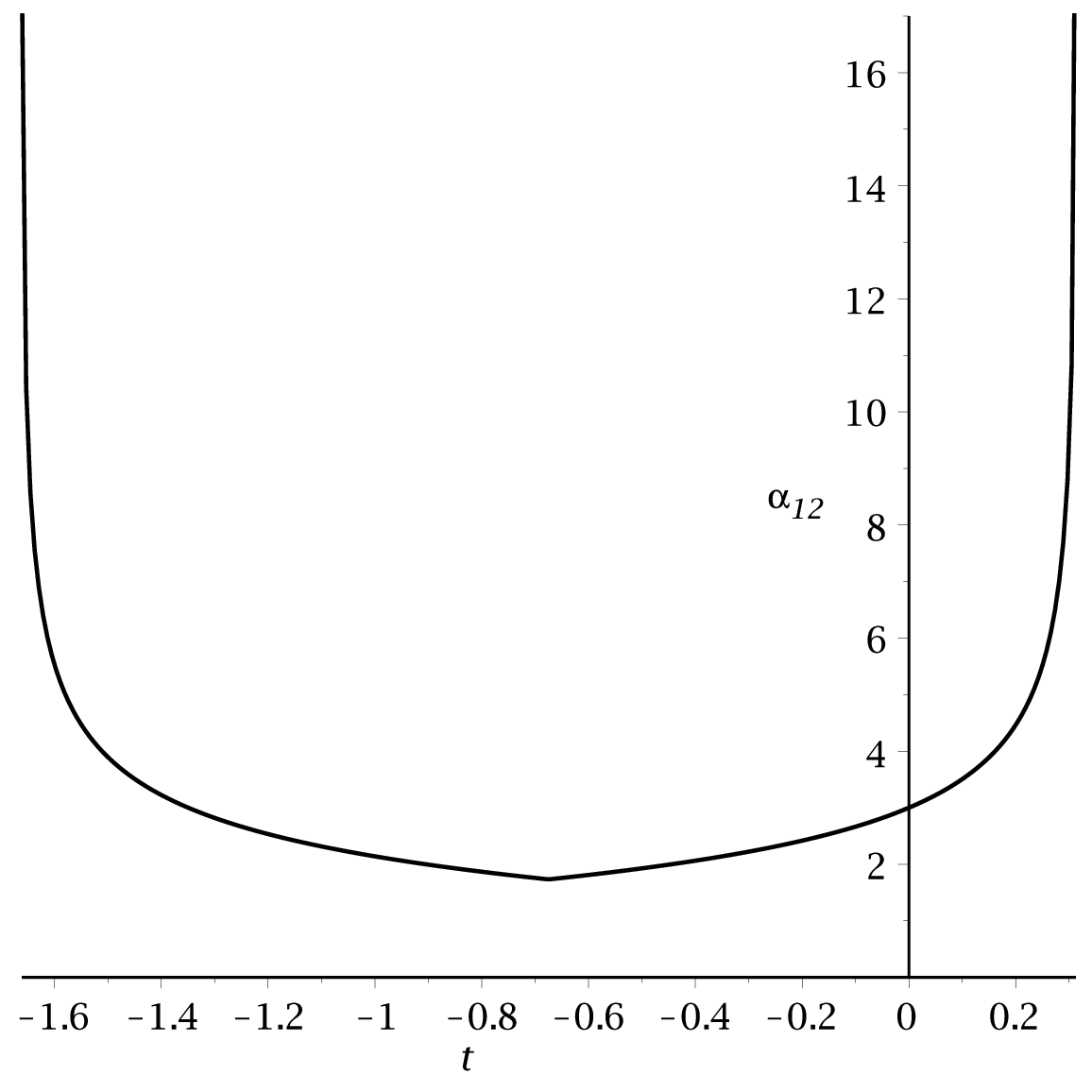}
\captionof{figure}{Relative amplitude}
%\label{fig:gCH-p=2-nuis0-amplitude}
\end{subfigure}
\caption{Relative position and amplitude for $p=2$ gCH 2-peakon interaction when $\nu=0$}
\label{fig:gCH-p=2-nuis0}
\end{figure}

\begin{figure}[H]
\centering
\begin{subfigure}[t]{.45\textwidth}
\includegraphics[width=\textwidth]{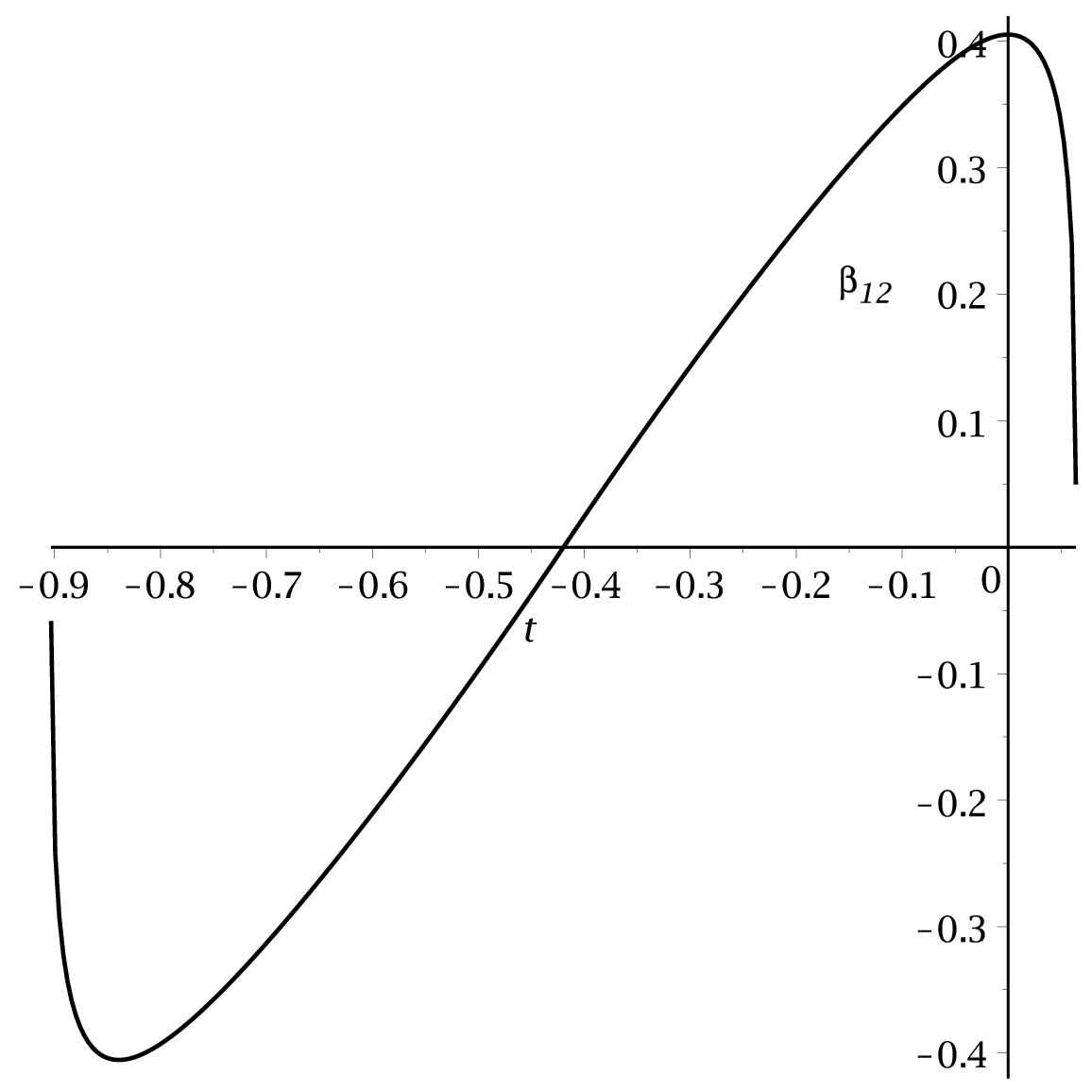}
\captionof{figure}{Separation}
%\label{fig:gCH-p=2-nuisneg-position}
\end{subfigure}%
\begin{subfigure}[t]{.45\textwidth}
\includegraphics[width=\textwidth]{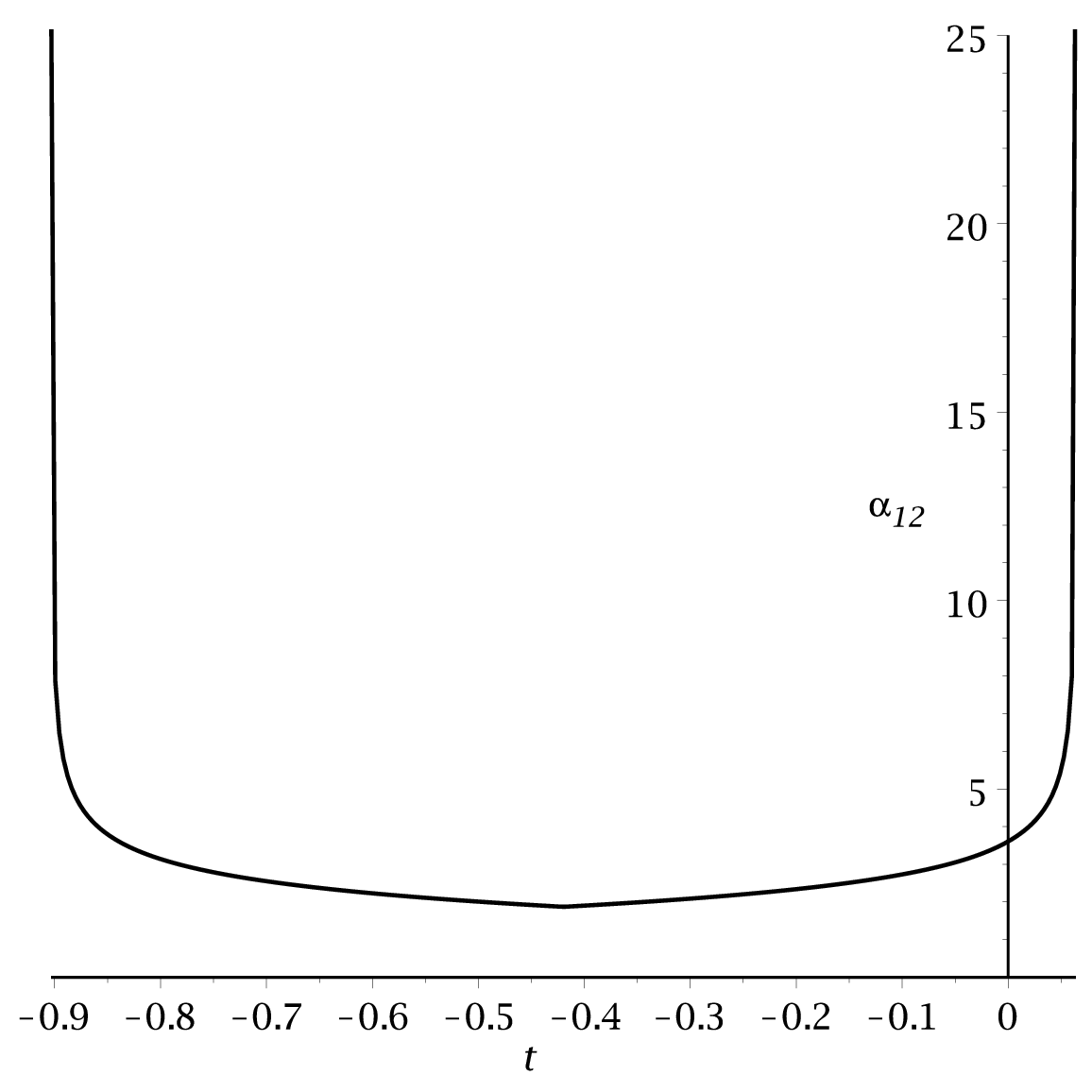}
\captionof{figure}{Relative amplitude}
%\label{fig:gCH-p=2-nuisneg-amplitude}
\end{subfigure}
\caption{Relative position and amplitude for $p=2$ gCH 2-peakon interaction when $\nu<0$}
\label{fig:gCH-p=2-nuisneg}
\end{figure}

These behaviours for $\nu\leq 0$ are strikingly different than 
the blow-up collisions and the elastic ``bounces'' that occur in the ordinary CH equation. 
Further analysis, including explicit solution expressions for $\beta_{1,2}$ and $\alpha_{1,2}$, 
will be given elsewhere.

\subsection{2-peakon solutions of the gmCH equation}

Similarly to the investigation of the gCH equation, 
we will consider the $N=2$ dynamical system in the cases $p=1$ and $p=2$
and examine how the separation between the two (anti-) peakons 
behaves with time $t$. 
Neither of these cases has been extensively explored in the literature to-date. 

The case $p=1$ corresponds to the mCH equation \eqref{FORQ}.
For $N=2$, the dynamical system \eqref{gmCH-multi-peakon} becomes 
\begin{equation}
\dot\alpha_1 = 0 ,
\quad
\dot\alpha_2 = 0 ,
\quad
\dot\beta_1 = \tfrac{2}{3} \alpha_1^2 + 2\alpha_1 \alpha_2 e^{-|\beta_{1,2}|},
\quad
\dot\beta_2 = \tfrac{2}{3} \alpha_2^2 + 2\alpha_1 \alpha_2 e^{-|\beta_{1,2}|}
\end{equation}
where $\beta_{1,2} = \beta_1-\beta_2$ is the separation between 
the two (anti-) peakons or the peakon and the anti-peakon. 
The separation obeys the simple dynamical ODE 
$\dot\beta_{1,2} = \tfrac{2}{3} (\alpha_1^2 -\alpha_2^2) $. 
There are essentially two different types of behaviour, 
depending on the constant of motion 
$\gamma = \tfrac{2}{3} (\alpha_1^2 -\alpha_2^2) $. 

When $\gamma\neq 0$, 
we see that the separation $\beta_{1,2}(t)$ changes linearly in time $t$. 
Hence, there is a collision at a finite time,
and in the asymptotic past and future, 
$\dot\beta_1 \sim \tfrac{2}{3} \alpha_1^2 = c_1$ and $\dot\beta_2 \sim \tfrac{2}{3} \alpha_2^2 = c_2$ 
are the asymptotic speeds of the (anti-) peakons. 
These asymptotic speeds are precisely 
the speeds of the (anti-) peakons in the absence of any interaction,
as shown by the relation \eqref{gmCH-speed}. 
See Figure~\ref{fig:gmCH-p=1-position}. 

In contrast, when $\gamma=0$, 
the separation $\beta_{1,2}(t)$ is time independent. 
This describes two (anti-) peakons, or a peakon and an anti-peakon,
which have equal amplitudes $|\alpha_1|=|\alpha_2|$ 
and equal speeds 
$\dot\beta_1 = \dot\beta_2 = \alpha^2(\tfrac{2}{3} + 2 e^{-|\beta_{1,2}(0)|}) >0$,
where $\alpha = |\alpha_1|=|\alpha_2|$. 

Next, the case $p=2$ corresponds to the gmCH equation \eqref{gmCH},
describing a nonlinear generalization of the mCH equation \eqref{FORQ}.
For $N=2$, 
the dynamical system \eqref{gmCH-multi-peakon} in the case $p=2$ is given by 
\begin{gather}
\dot\alpha_1 = 0 ,
\quad
\dot\beta_1 = \tfrac{8}{15} \alpha_1^2( \alpha_1^2  + 5\alpha_1 \alpha_2 e^{-|\beta_{1,2}|} + 10\alpha_2^2 e^{-2|\beta_{1,2}|} ), 
\\
\dot\alpha_2 = 0 ,
\quad
\dot\beta_2 = \tfrac{8}{15} \alpha_2^2( \alpha_2^2  + 5\alpha_1 \alpha_2 e^{-|\beta_{1,2}|} + 10\alpha_1^2 e^{-2|\beta_{1,2}|} ) . 
\end{gather}
The separation $\beta_{1,2} = \beta_1-\beta_2$ satisfies the dynamical ODE
\begin{equation}\label{separation-p=2}
\dot\beta_{1,2} = \tfrac{8}{15}(\alpha_1^4 -\alpha_2^4) + \tfrac{8}{3} \alpha_1 \alpha_2(\alpha_1^2 -\alpha_2^2) e^{-|\beta_{1,2}|}
= \gamma ( 1+\sigma e^{-|\beta_{1,2}|} )
\end{equation}
where 
\begin{equation}
\gamma = \tfrac{8}{15}(\alpha_1^4 -\alpha_2^4) ,
\quad
\sigma = 5\alpha_1 \alpha_2/(\alpha_1^2 +\alpha_2^2) 
\end{equation}
are constants of motion.
It is straightforward to integrate this ODE to obtain $\beta_{1,2}(t)$ explicitly. 
The qualitative behaviour of $\beta_{1,2}(t)$ depends essentially on both $\gamma$ and $\sigma$. 

First, when $\gamma=0$, 
the amplitudes are equal, $|\alpha_1|=|\alpha_2|$, 
and the separation $\beta_{1,2}(t)$ is time independent,
with the speeds being given by 
$\dot\beta_1 = \dot\beta_2 = \tfrac{8}{15} \alpha^4( 1 + 5 e^{-|\beta_{1,2}(0)|} + 10 e^{-2|\beta_{1,2}(0)|} )>0$, 
where $\alpha = |\alpha_1|=|\alpha_2|$. 
This is the same qualitative behaviour that occurs for $p=1$. 

Next, when $\gamma\neq 0$, 
the behaviour is most easily understood by looking at 
the collision points, defined by $\beta_{1,2}=0$, 
and the turning points, defined by $\dot\beta_{1,2}=0$. 
Note that $\sigma >0$ implies $\alpha_1$ and $\alpha_2$ have the same sign,
and that $\sigma<0$ implies $\alpha_1$ and $\alpha_2$ have opposite signs. 
(The case $\sigma=0$ corresponds to $\alpha_1=0$ or $\alpha_2=0$, which is trivial.)

When $\sigma >-1$, 
we have $1+\sigma e^{-|\beta_{1,2}|} > 0$
which implies there are no turning points. 
Consequently, 
the separation $\beta_{1,2}(t)$ increases in the asymptotic past and future, 
such that asymptotic speeds are $\dot\beta_1 \sim \tfrac{8}{15} \alpha_1^4$
and $\dot\beta_2 \sim \tfrac{8}{15} \alpha_2^4$. 
At a finite time, the separation will be zero. 
This behaviour represents either two (anti-) peakons if $\sigma>0$, 
or a peakon and an anti-peakon if $0>\sigma>-1$, 
that undergo a collision and separate asymptotically to infinity before and after the collision. 
From the amplitude-speed relation \eqref{gmCH-speed},
we see that their asymptotic speeds are precisely 
the speeds of the (anti-) peakons in the absence of any interaction. 
This behaviour is again qualitatively the same as the case $p=1$. 
See Figure~\ref{fig:gmCH-p=2-sigmaispos-position}.

\begin{figure}[H]
\centering
\begin{subfigure}[t]{.45\textwidth}
\includegraphics[width=\textwidth]{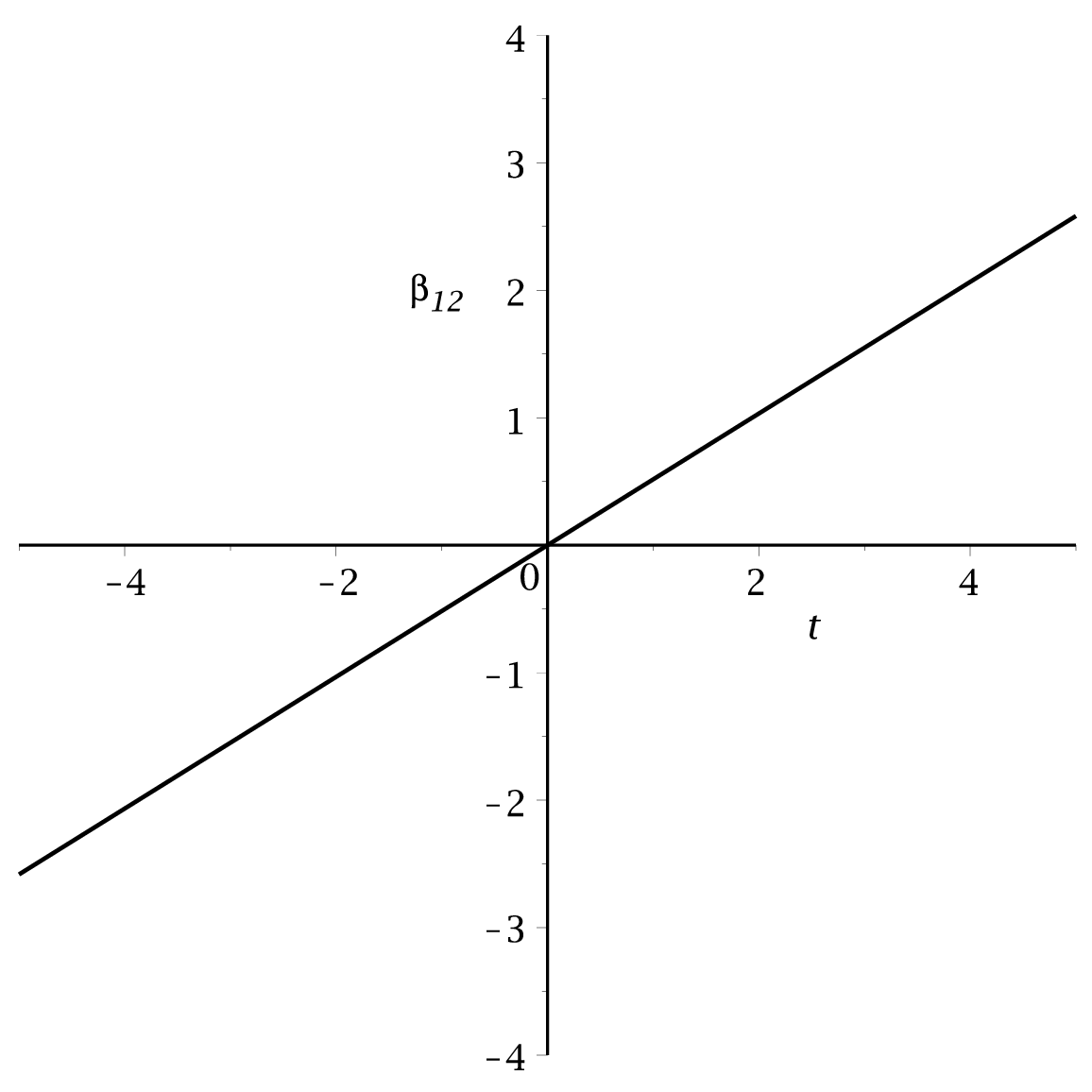}
\captionof{figure}{$p=1$}
\label{fig:gmCH-p=1-position}
\end{subfigure}%
\begin{subfigure}[t]{.45\textwidth}
\includegraphics[width=\textwidth]{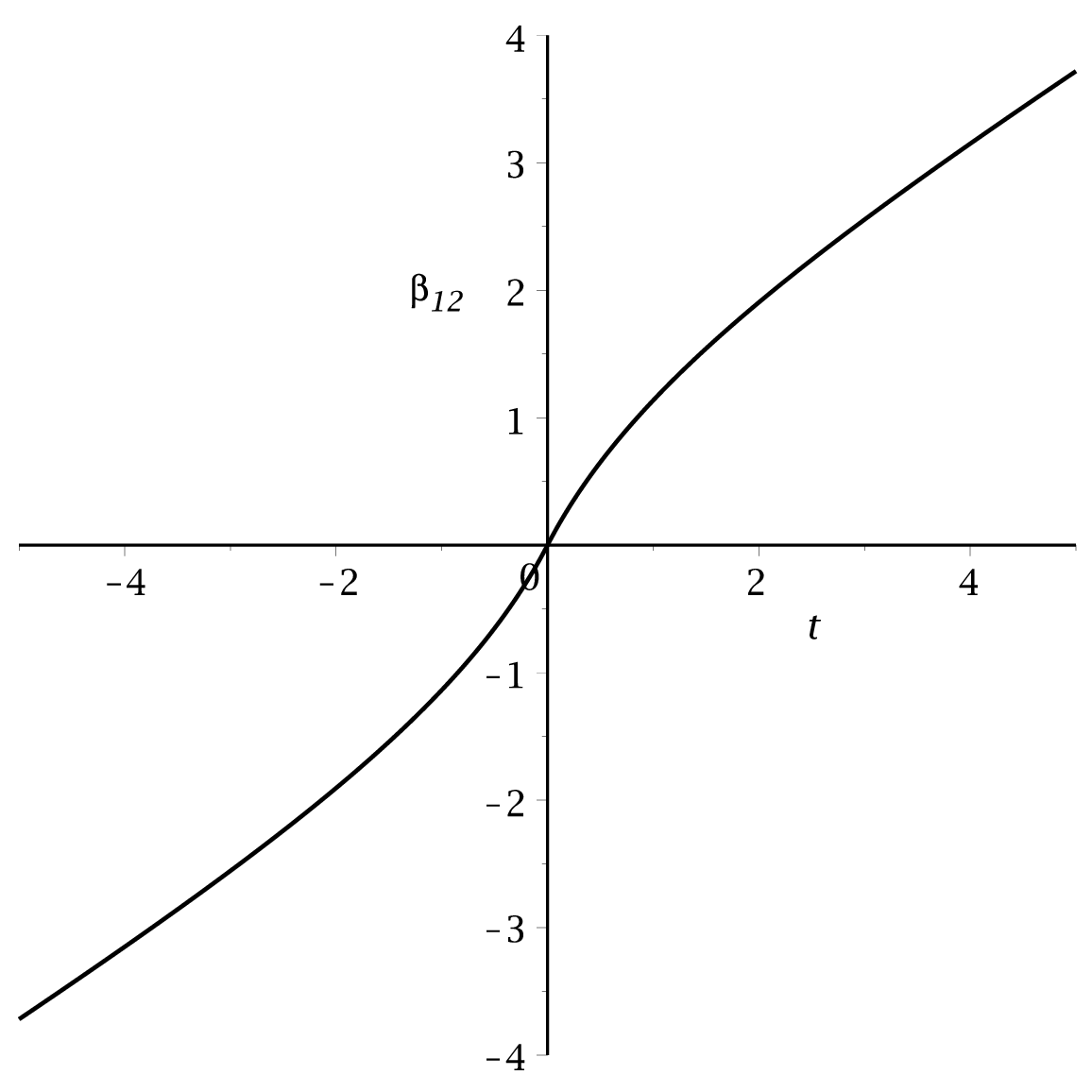}
\captionof{figure}{$p=2$ when $\sigma>0$}
\label{fig:gmCH-p=2-sigmaispos-position}
\end{subfigure}
\caption{Relative position for $p=1,2$ gmCH 2-peakon interaction}
%\label{fig:gmCH-p=1,2}
\end{figure}

When $\sigma\leq -1$, 
there is a turning point given by $|\beta_{1,2}| = \ln|\sigma|$. 
This turning point acts as a critical value for the separation 
between a peakon and an anti-peakon. 
If the initial separation $|\beta_{1,2}(0)|$ is less than $\ln|\sigma|$, 
then the peakon and the anti-peakon form a bound pair in the asymptotic past, 
with their separation asymptotically given by $\ln|\sigma|$. 
In a finite time, the pair undergoes a collapse such that the separation goes to zero
in a collision, 
and then the pair re-forms in the asymptotic future. 
But if the initial separation $|\beta_{1,2}(0)|$ is greater than $\ln|\sigma|$, 
then the peakon and the anti-peakon form a bound pair 
only in the asymptotic past or the asymptotic future,
and this pair breaks apart in the other asymptotic direction,
with the separation going to infinity. 
In the special case when $|\beta_{1,2}(0)|$ is equal to $\ln|\sigma|$, 
the peakon and the anti-peakon form a bound pair with a constant separation given by 
the value $\ln|\sigma|$. 
See Figure~\ref{fig:gmCH-p=2-sigmaisneg}.

\begin{figure}[H]
\centering
\begin{subfigure}[t]{.45\textwidth}
\includegraphics[width=\textwidth]{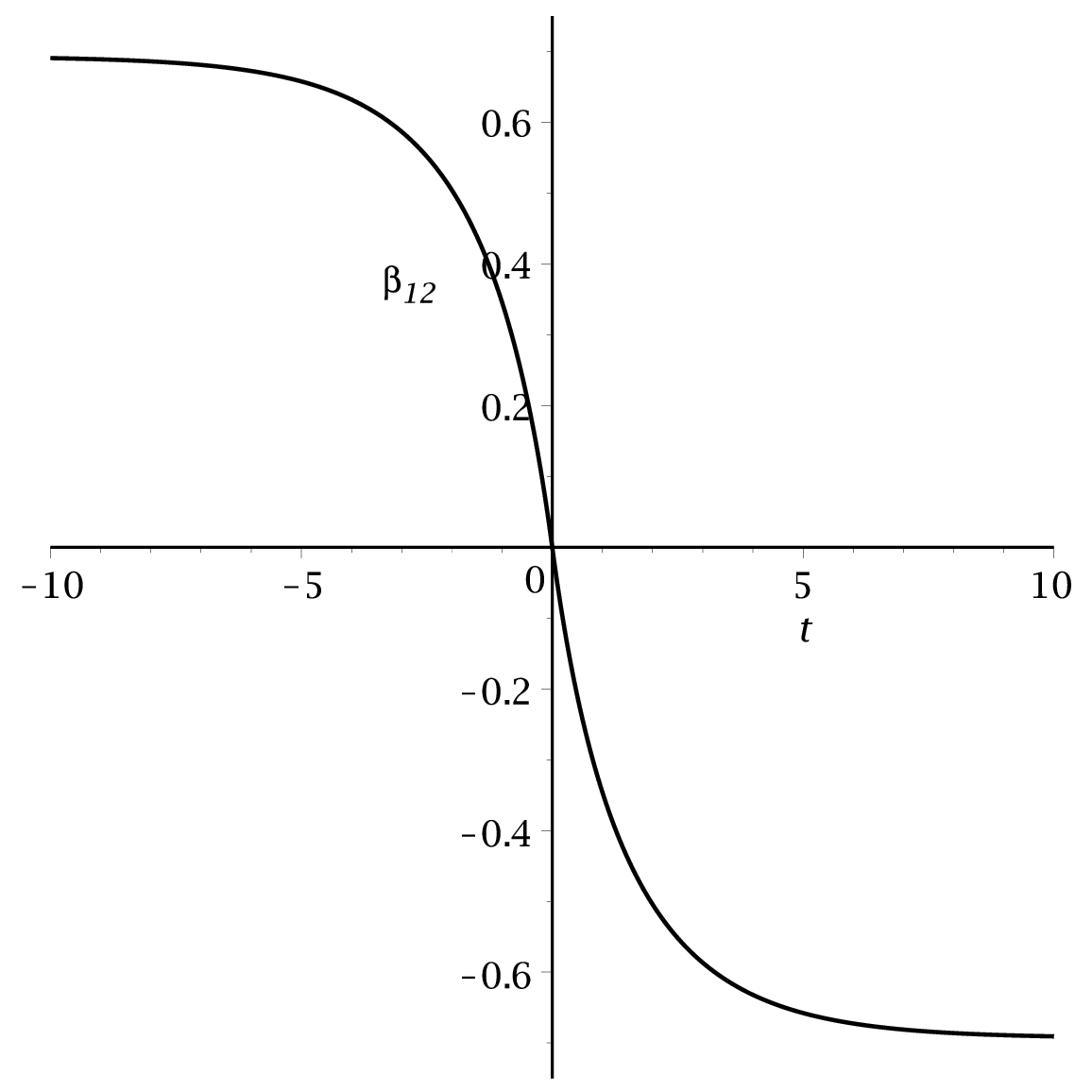}
\captionof{figure}{$|\beta_{1,2}(0)| <\ln|\sigma|$}
%\label{fig:gmCH-p=2-sigmaisneg-position}
\end{subfigure}%
\begin{subfigure}[t]{.45\textwidth}
\includegraphics[width=\textwidth]{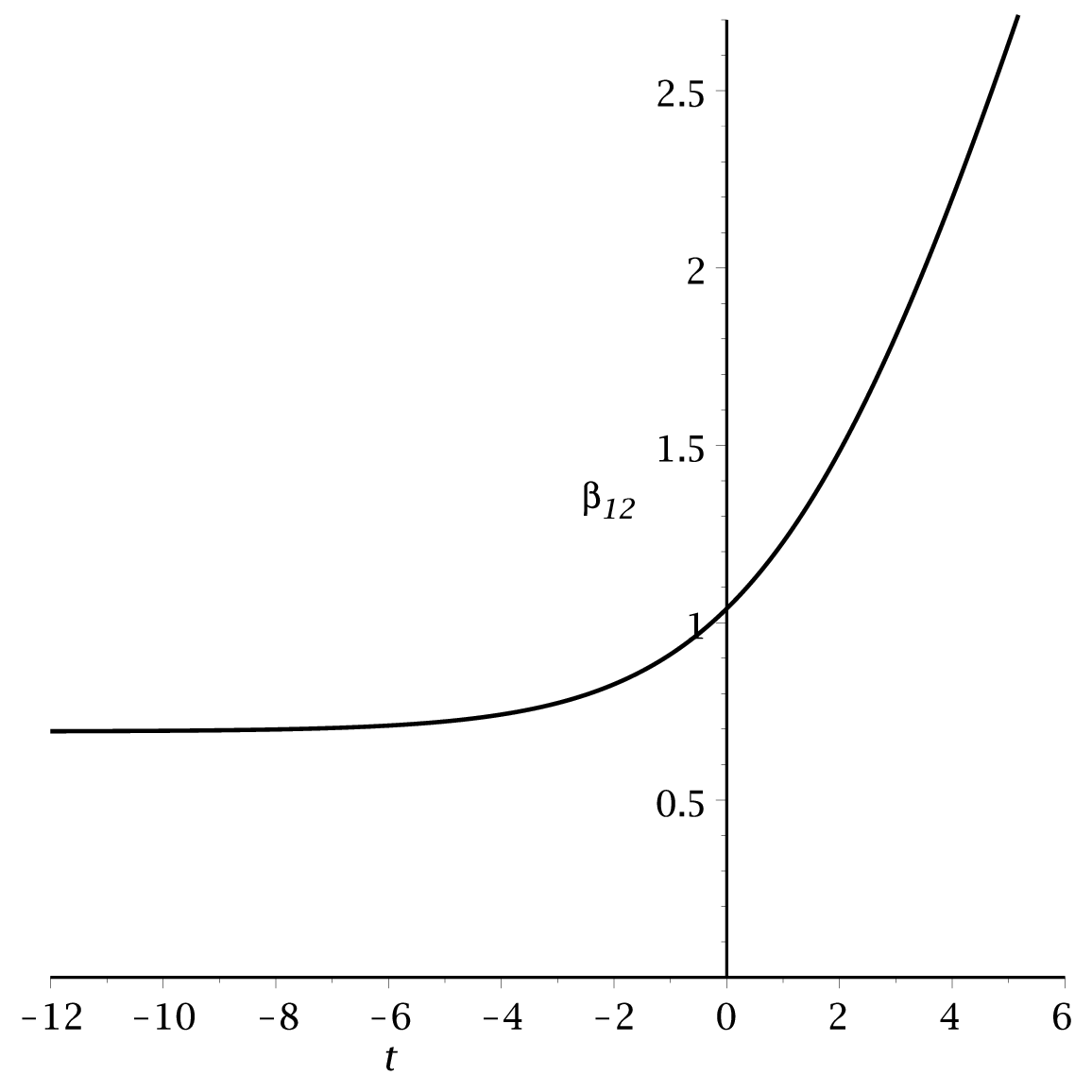}
\captionof{figure}{$|\beta_{1,2}(0)| >\ln|\sigma|$}
%\label{fig:gmCH-p=2-sigmaisneg-position}
\end{subfigure}
\caption{Relative position for $p=2$ gmCH 2-peakon interaction when $\sigma<-1$}
\label{fig:gmCH-p=2-sigmaisneg}
\end{figure}

The formation of a bound pair is a qualitatively novel and striking behaviour 
that has not been seen in any 2-peakon weak solutions of other peakon equations. 
Further analysis of this behaviour, 
including explicit solution expressions for $\beta_{1,2}$ and $\alpha_{1,2}$, 
will be given elsewhere.

We remark that bound pairs have also been observed recently \cite{ChaSzm}
for conservative 2-peakon solutions of the mCH equation.
However, conservative peakons are not weak solutions and instead arise from a different kind of regularization of the mCH equation for distributional solutions. 
Moreover, the dynamics of conservation single peakons of the mCH equation is trivial.
More discussion of these differences can be found in Ref.~\cite{AncKra}.

\section{Conclusions}\label{sec:remarks}

In the $fg$-family \eqref{fg-fam} of nonlinear dispersive wave equations, 
involving two arbitrary functions $f(u,u_x)$ and $g(u,u_x)$,
we have shown that $N$-peakon weak solutions \eqref{multi-peakon}
exist for arbitrary $N\geq 1$.
Neither an integrability structure nor a Hamiltonian structure has been used 
to derive these multi-peakon solutions,
and very likely most of the equations in this family do not possess either type of structure.

We have obtained an explicit dynamical system of ODEs \eqref{ampl_eqn}--\eqref{UV}
for the amplitudes and positions of the $N$ individual peaked waves 
in a general multi-peakon solution \eqref{multi-peakon}. 
For the case $N=1$, in contrast, 
we show that a symmetry condition \eqref{peakon_cond_F}--\eqref{peakon_cond_G}
on $f(u,u_x)$ and $g(u,u_x)$ is necessary for the solution to describe
a single peakon travelling wave \eqref{singlepeakon}.
This condition holds whenever $f(u,u_x)$ is odd in $u_x$ and $g(u,u_x)$ is even in $u_x$. 
Most interestingly, when the condition is not satisfied,
we have shown that the $N=1$ solution instead describes
a generalized dynamical peakon whose amplitude and speed can be time dependent.
Further exploration of dynamical peakons is given in Ref.~\cite{AncRec2018b}. 

As interesting examples of our general results, 
a subfamily of peakon equations that possesses 
the Hamiltonian structure \eqref{Hamil_struc} 
shared by both the CH and FORQ/mCH equations
has been obtained.
We have shown that this Hamiltonian subfamily contains 
a one-parameter nonlinear generalization of the CH equation
and a one-parameter nonlinear generalization of the FORQ/mCH equation, 
as well as a one-parameter multi-peakon equation \eqref{CH-FORQ-fam} that unifies these two generalizations. 
We have derived the single peakon travelling-wave solutions $u=a\exp(-|x-ct|)$ 
for these equations 
and discussed the relation between the properties of 
peakons with $a>0$ and anti-peakons with $a<0$. 

The generalized CH and FORQ/mCH equations involve an arbitrary nonlinearity power $p\geq 1$, 
where the ordinary CH and FORQ/mCH equations correspond to the case $p=1$.  
For both equations, 
we have investigated the effect of higher nonlinearity on (anti-) peakon interactions 
by studying the behaviour of $2$-peakon weak solutions in case $p=2$. 
Qualitatively new behaviours are shown to occur in the interaction between 
a peakon and an anti-peakon. 
Specifically, 
for the $p=2$ generalized mCH equation \eqref{gmCH}, 
the peakon and anti-peakon can form a bound pair 
which has a maximum finite separation in the asymptotic past and future
and which undergoes a collapse at a finite time.
(Stable bound pairs have been seen recently \cite{ChaSzm} for the mCH equation,
but these pairs comprise conservative peakons
which are not weak solutions
and which have trivial dynamics as single peakons \cite{AncKra}.)
In contrast, for the $p=2$ generalized CH equation \eqref{gCH}, 
the peakon and anti-peakon can exhibit a finite time blow-up in amplitude, 
before and after they undergo a collision,
where their separation increases to a finite maximum 
and then decreases to a limiting non-zero value when the blow-up occurs.

The novel behaviours of interactions of peakon and anti-peakon weak solutions
studied here indicate that peakons can exhibit a rich variety of dynamics 
for different multi-peakon equations in the general $fg$-family \eqref{fg-fam},
and that the form of the nonlinearity in these equations has a large impact on 
how peakons can interact.

A study of conservation laws (energy, momentum, $H^1$-norm, etc.)
for the $fg$-family \eqref{fg-fam} has been carried out in recent work \cite{AncRec2018a}. 
There are several interesting directions for future work:
\newline
$\bullet$ 
seek subfamilies of the $fg$-family of multi-peakon equations 
having other types of Hamiltonian structure; 
\newline
$\bullet$ 
explore the conditions under which a Hamiltonian structure will be inherited by 
the dynamical system of ODEs for multi-peakons;
\newline
$\bullet$ 
study the interactions of multi-peakons for high-degree nonlinearities; 
\newline
$\bullet$
understand the difference between peakon weak solutions and
conservative peakon solutions for cubic and higher-degree nonlinearities; 
\newline
$\bullet$ 
study local well-posedness of the $fg$-family of multi-peakon equations;
\newline
$\bullet$ 
for the Cauchy problem, investigate global existence of solutions, 
wave breaking mechanisms, and blow-up times;
\newline
$\bullet$ 
look for new integrable equations in the $fg$-family.

\appendix
\section{Tools from variational calculus}

Three useful tools are the Frechet derivative, the Euler operator, 
and the Helmholtz operator,
which are part of the variational bi-complex \cite{Olv} for differential forms 
in the jet space $J=(x,u,u_x,u_{xx},\ldots)$. 
Differential functions refer to functions on $J$ depending on only finitely many $x$-derivatives of $u$. 

The Frechet derivative is defined by 
\begin{equation}\label{frechet_op}
\delta_P = P\partial_{u} +(D_x P)\partial_{u_x} + (D_x^2P) \partial_{u_{xx}} + \cdots
\end{equation}
where $P$ is an arbitrary differential function. 
Integration by parts on this operator yields 
the adjoint Frechet derivative 
\begin{equation}\label{adjfrechet_op}
\delta_P^* = P\partial_{u} -D_x(P\partial_{u_x}) + D_x^2(P\partial_{u_{xx}}) + \cdots . 
\end{equation}
The Euler operator is given by 
\begin{equation}\label{euler_op}
\E_u = \partial_{u} -D_x\partial_{u_x} + D_x^2 \partial_{u_{xx}} + \cdots . 
\end{equation}
It has the property that it annihilates a differential function $P$ 
iff $P=D_x Q$ is a total $x$-derivative for some differential function $Q$. 
Similarly, 
the Helmholtz operator 
\begin{equation}\label{helmholtz_op}
\delta_P - \delta_P^* = P(\partial_{u} - \E_u) + (D_x P)(\partial_{u_x} +\E_u^{(1)}) +  (D_x^2 P)(\partial_{u_{xx}} -\E_u^{(2)}) +\cdots
\end{equation}
has the property that it annihilates a differential function $R$
with $P$ being an arbitrary differential function,  
iff $R= \E_u(Q)$ is an Euler-Lagrange expression for some differential function $Q$. 
Here 
\begin{equation}\label{higher_euler_op}
\E_u^{(k)} = \partial_{u} -\tbinom{k+1}{1}D_x\partial_{u_x} + \tbinom{k+2}{2}D_x^2 \partial_{u_{xx}} + \cdots,
\quad
k=1,2,\ldots
\end{equation}
denote the higher Euler operators. 

A useful identity relating the Frechet derivative and the Euler operator is given by 
\begin{equation}\label{frechet_euler_ident}
\delta_R P = R\,\E_u(P) + D_x\Theta,
\quad
\Theta = R\,\E_u^{(1)}(P) +  (D_xR)\,\E_u^{(2)}(P) +\cdots .
\end{equation}

\subsection{Proof of Proposition~\ref{prop:fg-Hamil}}

Using the Euler operator and the Helmholtz operator, 
it is straightforward to see that a Hamiltonian structure of the form \eqref{Hamil_struc} 
exists for the $fg$-family \eqref{fg-fam} of peakon equations 
iff the functions $f(u,u_x)$ and $g(u,u_x)$ 
satisfy the conditions
\begin{equation}\label{euler_eqn}
\E_u( f(u,u_x)(u-u_{xx}) )= 0
\end{equation}
and
\begin{subequations}\label{helmholtz_eqns}
\begin{gather}
\partial_{u}(g(u,u_x)(u-u_{xx}) +A) - \E_{u}(g(u,u_x)(u-u_{xx}) +A)  =0, 
\label{helmholtz_eqn1}
\\
\partial_{u_x}(g(u,u_x)(u-u_{xx}) +A)  +\E_{u_{x}}^{(1)}(g(u,u_x)(u-u_{xx}) +A) =0, 
\label{helmholtz_eqn2}
\\
\partial_{u_{xx}}(g(u,u_x)(u-u_{xx}) +A) -\E_{u_{xx}}^{(2)}(g(u,u_x)(u-u_{xx}) +A) =0
\label{helmholtz_eqn3}
\end{gather}
\end{subequations}
where 
\begin{equation}\label{fg_hamil_grad}
A= D_x{}^{-1}( f(u,u_x)(u-u_{xx}) ) .
\end{equation}

Both of the conditions \eqref{euler_eqn} and \eqref{helmholtz_eqns}
must hold identically in jet space, 
and hence each one will split with respect to all $x$-derivatives of $u$ 
that do not appear in the functions $f(u,u_x)$ and $g(u,u_x)$. 

From condition \eqref{euler_eqn}, 
we get two equations
\begin{equation}
(u_x(u_xf_u+uf_{u_x})-uf)_u=0,
\quad
(u_x(u_xf_u+uf_{u_x})-uf)_{u_x}=0 ,
\end{equation}
which comprise an overdetermined linear system for $f(u,u_x)$.
Integration with respect to $u$ and $u_x$ yields 
$uf - u_x(u_xf_u+uf_{u_x}) =f_0$, where $f_0$ is an arbitrary constant. 
This is a linear first-order PDE whose general solution is expression \eqref{f_sol}. 
We substitute this expression into equation \eqref{fg_hamil_grad}
and note 
\begin{equation}
u_x m f_1(u^2-u_x^2) = \tfrac{1}{2} D_x F_1(u^2-u_x^2), 
\quad
\frac{um}{u^2-u_x^2} = D_x \Big( \tfrac{1}{2} \ln\Big(\frac{u-u_x}{u+u_x}\Big) + x \Big)
\end{equation}
through using the relation $u_xm = \tfrac{1}{2}D_x(u^2-u_x^2)$. 
Hence, we obtain 
\begin{equation}\label{A_sol}
A = \tfrac{1}{2} F_1(u^2-u_x^2) + \tfrac{1}{2} f_0 \ln\Big(\frac{u-u_x}{u+u_x}\Big) + f_0 x . 
\end{equation}
Then we substitute this expression into the second condition \eqref{helmholtz_eqns},
which reduces to the equations
$0= D_x((u-u_{xx})g_{u_x} +A_{u_x}) + D_x^2 g$
and 
$0= (u-u_{xx})g_{u_x} +A_{u_x})  -2D_x g$. 
Clearly, the first equation is a differential consequence of the second equation. 
Expanding the second equation, 
we get 
\begin{equation}
u_xg_u+ug_{u_x} = u_x F_1' + f_0\frac{u}{u^2-u_x^2} 
\end{equation}
which is a linear first-order PDE for $g(u,u_x)$. 
Its general solution is expression \eqref{g_sol}. 

Hence, the first part of Proposition~\ref{prop:fg-Hamil} has been established. 
To prove the second part, 
we need to invert the relation $gm+A= \E_u(B)$ to find the Hamiltonian density $B$, 
where 
\begin{equation}\label{euler_B}
gm+A = 
\tfrac{1}{2} F_1(u^2-u_x^2) + (u f_1(u^2-u_x^2) +g_1(u^2-u_x^2))m
+ f_0\Big( \frac{u_xm}{u^2-u_x^2} + \tfrac{1}{2} \ln\Big(\frac{u-u_x}{u+u_x}\Big) + x \Big)
\end{equation}
is given by expressions \eqref{A_sol} and \eqref{g_sol}. 
The form of the Hamiltonian densities \eqref{CH_hamil} and \eqref{FORQ_hamil}
for the CH and FORQ/mCH equations 
suggests seeking $B= m C(u,u_x)$. 

We expand $\E_u(B)= \E_u((u-u_{xx})C(u,u_x))$, 
equate it to expression \eqref{euler_B}, 
and split with respect to $m$. 
This yields a system of two linear equations. 
By combining these equations, we get
\begin{align}
( uC-u_x(u_xC_u-uC_{u_x}) )_u
& = -uu_x f_1 -u_x g_1 - f_0 \frac{u_x^2}{u^2-u_x^2} , 
\\
( uC-u_x(u_xC_u-uC_{u_x}) )_{u_x}
& = 
u^2 f_1 + \tfrac{1}{2} F_1 +u g_1 
+ f_0 \frac{uu_x}{u^2-u_x^2} 
+\tfrac{1}{2}f_0 \ln\Big(\frac{u-u_x}{u+u_x}\Big) +f_0 x . 
\end{align}
This system is straightforward to integrate. 
Its general solution for $C$ is given by 
\begin{equation}
C = u_x C_1(u^2-u_x^2) + \frac{C_0 u}{u^2-u_x^2} 
+ \tfrac{1}{2} F_1(u^2-u_x^2) + \tfrac{1}{2}\frac{uG_1(u^2-u_x^2)}{u^2-u_x^2}
+ \tfrac{1}{2}f_0 \ln\Big(\frac{u-u_x}{u+u_x}\Big) +f_0 x 
\end{equation}
where $C_1$ is an arbitrary function of $u^2-u_x^2$, and $C_0$ is an arbitrary constant. 
Then $B=mC$ consists of the Hamiltonian density \eqref{fg_Hamil} 
plus a total $x$-derivative 
$D_x\Big( \tfrac{1}{2} \tilde C_1(u^2-u_x^2) + \tfrac{1}{2} C_0 \ln\Big(\dfrac{u-u_x}{u+u_x}\Big) + C_0 x \Big)$, 
with $\tilde C_1{}' = C_1$. 

This completes the proof of Proposition~\ref{prop:fg-Hamil}.

\end{document}